\newif\ifproc
\setlist[enumerate]{label={\arabic*)},font={\bfseries}}
	\setlist[enumerate,1]{label=(\alph*), font={\bfseries}}
	\setlist[enumerate,1]{label=\arabic*.}
\newcommand{\Oh}{\mathcal{O}}
\newcommand{\OhOp}[1]{\Oh\mathopen{}\mathclose\bgroup\left( #1 \aftergroup\egroup\right)}
\DeclareMathOperator{\poly}{{\rm poly}}
\newcommand{\FPT}{{\sf FPT}\xspace}
\newcommand{\NP}{{\sf NP}\xspace}
\newcommand{\NPh}{\hbox{{\sf NP}-hard}\xspace}
\newcommand{\XP}{{\sf XP}\xspace}
\newcommand{\Wh}[1]{$\mathsf{W[#1]}$-hard\xspace}
\newcommand{\Whness}[1]{$\mathsf{W[#1]}$-hardness\xspace}
\newcommand{\prob}[3]{
\begin{center}
\begin{tabularx}{\textwidth}{lX}
	\multicolumn{2}{l}{#1}\\
	{\bf Input:}&{#2}\\
	{\bf Find:}&{#3}
\end{tabularx}
\end{center}
}
\newcommand{\mytodo}[2]{\todo[size=\tiny, color=#1!50!white]{#2}\xspace}
\newcommand{\myinlinetodo}[2]{\todo[size=\small, color=#1!50!white, inline]{#2}\xspace}
\newcommand{\jzcom}[1]{\mytodo{green}{#1}}
\newcommand{\mkcom}[1]{\mytodo{blue}{#1}}
\newcommand{\mkinline}[1]{\myinlinetodo{blue}{#1}}
\newcommand{\conv}{\textrm{conv}}
\newcommand{\CC}{\mathcal{C}}
\DeclarePairedDelimiter\ceil{\lceil}{\rceil}
\DeclarePairedDelimiter\floor{\lfloor}{\rfloor}
\DeclareMathOperator{\rank}{rank}
\newcommand{\df}{:=}
\def\ve#1{\mathchoice{\mbox{\boldmath$\displaystyle\bf#1$}}
{\mbox{\boldmath$\textstyle\bf#1$}}
{\mbox{\boldmath$\scriptstyle\bf#1$}}
{\mbox{\boldmath$\scriptscriptstyle\bf#1$}}}
\newcommand\veb{{\ve b}}
\newcommand\vecc{{\ve c}}
\newcommand\vel{{\ve l}}
\newcommand\ven{{\ve n}}
\newcommand\vem{{\ve m}}
\newcommand\vep{{\ve p}}
\newcommand\ves{{\ve s}}
\newcommand\veu{{\ve u}}
\newcommand\vew{{\ve w}}
\newcommand\vex{{\ve x}}
\newcommand\vey{{\ve y}}
\newcommand\vezero{{\ve 0}}
\def\R{\mathbb{R}}
\def\Z{\mathbb{Z}}
\def\N{\mathbb{N}}
\def\Q{\mathbb{Q}}
\newcommand{\lv}[1]{}
\newcommand{\appendixText}{}
\newcommand{\toappendix}[1]{\gappto{\appendixText}{{#1}}}
\newcommand{\appmark}{$\star$}
\theoremstyle{theorem}
\newtheorem*{rep@theorem}{\rep@title}
\newcommand{\newreptheorem}[2]{%
\newenvironment{rep#1}[1]{%
 \def\rep@title{#2 \ref{##1}}%
 \begin{rep@theorem}}%
 {\end{rep@theorem}}}
\newcommand{\LL}{\mathcal{L}}
\newcommand{\binpacking}{\textsc{Bin Packing}\xspace}
\newcommand{\balancedbinpacking}{\textsc{Balanced Bin Packing}\xspace}
\newcommand{\unarybinpacking}{\textsc{Unary Bin Packing}\xspace}
\newcommand{\unarybalancedbinpacking}{\textsc{Unary Balanced Bin Packing}\xspace}
\title{Complexity of Scheduling Few Types of Jobs on Related and Unrelated Machines} 
\titlerunning{}
\author{Martin Koutecký}{Computer Science Institute, Charles University, Czech Republic}{koutecky@iuuk.mff.cuni.cz}{https://orcid.org/0000-0002-7846-0053}{}
\author{Johannes Zink}{Institut f\"ur Informatik, Universit\"at W\"urzburg, Germany}{johannes.zink@uni-wuerzburg.de}{https://orcid.org/0000-0002-7398-718X}{}
\authorrunning{M.\ Koutecký and J.\ Zink}
\keywords{Scheduling, cutting stock, hardness, parameterized complexity}
\begin{document}

\maketitle

\begin{abstract}
The task of scheduling jobs to machines while minimizing the total makespan, the sum of weighted completion times, or a norm of the load vector, are among the oldest and most fundamental tasks in combinatorial optimization.
Since all of these problems are in general \NPh, much attention has been given to the regime where there is only a small number $k$ of job types, but possibly the number of jobs $n$ is large; this is the few job types, high-multiplicity regime.
Despite many positive results, the hardness boundary of this regime was not understood until now.

We show that makespan minimization on uniformly related machines ($Q|HM|C_{\max}$) is \NPh already with $6$ job types, and that the related \textsc{Cutting Stock} problem is \NPh already with $8$ item types.
For the more general unrelated machines model ($R|HM|C_{\max}$), we show that if either the largest job size $p_{\max}$, or the number of jobs $n$ are polynomially bounded in the instance size $|I|$, there are algorithms with complexity $|I|^{\poly(k)}$.
Our main result is that this is unlikely to be improved, because $Q||C_{\max}$ is \Wh{1} parameterized by $k$ already when $n$, $p_{\max}$, and the numbers describing the speeds are polynomial in $|I|$; the same holds for $R|HM|C_{\max}$ (without speeds) when the job sizes matrix has rank $2$.
Our positive and negative results also extend to the objectives $\ell_2$-norm minimization of the load vector and, partially, sum of weighted completion times $\sum w_j C_j$.

Along the way, we answer affirmatively the question whether makespan minimization on identical machines ($P||C_{\max}$) is fixed-parameter tractable parameterized by $k$, extending our understanding of this fundamental problem.
Together with our hardness results for $Q||C_{\max}$ this implies that the complexity of $P|HM|C_{\max}$ is the only remaining open case.
\end{abstract}



\section{Introduction}
Makespan minimization is arguably the most natural and most studied scheduling problem: in the parallel machines model, we have $m$ machines, $n$ jobs with sizes $p_1, \dots, p_n$, and the task is to assign them to machines such that the sum of sizes of jobs on any machine is minimized.
Seen differently, this is the (decision version of the) \textsc{Bin Packing} problem: can a set of items be packed into a given number of bins?
\textsc{Bin Packing} is \NPh, so it is natural to ask which restrictions  make it polynomial time solvable.
Say there are only $k$ distinct item sizes $p_1, \dots, p_k$, and so the items are given by a vector of multiplicities $n_1, \dots, n_k$ with $n = \sum_{j=1}^k n_j$; let $p_{\max} = \max_j p_j$.
Goemans and Rothvoss~\cite{GoemansRothvoss2014} showed that \textsc{Bin Packing} can be solved in time $(\log p_{\max})^{f(k)} \poly\log n$ for some function $f$.\footnote{The complexity stated in~\cite{GoemansRothvoss2014} is $(\log \max{C_{\max},n})^{f(k)} \poly \log n$, but a close inspection of their proof reveals that \textbf{a)} the dependence on $n$ is unnecessary, and \textbf{b)} it is possible to use a better bound on the number of vertices of a polytope and obtain the complexity stated here.}
Note that makespan minimization is polynomial when $k$ is fixed by simple dynamic programming; the difficult question is whether it is still polynomial in the \emph{high-multiplicity} setting where jobs are encoded by the multiplicity vector $\ven = (n_1, \dots, n_k)$.
By the equivalence with scheduling, Goemans and Rothvoss showed that high-multiplicity makespan minimization on identical machines is polynomial if the number of job types $k$ is fixed.

Since 2014, considerable attention has been given to studying the complexity of various scheduling problems in the regime with few job types~\cite{JansenKlein2017,KnopKLMO:2019,KnopK:2017,KnopK2020,ChenMYZ:2018,HermelinKPS:2018,HermelinPST:2019,MnichW:2015,Jansen:2017}, and similar techniques have been used to obtain approximation algorithms~\cite{JansenKMR:2018,Levin:2019,JansenLM:2019}.
However, any answer to the following simple and natural question was curiously missing: 
\begin{quote}
\emph{What is the most restricted machine model in which high-multiplicity makespan minimization becomes \NPh, even when the number of job types is fixed?}
\end{quote}
There are three main machine models in scheduling: identical, uniformly related, and unrelated machines.
In the uniformly related machines model, machine~$M_i$ (for $i \in [m]$) additionally has a \emph{speed $s_i$}, and processing a job of size $p_j$ takes time $p_j/s_i$ on such a machine.
In the unrelated machines model, each machine~$M_i$ (for $i \in [m]$) has its own vector of job sizes $\vep^i = (p^i_1, \dots, p^i_k)$, so that $p^i_j$ is the time to process a job of type $j$ on machine~$M_i$.
The makespan minimization problem in the identical, uniformly related, and unrelated machines model is denoted shortly as $P||C_{\max}$, $Q||C_{\max}$, and $R||C_{\max}$~\cite{LawlerEtAl1993}, respectively, with the high-multiplicity variant being $P|HM|C_{\max}$ and analogously for the other models.
Notice that the job sizes matrix $\vep$ of a $Q||C_{\max}$ instance is of rank $1$: the vector $\vep^i$ for machine $M_i$ is simply $\vep' / s_i$ for $\vep' = (p_1, \dots, p_k)$, and $\vep = \vep' \cdot (1/\ves)^{\intercal}$ for the speeds vector $\ves = (s_1, \dots, s_m)$.
Hence, the rank of the job sizes matrix has been studied~\cite{BhaskaraKTW:2013,ChenMYZ:2018,ChenJZ:2018} as a helpful measure of complexity of an $R||C_{\max}$ instance: intuitively, the smaller the rank, the closer is the instance to $Q||C_{\max}$.
We answer the question above:
\begin{theorem}
	\label{thm:QCmaxNPhard}
	$Q|HM|C_{\max}$ is \NPh already for 6 job types.
\end{theorem}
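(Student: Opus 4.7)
The plan is to prove \NPh{}ness via a polynomial-time reduction from a classical (weakly) \NPh{} numeric problem such as \textsc{Subset Sum}, \textsc{Partition}, or (for a strongly \NPh{} variant) \textsc{3-Partition}. Since we are restricted to only $k=6$ distinct job sizes, the per-instance numeric data of the source cannot live in the job-size vector; it must be encoded into the \emph{machine speeds}. This is precisely the feature that separates $Q|HM|C_{\max}$ from $P|HM|C_{\max}$, for which Goemans--Rothvoss give a polynomial-time algorithm for every fixed~$k$, and it is the reason one expects the hardness boundary to appear already in the uniformly related model.

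Under a guessed makespan $T$, a machine of speed $s$ behaves as a bin of capacity $T\cdot s$, so the scheduling question becomes: can $6$ item types with given multiplicities be packed into $m$ bins of given (variable) capacities? The construction would then introduce one ``encoder'' machine per input number $a_i$, whose speed is calibrated so that after a fixed amount of mandatory filler is placed on it the remaining slack equals exactly $a_i$ (or $B-a_i$); a constant number of ``target'' machines whose combined slack equals the required target sum; and six job types grouped by role into (i) a handful of \emph{filler} sizes that consume the bulk of each bin's capacity, (ii) a pair of \emph{decision} sizes whose spread among the encoder machines selects the chosen subset, and (iii) one or two \emph{balancing} sizes whose only feasible distribution forces the selected total to match the target. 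The two-way correspondence is then purely arithmetic: any schedule of makespan at most $T$ forces, machine by machine, the unique filling that encodes a valid source solution, and conversely any source solution lifts to such a schedule.

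The main obstacle is \emph{dimensional}: shaving the number of distinct job sizes down to exactly six. Each time one tries to separate a gadget's function by introducing a fresh size, one risks busting the budget of six. The key design trick, as in related high-multiplicity reductions, is to let the same size play different roles on different classes of machines, relying on the speed-induced inhomogeneity of bin capacities to make one size behave as a filler on one class and as a selector on another. Once a tight six-size design is in place, polynomial size of the reduction and correctness in both directions follow from standard bookkeeping on the makespan of each machine class.
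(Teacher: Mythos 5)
Your high-level strategy is the same as the paper's: reduce from a partition-type problem, let each input number $a_i$ own one machine whose speed is calibrated so that its capacity under makespan $T$ is $T+a_i$, and use a constant number of job types split into bulk ``filler'' types, a near-duplicate pair whose count difference encodes $a_i$, and selector types that assign each machine to a bin. The paper instantiates exactly this: for $j\in[2]$ it uses $\beta_j$ (one huge job per machine, selecting the bin), and $\alpha^1_j,\alpha^0_j$ with $p_{\alpha^1_j}=p_{\alpha^0_j}+1$, so that the number of $\alpha^1_j$ jobs on machine $M_i$ is forced to be $a_i$; this gives $3\cdot 2=6$ types.

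However, what you have written is a plan, not a proof, and the part you defer --- ``once a tight six-size design is in place \dots correctness in both directions follow[s] from standard bookkeeping'' --- is precisely the nontrivial content. Three concrete things are missing. First, the actual sizes, multiplicities and speeds: the paper needs a carefully separated scale hierarchy ($p_{\beta_j}=2kA^3-A^2(k-j)$, $p_{\alpha^\times_j}\approx kA^2+A(k-j)$, unit offsets, $s_i=(T+a_i)/T$ with $T=3kA^3$) so that (i) at most one $\beta$ fits per machine, (ii) exactly $A$ $\alpha$-jobs fit next to it, and (iii) a cascading argument over $j=k,k-1,\dots,1$ forces all $\alpha^\times_j$ jobs onto the machines carrying $\beta_j$. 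Second, the forward forcing of the subset sums is an averaging argument ($\sum_{M_i\in\mathcal M_j}a_i\ge A/k$ for every $j$, summing to $A$, hence equality), which relies on \emph{tightness} of the instance --- every machine must be filled to exactly its capacity --- and on the \emph{balanced} property that every machine receives the same number ($A$) of $\alpha$-jobs. You mention neither tightness nor balance, and without them the ``only feasible distribution forces the selected total'' step does not go through; the paper in fact first proves a separate lemma reducing \textsc{Bin Packing} to tight \textsc{Balanced Bin Packing} to secure these properties. Third, your ``constant number of target machines whose combined slack equals the target sum'' is a different mechanism from the paper's (which has no target machines; the bin structure is enforced entirely by the $\beta$-job multiplicities $n_{\beta_j}=m/k$ and the $\alpha^1_j$ multiplicities $n_{\alpha^1_j}=B$), and as stated it is not specified enough to check. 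So the approach is right, but the theorem is not yet proved.
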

The \textsc{Cutting Stock} problem relates to \textsc{Bin Packing} in the same way as $Q||C_{\max}$ relates to $P||C_{\max}$: instead of having all bins have the same capacity, there are now several bin types with a different capacity and cost, and the task is to pack all items into bins of minimum cost.
\textsc{Cutting Stock} is a famous and fundamental problem whose study dates back to the ground-breaking work of Gilmore and Gomory~\cite{GilmoreGomory1961}.
It is thus surprising that the natural question whether \textsc{Cutting Stock} with a fixed number of item types is polynomial or \NPh has not been answered until now:
\begin{theorem} \label{thm:cuttingstock}
\textsc{Cutting Stock} is \NPh already with $8$ item types.
\end{theorem}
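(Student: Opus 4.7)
The plan is to reduce from $Q|HM|C_{\max}$ with $6$ job types, which is \NPh by Theorem~\ref{thm:QCmaxNPhard}, to \textsc{Cutting Stock}, spending the two additional item types on encoding the machine structure. Given a $Q|HM|C_{\max}$ instance with target makespan $T$, job types of sizes $p_1,\dots,p_6$ with multiplicities $n_1,\dots,n_6$, and machines of distinct speeds $s_1,\dots,s_\ell$ with multiplicities $m_1,\dots,m_\ell$, I would introduce one bin type per speed, of capacity $T s_i$ and cost $c_i$, keep the six job types as item types with their original sizes and multiplicities, and add two \emph{dummy} item types. The intended bijection is that a feasible schedule of makespan at most $T$ identifies each machine of speed $s_i$ with one bin of capacity $T s_i$, the jobs on that machine being the items packed into that bin.

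The first key step is to calibrate the dummies and costs so that the cost-minimal \textsc{Cutting Stock} solutions are exactly those using $m_i$ bins of type $i$ for each $i$. A natural choice is a single ``machine-slot'' dummy of size just above $\tfrac{1}{2} T s_\ell$ (so that at most one copy fits in any bin) with multiplicity $m=\sum_i m_i$, which forces precisely $m$ bins to be opened and leaves $T s_i$ usable room in a bin of type $i$; the second dummy plays the role of filler, padding total item size up to $\sum_i m_i T s_i$ so that the instance is ``tight''. Setting the bin costs proportional to their capacities, e.g.\ $c_i = T s_i$, makes the cost of any packing equal to total item size plus total wasted capacity, so the cost target $\sum_i m_i T s_i$ is achievable if and only if there is a zero-waste packing---which, thanks to the machine-slot dummy, can only occur with exactly the profile $(m_1,\dots,m_\ell)$ and a tight distribution of job items inside the bins, i.e.\ a makespan-$T$ schedule.

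The second step is the two-sided correctness argument. Forward: a makespan-$T$ schedule yields a packing of the target cost by identifying each machine with one bin and placing its assigned jobs plus one machine-slot dummy inside, with filler distributed to absorb leftover space. Backward: any packing of cost at most the target must be zero-waste, hence (by the dummy calibration) must use exactly $m_i$ bins of type $i$, and stripping the dummies yields a valid schedule.

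The main obstacle I anticipate is making only two dummies suffice for potentially many distinct speeds $\ell$. If the instance produced by Theorem~\ref{thm:QCmaxNPhard} uses only $\Oh(1)$ distinct speeds---which is typical of such \NPh constructions---the construction is essentially immediate; in general, one must carefully argue that the single machine-slot dummy together with the capacity-proportional cost structure rules out alternative bin profiles independently of $\ell$. A secondary pitfall is ensuring that the filler dummy does not accidentally enable a zero-waste packing under a different bin profile, which will likely require choosing its size to be number-theoretically generic (for instance, so that the multiset $\{T s_i\}$ admits a unique non-negative integer combination summing to $\sum_i m_i T s_i$ within the available multiplicities).
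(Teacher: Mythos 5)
Your high-level plan --- reduce from $Q|HM|C_{\max}$ with $6$ job types, keep the job types as item types, add two new item types, and use one bin type per machine --- is exactly the paper's route (its Lemma~\ref{lem:cuttingstock} produces $k+2$ item types and $m$ bin types). But the step you yourself flag as the main obstacle is where the proposal genuinely breaks down, and neither of your proposed fixes repairs it. With costs proportional to capacities and total item size equal to the budget, a cost-optimal solution is indeed a zero-waste packing into \emph{any} multiset of bins whose capacities sum to the budget; your half-capacity dummy forces at least $m$ bins and the budget forces at most $m$, but nothing forces the profile $(m_1,\dots,m_\ell)$. In the hard instance behind Theorem~\ref{thm:QCmaxNPhard} the capacities are $T+a_i$, one per item of the underlying \balancedbinpacking instance, so there are up to $m$ distinct speeds (not $\Oh(1)$), and any $x\in\N^\ell$ with $\sum_i x_i=m$ and $\sum_i x_i a_i=A$ yields an admissible bin multiset of the target cost; a zero-waste packing into such a multiset corresponds to a ``solution'' of the source instance that uses some items several times and others not at all, so the backward direction fails. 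Choosing the filler size ``number-theoretically generically'' cannot help, because the filler size does not change which bin profiles have total capacity equal to the budget --- what would have to be made generic are the capacities $Ts_i$ themselves, and those are dictated by the scheduling instance. (A smaller issue: a dummy of size about $T/2$ cannot coexist with a $\beta$-type job of size about $2T/3$ in a bin of capacity $T+a_i$, so even the forward direction needs the capacities enlarged to make room for the dummy.)

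The paper closes exactly this gap with a signature gadget that your two dummies would need to emulate: bin type $i$ gets capacity and cost $(1,2^{i-1},T+a_i)$, read as a single number in a sufficiently large base so the three blocks cannot interact; one new item type of size $(1,0,0)$ and multiplicity $m$ forces at least $m$ bins, and a second of size $(0,1,0)$ and multiplicity $2^m-1$ forces the middle digit block to be filled exactly. Since $2^m-1$ has binary digit sum $m$, the only way to write it as a sum of exactly $m$ powers of two from $\{2^0,\dots,2^{m-1}\}$ is to take each once, which pins the solution to one bin of each type and restores the bijection with makespan-$T$ schedules. Without some device of this kind, your reduction is not correct as stated.
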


\vspace{-1em}
\subparagraph*{Parameterized Complexity.} A more precise complexity landscape can be obtained by taking the perspective of parameterized complexity: we say that a problem is \emph{fixed-parameter tractable} (\FPT, or \emph{in \FPT}, for short) parameterized by a \emph{parameter} $k$ if there is an algorithm solving any instance $I$ in time $f(k)\poly(|I|)$, for some computable function $f$.
On the other hand, showing that a problem is \emph{\Wh{1}} means it is unlikely to have such an algorithm, and the best one might hope for is a complexity of the form $|I|^{f(k)}$; we then say that a problem is \emph{in \XP} (or that it \emph{has an \XP algorithm}); see the textbook~\cite{CyganFKLMPPS15}.

The hard instance $I$ from Theorem~\ref{thm:QCmaxNPhard} is encoded by a job sizes matrix $\vep$, a job multiplicities vector $\ven$, and a machine speeds vector $\ves$ which all contain long numbers, i.e., entries with encoding length $\Omega(|I|)$.
What happens when some of $\vep$, $\ven$, and $\ves$ are restricted to numbers bounded by $\poly(|I|)$, or, equivalently, if they are encoded in unary?

A note of caution: since we allow speeds to be rational, and the encoding length of a fraction $p/q$ is $\ceil{\log_2 p} + \ceil{\log_2 q}$,  a $Q||C_{\max}$ instance with $\ves$ of polynomial length might translate to an $R||C_{\max}$ instance with $\vep$ of exponential length.
This is because for $\vep$ to be integer, one needs to scale it up by the least common multiple of the denominators in $\ves$, which may be exponential in $m$.
Thus, with respect to the magnitude of $\ven$ and $\vep$, $R|HM|C_{\max}$ can \emph{not} be treated as a generalization of $Q|HM|C_{\max}$.
This is why in the following we deal with both problems and not just the seemingly more or less general one.
For $Q|HM|C_{\max}$, we denote by $p_{\max}$ the largest job size \emph{before scaling}, i.e., if $\vep = \vep' \cdot (1/\ves)^{\intercal}$, then $p_{\max} = \|\vep'\|_\infty$.

Having $\ven$ polynomially bounded is equivalent to giving each job explicitly; note that in this setting $R|HM|C_{\max}$ strictly generalizes $Q|HM|C_{\max}$.
A simple DP handles this case:
\begin{theorem} \label{thm:dp}
	$\{R,Q\}|HM|C_{\max}$ and $\{R,Q\}||C_{\max}$ can be solved in time $m \cdot n^{\Oh(k)}$, hence $\{R,Q\}||C_{\max}$ is in \XP parameterized by $k$.
\end{theorem}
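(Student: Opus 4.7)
The plan is a standard machine-by-machine dynamic program whose state records how many jobs of each type have already been placed. Concretely, index the machines $1,\dots,m$ in arbitrary order and, for each prefix $i\in\{0,\dots,m\}$ and each multiplicity vector $\vem=(m_1,\dots,m_k)$ with $\vezero\le\vem\le\ven$, I define $DP[i,\vem]$ to be the minimum, over all schedules of exactly $\vem$ jobs on the first $i$ machines, of the largest load appearing on those $i$ machines. Here $\ven=(n_1,\dots,n_k)$ is the given multiplicity vector; in the non-HM setting $\ven$ is obtained from the input list of jobs by bucketing them into the (at most) $k$ distinct processing-time columns. For $Q|HM|C_{\max}$ I implicitly use the per-machine processing times $p^i_j=p_j/s_i$, working with rationals (or after scaling by $\prod_i s_i$); this extra bookkeeping does not affect the running time bound.

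The base case sets $DP[0,\vezero]=0$ and $DP[0,\vem]=\infty$ for $\vem\neq\vezero$. The transition enumerates, for each state $(i,\vem)$, every vector $\vezero\le\veq\le\vem$ representing the jobs placed on machine $i$, and takes
\[
  DP[i,\vem] \;=\; \min_{\vezero\le\veq\le\vem}\;\max\!\left(DP[i-1,\vem-\veq],\;\sum_{j=1}^{k} q_j\, p^i_j\right).
\]
Correctness is immediate from the fact that in both $R||C_{\max}$ and $Q||C_{\max}$ the load of a machine depends only on how many jobs of each type are assigned to it, so the count vector $\vem$ is a sufficient statistic. The answer to the instance is $DP[m,\ven]$; to reconstruct the schedule itself I can backtrack through the table in the usual way.

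For the running time, the number of states is at most $(m+1)\prod_{j=1}^{k}(n_j+1)=m\cdot n^{\Oh(k)}$, and at each state the transition considers at most $\prod_{j=1}^{k}(n_j+1)=n^{\Oh(k)}$ candidate vectors $\veq$. Multiplying gives the claimed $m\cdot n^{\Oh(k)}$ bound. In the non-HM variants $\{R,Q\}||C_{\max}$, each job is listed explicitly, so $n\le|I|$ and the bound becomes $|I|^{\Oh(k)}$, placing these problems in \XP parameterized by $k$.

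There is no real obstacle here; the only subtlety worth flagging is that in the HM encoding $n$ can be exponential in $|I|$, so $m\cdot n^{\Oh(k)}$ is only pseudo-polynomial and does not by itself imply an \XP algorithm for $\{R,Q\}|HM|C_{\max}$ parameterized by $k$. This is consistent with the hardness results of Theorem~\ref{thm:QCmaxNPhard} and motivates the remaining parts of the paper.
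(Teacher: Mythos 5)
Your proposal is correct and follows essentially the same machine-by-machine dynamic program over multiplicity vectors that the paper uses; the only cosmetic difference is that you store the minimum achievable makespan while the paper keeps a $0/1$ feasibility table against a fixed target $T$, and both yield the same $m\cdot n^{\Oh(k)}$ bound. Your closing remark that the bound is only pseudo-polynomial for the high-multiplicity encoding, so that \XP membership is claimed only for $\{R,Q\}||C_{\max}$, matches the intended reading of the theorem.
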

A similar situation occurs if $\ven$ is allowed to be large, but $\vep$ is polynomially bounded, although the use of certain integer programming tools~\cite{EisenbrandEtAl2019} is required:
\begin{theorem}\label{thm:proximity}
	$\{R,Q\}|HM|C_{\max}$ can be solved in time $p_{\max}^{\Oh(k^2)} m \log m \log^2 n$, hence \\
	$\{R,Q\}|HM|C_{\max}$ are in \XP parameterized by $k$ if $p_{\max}$ is given in unary.
\end{theorem}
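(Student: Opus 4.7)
The plan is to reduce makespan-feasibility for a fixed target $T$ to an $n$-fold integer program, invoke a state-of-the-art solver, and wrap the whole procedure in a binary search over $T$.

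First, I would binary search for the optimal makespan in the range $[1, n\,p_{\max}]$. This contributes one factor $\log(n\,p_{\max}) = \Oh(\log n)$; a second $\log n$ factor will arise from the solver's polylogarithmic dependence on the right-hand side, together producing the $\log^2 n$ term.

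For a fixed candidate $T$, I would write the natural assignment IP with variables $x^i_j \in \Z_{\ge 0}$ encoding the number of type-$j$ jobs placed on machine $M_i$. The constraints are the $k$ coupling equalities $\sum_{i=1}^{m} x^i_j = n_j$ (one per job type) and a single local inequality $\sum_{j=1}^{k} p^i_j x^i_j \le T$ per machine; for $Q|HM|C_{\max}$ the local right-hand side becomes $s_i T$, while the matrix entries remain $p_j \le p_{\max}$, so rational speeds influence only the right-hand sides and do not inflate $\Delta$. After introducing a slack variable per machine, this IP is in $n$-fold form with $m$ blocks, block size $s = \Oh(k)$, $r = k$ global coupling rows, one local row per block, and largest matrix entry $\Delta \le p_{\max}$.

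I would then apply the $n$-fold IP algorithm from the toolbox of~\cite{EisenbrandEtAl2019}, whose running time is of the form $(\Delta r s)^{\Oh(rs)}\, m \cdot \poly\log(\text{rhs})$, which for our parameters evaluates to $p_{\max}^{\Oh(k^2)}\, m \log^{\Oh(1)} n$. Pre-sorting machines by their processing vector $\vep^i$ (or by speed, in the $Q$-case) so that identical machines are grouped costs $\Oh(m \log m)$, which accounts for the remaining $\log m$ factor in the claimed bound. The main technical obstacle is to verify that the chosen solver's dependence on the \emph{variable upper bounds} and \emph{right-hand sides} $\ven$ is only polylogarithmic, so that their potentially exponential encoding length does not reintroduce a factor polynomial in $n$; this is precisely the content of the modern $n$-fold IP machinery of~\cite{EisenbrandEtAl2019} on which the argument relies.
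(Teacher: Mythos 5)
Your proposal is correct and matches the paper's argument: the paper likewise obtains this bound by casting \eqref{eq:1}--\eqref{eq:2} as an $N$-fold IP with $r=k$, $s=1$, $t=k$, $N=m$, and $\|E^{(N)}\|_\infty \le p_{\max}$ (using $\floor{T\cdot s_i}$ as the local right-hand side in the $Q$-case, exactly as you do) and invoking Proposition~\ref{prop:nfold}, while also sketching an alternative route via an LP-relaxation proximity bound followed by the dynamic program of Theorem~\ref{thm:dp}. The only cosmetic differences are that the paper treats the problem as feasibility for a given $T$ (so no binary search is needed), and that the $m\log m$ and $\log^2 n$ factors come directly from the solver's $Nt\log Nt\log^2\|\veu-\vel\|_\infty$ term rather than from a pre-sorting step.
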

Our main result is that an \FPT algorithm for $Q|HM|C_{\max}$ is unlikely to exist even when $\ven$, $\vep$, and $\ves$ are encoded in unary, and for $R|HM|C_{\max}$ even when the rank of $\vep$ is $2$:
\begin{theorem}\label{thm:xcmaxwh1}
	$X||C_{\max}$ is \Wh{1} parameterized by the number of job types with
	\begin{enumerate*}
\item $X=Q$ and $\ven$, $\vep$, and $\ves$ given in unary.
\item $X=R$ and $\ven$ and $\vep$ given in unary and $\rank(\vep) = 2$.
	\end{enumerate*}
\end{theorem}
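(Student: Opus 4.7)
The plan is to prove both statements (a) and (b) by parameterized reductions from \emph{Multicolored Clique}, which is \Wh{1} parameterized by the number of colors~$k$. Given a graph $G$ whose vertex set is partitioned into color classes $V_1, \dots, V_k$, I would construct a scheduling instance whose number of distinct job types depends only on $k$ (of order $\Theta(k^2)$) and whose numerical data is polynomially bounded in $|V(G)|$, such that a schedule of makespan at most $T$ exists if and only if $G$ contains a multicolored clique.

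For part (a), the machines would come in $k$ groups, one per color class: the $i$-th group contains one machine per candidate vertex $v \in V_i$, with speed $s_v$ chosen by a polynomial-weight numerical encoding of the pair $(i,v)$ --- for instance base-$(|V(G)|+1)$ digits --- so that load contributions of different color classes occupy disjoint ``positions''. A \emph{selector} job type, one per color, would have its multiplicity arranged so that any schedule with makespan at most $T$ forces exactly one machine per color class to be activated (absorbing its selector), thereby picking one vertex per color. An \emph{edge} job type per unordered color pair $\{i,j\}$ would have sizes tuned so that the residual slack on the two activated machines can accommodate the corresponding edge-jobs if and only if the two selected vertices are adjacent in~$G$.

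For part (b), the same blueprint would apply, but rather than using the machines' speeds (which forces the processing-time matrix to be rank~$1$), I would exploit the two degrees of freedom available when $\rank(\vep) = 2$. Writing $p^i_j = \alpha_i x_j + \beta_i y_j$, the pairs $(\alpha_i,\beta_i)$ and $(x_j,y_j)$ can independently carry the vertex index and the edge-slot index; this decouples the gadgets, avoids the denominator issue inherent to~$Q$, and makes it easy to keep all integer data polynomial in $|V(G)|$ without needing to clear least common multiples.

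The main obstacle is the numerical calibration. The makespan bound and the multiplicities must be set so that (i) no spillover between digit positions occurs when summing contributions of jobs on a single machine, and (ii) no unintended schedule --- e.g.\ one mixing selector and edge jobs in a non-canonical way, or splitting a selector across two machines --- attains the bound without corresponding to a genuine multicolored clique. These are typical hurdles for numerical hardness reductions and should be handled by choosing the encoding base and digit positions carefully so contributions from distinct color classes and distinct edge slots never interfere; once separation is guaranteed, both directions of the equivalence reduce to a routine arithmetic verification.
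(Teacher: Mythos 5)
Your proposal takes a genuinely different route from the paper, and it has a gap at exactly the point you defer to ``routine arithmetic verification.'' The paper does not reduce from \textsc{Multicolored Clique} directly; it starts from the theorem of Jansen et al.\ that \unarybinpacking is \Wh{1} parameterized by the number of bins, upgrades it to tight instances of \unarybalancedbinpacking (Lemma~\ref{lem:reductionBPtoBBP}), and then ``transposes'' the parameters: one machine per \emph{item} and only $3k$ job types ($\alpha^1_j,\alpha^0_j,\beta_j$, one triple per bin), with all sizes, multiplicities and speed numerators/denominators in $\Oh(A^4)$, so that the only feasible schedules are the perfect ones corresponding to packings (Lemmas~\ref{lem:reductionBBPtoQCmax} and~\ref{lem:reductionBBPtoRCmax}). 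The hard number-theoretic work of turning a clique instance into polynomially bounded numbers is thereby outsourced to the Jansen et al.\ result rather than redone.

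The concrete problem with your construction is the interaction between the digit encoding and the unary requirement. A \W{1}-hardness proof needs an \FPT reduction, so every number you output in unary must have magnitude $f(k)\cdot\poly(|V(G)|)$. Your scheme uses base-$(|V(G)|+1)$ digits with one position per color class or color pair, i.e.\ $\Theta(k^2)$ positions, which yields numbers of magnitude $|V(G)|^{\Theta(k^2)}$; these cannot be written in unary in \FPT time, so the separation mechanism you rely on to prevent spillover is unavailable. Conversely, if you keep all numbers polynomially bounded, each machine in the $Q$ model contributes a single scalar load (all job sizes on machine $M_i$ are $p_j/s_i$ with one number $p_j$ per type), and with only $\Oh(k^2)$ job types there is no evident way to verify \emph{adjacency} of the two selected vertices: a single edge-job size per color pair can only enforce counting/sum conditions on the residual slacks, which are satisfied by many non-adjacent pairs unless vertex identities are arithmetically encoded into the slack, which reintroduces the magnitude blow-up. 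This is precisely the obstruction the paper flags as the non-obvious part of the theorem, and your sketch does not resolve it. (The rank-$2$ freedom in part (b) gives one extra machine-dependent coordinate, which the paper spends entirely on a single ``blocker'' job type $\gamma$; it is not enough to carry a full vertex-times-edge-slot product encoding as you suggest.)
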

We use a result of Jansen et al.~\cite{JansenKMS:2013} as the basis of our hardness reduction.
They show that \textsc{Bin Packing} is \Wh{1} parameterized by the number of bins even if the items are given in unary.
In the context of scheduling, this means that $P||C_{\max}$ is \Wh{1} parameterized by the number of machines already when $p_{\max}$ is polynomially bounded.
However, it is non-obvious how to ``transpose'' the parameters, that is, how to go from many job types and few machines to few job types and many machines which differ as little as possible (i.e., only by their speeds, or only in low-rank way).
We first show \Whness{1} of \balancedbinpacking, where we additionally require that the number of items in each bin is identical, parameterized by the number of bins, even for tight instances in which each bin has to be full.
Using this additional property, we are able to construct an $R|HM|C_{\max}$\jzcom{I think we should change this to $Q|HM|C_{\max}$ and then introduce the blocker job types for $R|HM|C_{\max}$} instance of makespan $T$ in which optimal solutions are in bijection with optimal packings of the encoded \balancedbinpacking instance.
Our $R|HM|C_{\max}$ instance uses one job type to ``block out'' a large part of a machine's capacity so that its remaining capacity depends on the item the machine represents, and all other job types have sizes independent of which machine they run on.
Since the capacity of a machine exactly corresponds to its speed, omitting those ``blocker'' jobs and setting the machine speeds gives a hard instance for $Q|HM|C_{\max}$.

Let us go back to $P|HM|C_{\max}$.
As mentioned previously, Goemans and Rothvoss showed that if the largest job size $p_{\max}$ is polynomially bounded, the problem is \FPT because $(\log p_{\max})^{f(k)} \poly\log n \leq g(k) \cdot p^{o(1)}_{\max} \poly\log n$~\cite[Exercise 3.18]{CyganFKLMPPS15}.
We answer the remaining question whether the problem is in \FPT also when all jobs are given explicitly:
\begin{theorem} \label{thm:pcmax}
	$P||C_{\max}$ is \FPT parameterized by $k$.
\end{theorem}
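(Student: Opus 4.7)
The plan is to exploit the fact that in $P||C_{\max}$ jobs are listed explicitly, so $n \leq |I|$; moreover, we may assume $m \leq n$ (else each job is placed on its own machine with optimum makespan $p_{\max}$). Thus $n$ and $m$ are both polynomial in $|I|$, and only the individual job sizes $p_i$ may be exponentially large. I would binary-search the makespan $T$ over $[p_{\max}, n\cdot p_{\max}]$, reducing the problem to $\Oh(|I|)$ feasibility tests at fixed candidate $T$'s.

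For each candidate $T$ I set up the classical configuration ILP: for every $\vecc \in \mathcal C_T := \{\vecc \in \Z_{\geq 0}^k : \vep \cdot \vecc \leq T,\ \vecc \leq \ven\}$ introduce $x_\vecc \in \Z_{\geq 0}$ with constraints $\sum_\vecc x_\vecc = m$ and $\sum_\vecc c_i\, x_\vecc = n_i$ for $i \in [k]$. This ILP has only $k+1$ equality constraints and coefficients bounded by $n$; crucially, the possibly exponential $p_i$ enter only through the combinatorial definition of $\mathcal C_T$ (which has size at most $(n+1)^k$), not through the ILP coefficients. This is the main structural gain of working in the ``explicit'' regime over the high-multiplicity one.

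Solving this $(k+1)$-row ILP with coefficients $\leq n$ in time $f(k)\cdot\poly(|I|)$ is the heart of the proof. Off-the-shelf algorithms for ILP with few constraints (Eisenbrand--Weismantel; Jansen--Rohwedder) give runtimes of the form $\Delta^{\Oh(k)} = n^{\Oh(k)}$, which is only an \XP bound. To reach \FPT I would combine: (i) solve the LP relaxation to obtain a vertex optimum $x^*$ of support $\leq k+1$; (ii) invoke a proximity theorem guaranteeing an integer optimum $x^I$ with $\|x^I - x^*\|_1 \leq g(k)$; and (iii) apply Lenstra's algorithm in dimension $\Oh(g(k))$ to the residual ILP that describes the correction $x^I - \lfloor x^* \rfloor$ together with the few additional configurations possibly introduced by $x^I$ but missing from the support of $x^*$. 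The main obstacle is step (ii): standard proximity bounds (Cook--Gerards--Schrijver, Eisenbrand--Weismantel) scale polynomially with the coefficient magnitude $n$, so to obtain a bound depending only on $k$ one must exploit that all columns of the configuration matrix lie in the $k$-dimensional polytope $\{\vecc\geq 0 : \vep\cdot\vecc\leq T\}$, allowing a Steinitz-type, dimension-only bound on the size of the required ``correction'' support.
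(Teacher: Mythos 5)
Your overall architecture (binary search on $T$, configuration ILP with $k+1$ rows and coefficients bounded by $n$) is fine as far as it goes, but the step you yourself flag as "the heart of the proof" --- step (ii), a proximity bound $\|x^I-x^*\|_1\le g(k)$ depending \emph{only} on $k$ --- is a genuine gap, not a technicality. All known proximity theorems for ILPs with few rows (Cook--Gerards--Schrijver, Eisenbrand--Weismantel, and the Steinitz-based arguments) scale with the largest coefficient $\Delta$, here $\Delta\le n$, giving $n^{\Oh(k)}$ and hence only \XP; there is no ``dimension-only'' Steinitz-type bound, since the Steinitz constant is inherently proportional to the norm of the summands, which for configuration columns is up to $n$. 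A coefficient-free structural statement about the integer cone of a configuration polytope is exactly the difficult content of the Goemans--Rothvoss structure theorem (and of the Jansen--Klein follow-up), and even those results carry a $(\log p_{\max})^{2^{\Oh(k)}}$-type dependence rather than a pure $g(k)$; asserting that "one must exploit" the polytope structure does not supply the missing argument. (Step (iii) also needs more care, since the correction may use configurations outside the support of $x^*$ and these would have to be encoded as additional integer variables, but that is repairable; step (ii) is not, as stated.)

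The paper sidesteps this entirely and the resulting proof is much shorter: since jobs are explicit, $n\le|I|$, so one can apply the Frank--Tardos rounding (Proposition~\ref{prop:FT} via Lemma~\ref{lem:nfold-reduce-coeff}) to replace $(\vep,T)$ by an equivalent instance with $p_{\max}\le 2^{\Oh(k^3)}n^{\Oh(k^2)}$, i.e., $\log p_{\max}=\Oh(k^3+k^2\log n)$. Then the Goemans--Rothvoss algorithm, used as a black box with running time $(\log p_{\max})^{2^{\Oh(k)}}\poly\log n$, runs in time $(k^3)^{2^{\Oh(k)}}\cdot(\log n)^{2^{\Oh(k)}}$, and the elementary bound $(\log n)^{2^{\Oh(k)}}\le 2^{2^{\Oh(k)}}n^{o(1)}$ shows this is \FPT. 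If you want to salvage your route, you would at minimum need to import one of these two ingredients (coefficient reduction plus an algorithm whose dependence on the coefficients is only polylogarithmic); the configuration-LP-plus-proximity plan on its own does not close the gap between \XP and \FPT.
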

This result partially answers~\cite[Question 5]{MnichB18}, which asks for an \FPT algorithm for $P|HM|C_{\max}$.
Obtaining this answer turns out to be surprisingly easy: we reduce the job sizes by a famous algorithm of Frank and Tardos~\cite{FT} and then apply the algorithm of Goemans and Rothvoss~\cite{GoemansRothvoss2014}, which is possible precisely when $n$ is sufficiently small.
This extends our understanding of the complexity of $P|HM|C_{\max}$: the problem is \FPT if either the largest job or the number of jobs are not too large.
Hence, the remaining (and major) open problem is the complexity of $P|HM|C_{\max}$ parameterized by $k$, without any further assumptions on the magnitude of $p_{\max}$ or $n$.
In light of this, our result that already $Q|HM|C_{\max}$ is \NPh when $p_{\max}$ and $n$ are large, and \Wh{1} if both are polynomially bounded, may be interpreted as indication that the magnitude of $n$ and $p_{\max}$ plays a surprisingly important role, and that $P|HM|C_{\max}$ may in fact \emph{not} be \FPT parameterized by~$k$.

\begin{table}[]
	\def\arraystretch{1.1}
	\setlength{\tabcolsep}{0.5em}
	
	\makeatletter
	\newcommand{\thickhline}{%
		\noalign {\ifnum 0=`}\fi \hrule height 1.2pt
		\futurelet \reserved@a \@xhline
	}
	\newcolumntype{"}{@{\hskip\tabcolsep\vrule width 1.2pt\hskip\tabcolsep}}
	
	\begin{tabular}{l"c|cc|cc"c|c|c}
		& $P||\dots$ & \multicolumn{2}{c|}{$Q||\dots$} & \multicolumn{2}{c"}{$R||\dots$} & $P|HM|\dots$ & $Q|HM|\dots$ & $R|HM|\dots$ \\ \thickhline
		$C_{\max}$ & \FPT & \multirow{9}{*}{\rotatebox[origin=c]{270}{\XP (Theorem~\ref{thm:dp})}} & \Wh{1} & \multirow{9}{*}{\rotatebox[origin=c]{270}{\XP (Theorem~\ref{thm:dp})}} & \Wh{1} & poly.\ time & \NPh & \NPh \\
		& (Thm.~\ref{thm:pcmax}) & & (Thm.~\ref{thm:xcmaxwh1}) & & (Thm.~\ref{thm:xcmaxwh1}) & for const.\ $k$ & for $k \ge 6$ & for $k \ge 4$ \\
		& & & & & & (\cite{GoemansRothvoss2014}) & (Thm.~\ref{thm:QCmaxNPhard}) & (Thm.~\ref{thm:RCmaxNPhard}) \\ \cline{1-2}\cline{4-4}\cline{6-9}
		$\ell_2$ & ? & & \Wh{1} & & \Wh{1} & ? & \NPh & \NPh \\
		& & & (Cor.~\ref{cor:Xl2W1hard}) & & (Cor.~\ref{cor:Xl2W1hard}) & & for $k \ge 6$ & for $k \ge 7$ \\
		& & & & & & & (Cor.~\ref{cor:Xl2NPhard}) & (Cor.~\ref{cor:Xl2NPhard}) \\ \cline{1-2}\cline{4-4}\cline{6-9}
		$\sum w_j C_j$ & ? & & ? & & \Wh{1} & ? & ? & \NPh \\
		& & & & & (Cor.~\ref{cor:RweightedSumW1hard}) & & & for $k \ge 7$ \\
		& & & & & & & & (Cor.~\ref{cor:RweightedSumNPhard}) \\
	\end{tabular}
	
	\bigskip
	
	\caption{Overview of the computational hardness of $\{P, Q, R\}|\{\_, HM\} | \{C_{\max}, \ell_2, \sum w_j C_j\}$ relative to the number of job types~$k$.}
	\label{tab:overview}
\end{table}

\vspace{-1em}
\subparagraph*{Other Objectives.}
Besides minimum makespan, two important scheduling objectives are minimization of the sum of weighted completion times, denoted $\sum w_j C_j$, and the minimization of the $\ell_2$-norm of the load vector.
We show that our algorithms and hardness results (almost always) translate to these objectives as well.
Let us now introduce them formally.

The load $L_i$ of a machine $M_i$ is the total size of jobs assigned to it.
In $R|HM|\ell_2$, the task is to find a schedule minimizing $\|(L_1, \dots, L_m)\|_2 = \sqrt{\sum_{i=1}^m L_i^2}$.
Note that this is isotonic (order preserving) to the function $\sum_{i=1}^m L_i^2$, and because this leads to simpler proofs, we instead study the problem $R|HM|\ell_2^2$.
The completion time of a job, denoted $C_j$, is the time it finishes its execution in a schedule.
In the $R|HM|\sum w_j C_j$ problem, each job is additionally given a weight $w_j$, and the task is to minimize $\sum w_j C_j$.

We show that the hard instance for $R|HM|C_{\max}$ is also hard for $\ell_2$, and with the right choice of weights is also hard for $\sum w_j C_j$.
We also obtain hardness of $Q|HM|\ell_2$ by a different and more involved choice of speeds, but the case of $Q|HM|\sum w_j C_j$ remains open so far.
To extend the $C_{\max}$ reduction to other objectives, we use the ``tightness'' of our hardness instance to show that any ``non-tight'' schedule must increase the $\ell_2$ norm of the load vector by at least some amount.
This is not enough for $R|HM|\sum w_j C_j$ because the value $\sum w_j C_j$ is proportional to the load vector plus other terms, and we need to bound those remaining terms (Lemma~\ref{lem:gapsumwjCj}) in order to transfer the argument from $\ell_2$ to $\sum w_j C_j$.
We point out that the these hardness results are delicate and non-trivial even if at first sight they may appear as ``just'' modifying the hard instance of $Q|HM|C_{\max}$.

\ifproc Proofs of statements marked with ``\appmark'' are available in the full version. \fi
We give an overview of our results in Table~\ref{tab:overview}.

\section{Preliminaries}
We consider zero a natural number, i.e., $0 \in \N$.
We write vectors in boldface (e.g., $\vex, \vey$) and their entries in normal font (e.g., the $i$-th entry of a vector~$\vex$ is~$x_i$).
If it is clear from context that $\vex^{\intercal} \vey$ is a dot-product of $\vex$ and $\vey$, we just write $\vex \vey$~\cite{ConfortiCZ:2014}.
We use $\log \df \log_2$, i.e., all our logarithms are base~$2$.
For $n,m \in \N$, we write $[n,m] = \{n, n+1, \dots, m\}$ and $[n] = [1,n]$.
%

\jzcom{add also $Q|HM|C_{\max}$}%

\prob{\textsc{Makespan Minimization on Unrelated Machines} ($R|HM|C_{\max}$)}
{$n$ jobs of $k$ types, job multiplicities $n_1, \dots, n_k$, i.e., $n_1 + \cdots + n_k = n$ and $n_j$ is the number of jobs of type $j$, $m$ unrelated machines, for each $i \in [m]$ a job sizes vector $\vep^i = (p^i_1, \dots, p^i_k) \in (\N \cup \{+\infty\})^{k \cdot m}$ where $p^i_j$ is the processing time of a job of type $j$ on a machine $M_i$, a number $T$.}
{An assignment of jobs to machines and non-overlapping (with respect to each machine) time slots such that every machine finishes by time $T$.}

Notice that our definition uses a \emph{high-multiplicity} encoding of the input, that is, jobs are not given explicitly, one by one, but ``in bulk'' by a vector of multiplicities.
Because this allows compactly encoding instances which would otherwise be of exponential size, the two problems actually have different complexities and deserve a notational distinction: we denote by $R||C_{\max}$ the problem where jobs are given explicitly, and by $R|HM|C_{\max}$ the problem defined above; see also the discussion in~\cite{KnopK2020}.

Recall that in $R|HM|\ell_2$, the task is to minimize $\|(L_1, \dots, L_m)\|_2$, where $L_i$ is the sum of sizes of jobs assigned to machine $M_i$ for $i \in [m]$.
In $R|HM|\sum w_j C_j$, each job~$j$ has a weight~$w_j$, and a schedule determines a job's completion time $C_j$.
The task is then to minimize $\sum w_j C_j$.

The job sizes matrix $\vep \in \R_+^{k \times m}$ has rank $r$ if it can be written as a product of matrices $C \in \R^{k \times r}$ and $D \in \R^{r \times m}$.
For example, in $Q||C_{\max}$, each machine has a speed $s_i \in \R_+$, and $\vep^i = \vep' / s_i$ for some $\vep' \in \N^k$, so $\vep = \vep' (1/\ves)^{\intercal}$, where $\ves = (s_1, \dots, s_m)$, hence $\vep$ has rank $1$.

In the \emph{identical machines} model, $\vep^i = \vep$ for all $i \in [m]$, and we denote it $P||C_{\max}$.
Its decision variant $P||C_{\max}$ is equivalent to \binpacking:

\prob{\binpacking}
{$n$ items of sizes $a_1, \dots, a_n$, $k$ bins, each with capacity $B$.}
{An assignment of items to bins such that the total size of items in each bin is $\leq B$.}

\unarybinpacking is \binpacking where all $a_1, \dots, a_n$ are encoded in unary, or, equivalently, $a_{\max} = \max_i a_i$ is bounded polynomially in $n$.
\balancedbinpacking is \binpacking with the additional requirement on the solution that the number of items assigned to each bin is the same, hence $n/k$; note that $n$ has to be divisible by $k$ for any instance to be feasible.
An instance of \binpacking is \emph{tight} if the total size of items $\sum_i a_i$ is equal to $k \cdot B$, which means that if an instance has a packing, then each bin is used fully.

\section{Algorithms}
We wish to highlight the geometric structure of $R|HM|C_{\max}$ by formulating it as an ILP and making several observations about it.
We have a variable $x_j^i$ for each job type $j \in [k]$ and machine~$M_i$ (with $i \in [m]$) specifying how many jobs of type $j$ are scheduled to run on machine $M_i$.
There are two types of constraints, besides the obvious bounds $\vezero \leq \vex^i \leq \ven$ for each $i \in [m]$.
The first enforces that each job is scheduled somewhere, and the second assures that the sum of job sizes on each machine is at most $T$, meaning each machine finishes by time $T$:
\begin{align}
\sum_{i = 1}^{m} x^i_j &= n_j & \forall j \in [k]& \label{eq:1}\\
\sum_{j = 1}^{k} x^i_j p^i_j &\leq T & \forall i \in [m]& \enspace . \label{eq:2}
\end{align}
Knop and Koutecký~\cite{KnopK:2017} show that this ILP has \emph{$N$-fold} format, i.e., it has the general form:
\begin{align*}
\min f(\vex): \,
E^{(N)}\vex = \veb, \, \vel \leq \vex \leq \veu,\ \vex \in \Z^{Nt}, \text{ with }
E^{(N)} = \left(
\begin{array}{cccc}
E^1_1    & E^2_1    & \cdots & E^N_1    \\
E^1_2    & 0      & \cdots & 0      \\
0    & E^2_2    & \cdots & 0      \\
\vdots & \vdots & \ddots & \vdots \\
0    & 0      & \cdots & E^N_2    \\
\end{array}
\right) \enspace .
\end{align*}
Here, $r,s,t,N \in \N$, $E^{(N)}$ is an $(r+Ns)\times Nt$-matrix, $E^i_1 \in \Z^{r \times t}$ and $E^i_2 \in \Z^{s \times t}$ for all $i \in [N]$, are integer matrices, and $f$ is some separable convex function.
Specifically for $R||C_{\max}$, $f \equiv 0$, the matrices corresponding to equations~\eqref{eq:1}--\eqref{eq:2} are $E_1^i = I$ and $E_2^i = \vep^i$, for each $i \in [m]$, $\veb = (\ven, T, \dots, T)$ is an $r+Ns = (k+m)$-dimensional vector, and $\vel=\vezero$ and $\veu = (\ven, \ven, \dots, \ven)$ are $Nt = (mk)$-dimensional vectors.
We note that $N$-fold IP formulations are also known for $R|HM|\{\ell_2, \sum w_j C_j\}$~\cite{KnopK2020,KnopK:2017}.

\subsection{Large Lengths, Polynomial Multiplicities}
A simple dynamic programming algorithm gives:
\ifproc
\begin{reptheorem}{thm:dp}[\appmark]
\else
\begin{reptheorem}{thm:dp}
\fi
$\{R,Q\}|HM|\{C_{\max},\ell_2,\sum w_j C_j\}$ can be solved in time $m \cdot n^{\Oh(k)}$, hence $\{R,Q\}||\{C_{\max},\ell_2,\sum w_j C_j\}$ are in \XP parameterized by $k$.
\end{reptheorem}
\ifproc\else
\begin{proof}
We will describe a simple dynamic programming (DP) algorithm.
Call a vector $\vex^i \in \N^k$ satisfying the constraint~\eqref{eq:2}, i.e., $\vep^i \vex^i \leq T$, a \emph{configuration of machine~$M_i$}.
We will construct a DP table $D$ indexed by $k$-dimensional integer vectors upper bounded by $\ven$, and $i \in [m]$, and each value of the table is a 0/1 bit.
The intended meaning is that, for $i \in [m]$ and $\ven' \leq \ven$, $D[i, \ven'] = 1$ iff the subinstance consisting of jobs $\ven'$ and the first $i$ machines is feasible.
Initialize $D$ to be all-zero, and set $D[0, \vezero] = 1$.
Then, consecutively for $i=1, \dots, m$, and for each $\vezero \leq \ven' \leq \ven$, set $D[i, \ven'] = 1$ if $D[i-1, \ven'- \vex^i] = 1$ and $\vex^i$ is a configuration of machine~$M_i$.
In other words, for each $i=1,\dots, m$, construct the set $\CC^i$ of configurations of machine~$M_i$, and then, for each $\ven'$ with $D[i-1,\ven'] = 1$, set $D[i, \ven'+\vex^i]=1$ for each $\vex^i \in \CC^i$ if $\ven' + \vex^i \leq \ven$.
Finally, the instance is feasible if $D[m,\ven]=1$.
In each iteration, we go over all $\ven' \leq \ven$, of which there is at most $n^k$ many, and for each of them, we try to add each element of $\CC^i$, of which there is also at most $n^k$ many.
In total, the algorithm makes $m \cdot n^k \cdot n^k = m \cdot n^{2k}$ steps.

The adaptation of this DP to $\ell_2$ and $\sum w_j C_j$ is straightforward.
Say that a configuration is any vector $\vex^i \leq \ven$.
The value of a configuration $\vex^i$ on machine~$M_i$ is $f^i(\vex^i) = (\vep^i \vex^i)^2$ for $\ell^2_2$.
For $\sum w_j C_j$, it has been shown~\cite{KnopK:2017} that the contribution of a machine~$M_i$ scheduling jobs $\vex^i$ is a quadratic convex function $f^i$ in terms of $\vex^i$.
Then, $D[i, \ven'] = \min_{\vex^i \leq \ven-\ven'} f^i(\vex^i) + D[i-1, \ven' - \vex^i]$.
\end{proof}
\fi

Theorem~\ref{thm:dp} (with a worse complexity bound) can be also shown in a somewhat roundabout way by manipulating the ILP formulation \eqref{eq:1}--\eqref{eq:2}.
This approach will eventually give us the result that $P||C_{\max}$ is \FPT parameterized by $k$.
We need the following result:
\begin{proposition}[{Frank and Tardos~\cite{FT}}]\label{prop:FT}
	Given a rational vector $\vew \in \Q^{d}$ and an integer $M$, there is a strongly polynomial algorithm which finds a $\bar{\vew} \in \Z^d$ such that for every integer point $\vex\in [-M,M]^d$, we have $\vew \vex \geq 0 \Leftrightarrow \bar{\vew} \vex \geq 0$ and $\|\bar{\vew}\|_\infty \leq 2^{\Oh(d^3)} M^{\Oh(d^2)}$.
\end{proposition}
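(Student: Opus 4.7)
The plan is to reduce the sign-preservation requirement to a simultaneous Diophantine approximation problem, which can be solved by running the LLL lattice basis reduction algorithm. Concretely, I would look for a positive integer $q$ and an integer vector $\vep \in \Z^d$ such that $\vep/q$ approximates $\vew$ very well in every coordinate; then $\bar\vew := \vep$ will have the required sign-preserving property, and its sup-norm will be controlled by $q$.

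I would carry this out in three steps. \textbf{Step 1 (Normalization).} Since the sign of $\vew\vex$ is invariant under positive scaling and under clearing denominators, I would first assume $\vew \in \Z^d$, and then feed the normalized vector $\vew/\|\vew\|_\infty$ to the approximation routine; this is what keeps the size of the output independent of $\|\vew\|_\infty$. \textbf{Step 2 (Sign-preservation).} Given any $\vep \in \Z^d$ and $q \in \Z_{>0}$ with $|qw_i - p_i| \le \epsilon$, the triangle inequality yields
\begin{equation*}
|q\,\vew\vex - \bar\vew\vex| \;\le\; \sum_{i=1}^d |qw_i - p_i|\,|x_i| \;\le\; dM\epsilon
\qquad \text{for all } \vex \in \Z^d \cap [-M,M]^d.
\end{equation*}
Because $\vew \in \Z^d$, the value $\vew\vex$ is itself an integer, hence either $0$ or of magnitude at least $1$. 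Taking $\epsilon < 1/(dM)$ then forces $\bar\vew\vex$ to agree with $q\,\vew\vex$ in sign: when $\vew\vex = 0$ the integer $\bar\vew\vex$ has absolute value strictly below $1$, so it must equal $0$; when $\vew\vex \ne 0$ the magnitude $|q\,\vew\vex|\ge q$ dominates the perturbation. Both directions of the claimed equivalence $\vew\vex \ge 0 \Leftrightarrow \bar\vew\vex \ge 0$ follow directly. \textbf{Step 3 (LLL).} The simultaneous Diophantine approximation procedure built on LLL, applied to $\vew/\|\vew\|_\infty$ with $\epsilon = 1/(2dM)$, returns $q,\vep$ satisfying the above with $q \le 2^{\Oh(d^2)}\epsilon^{-d}$; combined with $\|\bar\vew\|_\infty \le q + 1$ and absorbing the precision overhead, this fits inside the envelope $2^{\Oh(d^3)} M^{\Oh(d^2)}$ stated in the proposition.

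The main obstacle will be executing LLL in \emph{strongly} polynomial time, i.e., in a number of arithmetic operations independent of the bit-length of $\vew$. The standard trick, and the real contribution of Frank and Tardos, is to first apply a preliminary Diophantine approximation to the entries of $\vew$ itself so that the numbers manipulated inside the LLL run are of controlled size, and only then run the main approximation on the reduced representative. The sign-preservation argument above is robust under this substitution since it only uses that $\vew\vex \in \Z$ and that the rational approximation is accurate to within $1/(dM)$, both of which survive the preprocessing.
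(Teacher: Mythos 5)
This proposition is imported from Frank and Tardos; the paper offers no proof of its own, so I can only judge your sketch on its merits, and it has a genuine gap at the heart of Step~2. The sign-preservation argument and the size bound each require applying the simultaneous Diophantine approximation to a \emph{different} vector, and a single round cannot deliver both. If you approximate the integer vector $\vew$ itself, then indeed $\vew\vex\in\Z$, so $|q\,\vew\vex|\ge q$ whenever $\vew\vex\ne 0$ and a perturbation of size $dM\epsilon<1$ cannot flip the sign --- but then $\|\bar{\vew}\|_\infty\approx q\,\|\vew\|_\infty$ depends on the magnitude of the input, which is exactly what the proposition must avoid. If instead you approximate the normalized vector $\vew/\|\vew\|_\infty$ (as your Step~3 bound $\|\bar{\vew}\|_\infty\le q+1$ presupposes), then $(\vew/\|\vew\|_\infty)\vex=\vew\vex/\|\vew\|_\infty$ is no longer an integer; when nonzero it is only bounded below by $1/\|\vew\|_\infty$, which can be far smaller than $dM\epsilon$, so the triangle-inequality argument gives nothing and the sign can flip. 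Your write-up silently uses the unnormalized vector in Step~2 and the normalized one in Step~3, which is why the contradiction stays hidden.

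Closing this gap is precisely the content of the Frank--Tardos construction, and it is not a one-shot reduction to simultaneous Diophantine approximation but an iteration of up to $d$ rounds. In round $\ell$ one approximates the normalized residual left by the previous round, obtaining an integer vector $\vez_\ell$ and a denominator $q_\ell$; because the normalized residual has a coordinate equal to $\pm 1$, that coordinate is approximated exactly, so each round zeroes out at least one further coordinate and the process terminates after $k\le d$ rounds. The output is $\bar{\vew}=\sum_{\ell=1}^{k} K^{k-\ell}\vez_\ell$ for a sufficiently large base $K$, chosen so that $\sign(\bar{\vew}\vex)$ equals the first nonzero sign in the sequence $\sign(\vez_1\vex),\sign(\vez_2\vex),\dots$, which one then shows coincides with $\sign(\vew\vex)$ for all $\vex$ in the prescribed box. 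The stated bound $2^{\Oh(d^3)}M^{\Oh(d^2)}$ is the product of $d$ per-round bounds of size $2^{\Oh(d^2)}M^{\Oh(d)}$ --- another indication that a single round (which would yield the smaller bound $2^{\Oh(d^2)}M^{\Oh(d)}$) is not what is actually happening. Your remarks on achieving strong polynomiality by pre-rounding the entries of $\vew$ are directionally right but secondary to this missing iteration.
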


\begin{lemma}\label{lem:nfold-reduce-coeff}
It is possible to compute in strongly-polynomial time for each $i \in [m]$ a vector $\bar{\vep}^i \in \N^k$ and an integer $\bar{T}^i \in \N$ such that replacing constraint~\eqref{eq:2} with $\bar{\vep}^i \vex^i \leq \bar{T}^i$ does not change the set of feasible integer solutions, and $\|\bar{\vep}^i, \bar{T}^i\|_\infty \leq 2^{\Oh(k^3)} n^{\Oh(k^2)}$
\end{lemma}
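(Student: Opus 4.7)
The plan is to apply Proposition~\ref{prop:FT} (Frank--Tardos) separately to each of the $m$ load constraints~\eqref{eq:2}, viewing it as a single linear inequality on a $(k+1)$-dimensional space in which the extra coordinate is pinned to $1$.

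First, for a fixed machine $M_i$, rewrite $\vep^i \vex^i \leq T$ in the equivalent form
\[
 (-\vep^i, T) \cdot (\vex^i, 1) \geq 0.
\]
Every feasible $\vex^i$ satisfies $\vezero \leq \vex^i \leq \ven$, hence $(\vex^i, 1) \in [-n, n]^{k+1}$ (we may assume $n \geq 1$, else the instance is trivial). Apply Proposition~\ref{prop:FT} with $\vew = (-\vep^i, T) \in \Q^{k+1}$, dimension $d = k+1$, and $M = n$. This yields in strongly polynomial time an integer vector $\bar{\vew} = (-\bar{\vep}^i, \bar{T}^i) \in \Z^{k+1}$ satisfying
\[
 \|\bar{\vew}\|_\infty \leq 2^{\Oh((k+1)^3)} n^{\Oh((k+1)^2)} = 2^{\Oh(k^3)} n^{\Oh(k^2)},
\]
and, for every $\vey \in [-n,n]^{k+1} \cap \Z^{k+1}$, $\vew \vey \geq 0 \Leftrightarrow \bar{\vew} \vey \geq 0$. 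Instantiating this equivalence at $\vey = (\vex^i, 1)$ for any integer $\vex^i$ with $\vezero \leq \vex^i \leq \ven$ gives that $\vep^i \vex^i \leq T$ iff $\bar{\vep}^i \vex^i \leq \bar{T}^i$, so replacing~\eqref{eq:2} by the new inequality does not change the set of feasible integer solutions.

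Next I verify non-negativity of the new coefficients, which is what justifies writing $\bar{\vep}^i \in \N^k$ and $\bar{T}^i \in \N$. Test $\bar{\vew}$ against the standard basis vectors $\pm \vee_j \in [-n,n]^{k+1}$: if the $j$-th entry of $\vew$ is non-negative, then $\vew \vee_j \geq 0$, so $\bar{\vew} \vee_j = \bar w_j \geq 0$; and if the $j$-th entry of $\vew$ is strictly negative, applying the equivalence to $-\vee_j$ forces $\bar w_j < 0$. Thus Frank--Tardos preserves the sign pattern of $\vew$. Since $-\vep^i \leq \vezero$ and $T \geq 0$, we conclude $\bar{\vep}^i \in \N^k$ and $\bar{T}^i \in \N$ as required.

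Finally, repeating this for every $i \in [m]$ uses $m$ calls to the strongly polynomial Frank--Tardos subroutine, so the overall procedure is strongly polynomial. The only subtlety worth double-checking is that the range $[-n, n]^{k+1}$ used in the application of Proposition~\ref{prop:FT} truly covers all relevant evaluation points $(\vex^i, 1)$; this is immediate from the bounds $\vezero \leq \vex^i \leq \ven$ imposed in the ILP~\eqref{eq:1}--\eqref{eq:2}, so no additional work is required.
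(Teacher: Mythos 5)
Your proof is correct and follows essentially the same route as the paper: apply the Frank--Tardos preprocessing (Proposition~\ref{prop:FT}) to each machine's load constraint separately, in dimension $k+1$ with $M=n$, so that the equivalence at the points $(\vex^i,\pm 1)$ preserves feasibility and the stated bound follows directly. Your additional check that the sign pattern is preserved (justifying $\bar{\vep}^i \in \N^k$, $\bar{T}^i \in \N$) is a detail the paper leaves implicit, and it is a welcome addition.
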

\begin{proof}
Fix some $i \in [m]$ and consider the inequality~\eqref{eq:2}, which is $\vep^i \vex^i \leq T$.
Applying Proposition~\ref{prop:FT} to $(\vep^i,T)$ and $M=n$ gives a vector $(\bar{\vep}^i, \bar{T}^i)$ such that for all $\vezero \leq \vex^i \leq \ven$,
\[
(\vep^i,T) (\vex^i,-1) \leq 0 \Leftrightarrow (\bar{\vep}^i,\bar{T}^i) (\vex^i,-1) \leq 0,
\]
which means that replacing $\vep^i \vex^i \leq T$ by $\bar{\vep}^i \vex^i \leq \bar{T}$ in~\eqref{eq:2} does not change the set of feasible solutions, and the bound on $\|\bar{\vep}^i, \bar{T}\|_\infty$ follows immediately from Proposition~\ref{prop:FT}.
\end{proof}
We will use the fact that $N$-fold IP can be solved efficiently:
\begin{proposition}[\cite{jansen2018near,cslovjecsek2020n,EisenbrandEtAl2019}] \label{prop:nfold}
A feasibility instance of $N$-fold IP can be solved in time $(\|E^{(N)}\|_\infty rs)^{\Oh(r^2s + s^2)} Nt \log Nt \log^2 \|\veu-\vel\|_\infty$.
\end{proposition}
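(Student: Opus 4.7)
The plan is to establish the bound via the standard \emph{augmentation framework} for $N$-fold IP, which reduces feasibility/optimization to iteratively finding good improving steps drawn from the Graver basis of $E^{(N)}$. First I would analyze the structure of the Graver basis $\mathcal{G}(E^{(N)})$: every $\veg \in \mathcal{G}(E^{(N)})$ decomposes as a concatenation of "bricks" $\veg^1, \dots, \veg^N$ (one per block column), and a Steinitz-lemma argument shows that at most $\Oh(rs)$ of these bricks are nonzero in any Graver element, yielding a uniform bound $\|\veg\|_\infty \leq (\|E^{(N)}\|_\infty r s)^{\Oh(rs)}$. This confines the search for an augmenting direction to a small, structured set of candidate vectors independent of $N$.

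Next I would design a dynamic program that, given a current feasible $\vex$, finds a \emph{Graver-best} augmenting step $\vex + \lambda \veg$. The DP would process blocks left-to-right, and the state space would record the partial sum of the first $r$ linking rows contributed by the bricks chosen so far, together with the sign pattern forced by sign-compatibility. Because each brick has bounded $\ell_\infty$-norm and each of the $r$ linking coordinates stays within a bounded interval along any Graver element (again via Steinitz), the total number of DP states per block is $(\|E^{(N)}\|_\infty r s)^{\Oh(r^2 s + s^2)}$, and the $Nt \log Nt \log^2 \|\veu-\vel\|_\infty$ factor comes from processing the $N$ blocks, scaling $\lambda$ by binary search, and using a halving (Graver-best) argument to bound the total number of augmentation rounds by $\Oh(\log \|\veu-\vel\|_\infty)$.

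For the feasibility version proper, I would bootstrap: solve an auxiliary $N$-fold IP whose objective minimizes $\|E^{(N)}\vex - \veb\|_1$ over a trivially feasible starting point, then run the augmentation procedure; the original instance is feasible iff the auxiliary optimum is zero. This preserves the stated running time, since the auxiliary instance has the same block parameters up to a constant factor. Combined with Lemma~\ref{lem:nfold-reduce-coeff}, this Proposition then immediately yields Theorem~\ref{thm:proximity}: after reducing the coefficients of each constraint~\eqref{eq:2} to magnitude $2^{\Oh(k^3)} n^{\Oh(k^2)}$, the $N$-fold parameters are $r=k$, $s=1$, $t=k$, $N=m$, and $\|E^{(N)}\|_\infty \leq p_{\max}$, so the overall time becomes $p_{\max}^{\Oh(k^2)} m \log m \log^2 n$.

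The main obstacle I anticipate is pinning down the exponent $\Oh(r^2 s + s^2)$ precisely. The naive bound on the Graver basis gives $(\|E^{(N)}\|_\infty rs)^{\Oh(rs)}$ per brick and a looser exponent overall; obtaining the sharper $r^2 s + s^2$ in the DP state count requires the more delicate Steinitz-style partitioning of Eisenbrand--Hunkenschr\"oder--Klein--Koutecký--Levin--Onn and its refinements, and carefully coupling the sign-compatibility condition with the linking-row state to avoid double-counting. This accounting step is where the three cited works progressively tighten the exponent, and would be the technical heart of the argument.
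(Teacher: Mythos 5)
The paper does not prove this proposition at all: it is imported verbatim as a black-box result from the cited works of Jansen--Lassota--Rohwedder, Cslovjecsek et al., and Eisenbrand et al., and is only \emph{used} (in the alternative proof of Theorem~\ref{thm:dp} and in Theorem~\ref{thm:proximity}). So there is no ``paper proof'' to compare against; what you have written is a reconstruction of the argument from the cited literature, and at the level of a sketch it is faithful to that argument: brick decomposition of Graver elements, a Steinitz-type norm bound, a dynamic program over blocks whose state records the partial sum of the $r$ linking rows, a Graver-best/halving bound on the number of augmentation rounds contributing the $\log^2\|\veu-\vel\|_\infty$ factor, and a phase-I auxiliary objective for feasibility. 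Your final paragraph correctly identifies that the exponent $\Oh(r^2s+s^2)$ is exactly the delicate part and is where the three cited papers do their work.

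One concrete imprecision: the claim that ``at most $\Oh(rs)$ of these bricks are nonzero in any Graver element'' is not what the Steinitz argument gives. The argument decomposes each brick $\veg^i\in\ker(E_2^i)$ into Graver elements of $E_2^i$, maps these to vectors $E_1^i\veh\in\Z^r$ summing to zero, and extracts a proper zero subsum if there are too many; the resulting bound on the number of summands (hence on the number of nonzero bricks and on $\|\veg\|_1$) is of the form $(\|E^{(N)}\|_\infty\, r\, g_\infty(E_2)+1)^{\Oh(r)} = (\|E^{(N)}\|_\infty rs)^{\Oh(rs)}$, i.e.\ exponential in $r$, not $\Oh(rs)$. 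Your stated conclusion $\|\veg\|_\infty\le(\|E^{(N)}\|_\infty rs)^{\Oh(rs)}$ is nevertheless the right one, so this is a slip in the intermediate step rather than a break in the chain; but if you were to write the proof out, the DP state-space count (and hence the $r^2s$ in the exponent) comes precisely from raising this $(\cdot)^{\Oh(rs)}$ brick bound to an additional $\Oh(r)$ for the linking-row prefix sums, so the distinction matters for getting the exponent right.
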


\begin{proof}[{Alternative proof of Theorem~\ref{thm:dp}} for $C_{\max}$]
By Lemma~\ref{lem:nfold-reduce-coeff}, we can reduce $\|E^{(N)}\|_\infty$ down to $2^{\Oh(k^3)} n^{\Oh(k^2)}$.
Since $r=k$, $t=k$, $s=1$, $N=m$, and $\|\veu-\vel\|_\infty \leq n$, applying Proposition~\ref{prop:nfold} to such a reduced instance gives an $n^{\Oh(k^5)} m \log m \log^2 n$ algorithm.
Dealing with $\ell_2$ and $\sum w_j C_j$ is analogous, see Lemma~\ref{lem:nfoldreduction}.
\end{proof}
While this is worse than the DP above, notice that this approach also gives:
\begin{reptheorem}{thm:pcmax}
$P||C_{\max}$ is \FPT parameterized by $k$.
\end{reptheorem}
\begin{proof}
Apply Lemma~\ref{lem:nfold-reduce-coeff} to a given $P||C_{\max}$ instance, which gives a new job-sizes vector $\bar{\vep} \in \N^k$ and a new time bound $\bar{T} \in \N$.
Goemans and Rothvoss~\cite{GoemansRothvoss2014} have shown that $P||C_{\max}$ with $k$ job types can be solved in time $(\log p_{\max})^{2^{\Oh(k)}} \poly\log n$.
Plugging in $p_{\max} \leq 2^{\Oh(k^3)} n^{\Oh(k^2)}$ gives $\log p_{\max} \leq \log 2^{\Oh(k^3)} n^{\Oh(k^2)} = k^3 + k^2 \log n$.
Hence, the algorithm runs in time $(k^3 \log n)^{2^{\Oh(k)}} = (k^3)^{2^{\Oh(k)}} \cdot (\log n)^{2^{\Oh(k)}}$.
To verify that this is indeed an \FPT runtime
(i.e., $f(k)\poly(n)$ for some computable $f$),
we use a simple observation~\cite[Exercise 3.18]{CyganFKLMPPS15} that $(\log \alpha)^\beta \leq 2^{\beta^2/2} \alpha^{o(1)}$.
Taking $\alpha = n$ and $\beta = 2^{\Oh(k)}$ gives $(\log n)^{2^{\Oh(k)}} \leq 2^{2^{\Oh(k)}} n^{o(1)}$ and we are done.
\end{proof}
\begin{remark}
The algorithm of~\cite{GoemansRothvoss2014} shows that $P|HM|C_{\max}$ is \FPT in $k$ if $p_{\max}$ is given in unary.
To the best of our knowledge, it has not been observed before that $P|HM|C_{\max}$ is \FPT in $k$ if $n$ is polynomially bounded by the input length, i.e., that $P||C_{\max}$ is \FPT in $k$.
Thus, Theorem~\ref{thm:pcmax} shows that the remaining (and indeed hard) open problem is the complexity of $P|HM|C_{\max}$ for instances where both $\vep$ and $\ven$ contain large numbers.
\end{remark}

A straightforward adaptation of the proof of Lemma~\ref{lem:nfold-reduce-coeff} where we reduce each row of the constraint $E_2^i \vex^i = \veb^i$ separately gives the following more general statement:
\begin{lemma}\label{lem:nfoldreduction}
Given an $N$-fold IP instance and $M \in \N$, one can in strongly-polynomial time compute $\bar{E}_2^i$ and $\bar{\veb}^i$, for each $i \in [N]$, such that if $\|\veu - \vel\|_\infty \leq 2M$, then
\[
\{\vex \in \Z^{Nt} \mid E^{(N)}\vex = \veb, \, \vel \leq \vex \leq \veu\} =
\{\vex \in \Z^{Nt} \mid \bar{E}^{(N)}\vex = \bar{\veb}, \, \vel \leq \vex \leq \veu\},
\]
where $\bar{E}^{(N)}$ is obtained from $E^{(N)}$ by replacing $E_2^i$ with $\bar{E}_2^i$ and $\bar{\veb}$ is obtained from $\veb$ by replacing $\veb^i$ with $\bar{\veb}^i$, for each $i \in [N]$, and $\|\bar{E}_2^i, \bar{\veb}^i\|_\infty \leq 2^{\Oh(t^3)} M^{\Oh(t^2)}$. \qed
\end{lemma}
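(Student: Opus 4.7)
The plan is to mimic the proof of Lemma~\ref{lem:nfold-reduce-coeff} by applying Proposition~\ref{prop:FT} (Frank--Tardos) row-by-row to every block $E_2^i$, for $i \in [N]$. First observe that Proposition~\ref{prop:FT}, applied to a linear form and to its negation in turn, preserves not only the sign ``$\geq 0$'' but also equality with $0$ on the underlying centred integer box; so it is enough to produce, for each scalar equality $(E_2^i)_j \vex^i = b_j^i$ (with $j \in [s]$), an equivalent scalar equality with bounded integer coefficients.

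To meet the centred-box hypothesis of Proposition~\ref{prop:FT}, I first perform the standard shift $\vey^i := \vex^i - \vel^i$, which leaves the $N$-fold block form unchanged (only the right-hand side changes) and places $\vey^i$ into $[0, 2M]^t$ thanks to the hypothesis $\|\veu - \vel\|_\infty \leq 2M$; hence $(\vey^i, 1) \in [-2M, 2M]^{t+1}$. Writing the equality as the sign condition $\big((E_2^i)_j,\, -(b_j^i - (E_2^i)_j \vel^i)\big) \cdot (\vey^i, 1) = 0$, I invoke Proposition~\ref{prop:FT} in dimension $t+1$ with $M' = 2M$, obtaining an equivalent integer pair $(\bar{(E_2^i)}_j,\, -\bar b_j^i)$ of $\ell_\infty$-norm at most $2^{\Oh((t+1)^3)}(2M)^{\Oh((t+1)^2)} = 2^{\Oh(t^3)} M^{\Oh(t^2)}$, which is exactly the bound claimed in the lemma.

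Assembling the reduced rows into matrices $\bar E_2^i$ and reduced scalars into right-hand sides $\bar\veb^i$ yields the desired equivalent $N$-fold instance: the $r$ linking constraints involving $E_1^i$ are left untouched, and each row of each $E_2^i$ is replaced by an equivalent constraint on the box $[\vel^i, \veu^i]$; equivalence of integer feasible sets then follows block-by-block. The algorithm makes exactly $Ns$ calls to Proposition~\ref{prop:FT}, each strongly polynomial, so the whole reduction is strongly polynomial. The only delicate bookkeeping step is the variable shift, since the ``un-shifted'' right-hand side $\bar\vep^i \vel^i$ could in principle inflate $\|\bar\veb\|_\infty$ when $\|\vel\|_\infty$ is very large; I handle this by reporting the reduced system in the shifted variables $\vey = \vex - \vel$ (equivalently, assuming without loss of generality $\vel = \vezero$, a substitution that preserves the $N$-fold block structure), which is exactly what makes the stated norm bound hold as written.
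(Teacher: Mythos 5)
Your proof is correct and follows exactly the route the paper intends (the paper omits the proof, describing it only as a row-by-row adaptation of Lemma~\ref{lem:nfold-reduce-coeff}): apply Proposition~\ref{prop:FT} to each row $((E_2^i)_j, -b_j^i)$ in dimension $t+1$, and use the symmetry of the box $[-2M,2M]^{t+1}$ to upgrade preservation of the sign to preservation of equality — note that a \emph{single} call per row already suffices for this, since the equivalence $\vew\vex\geq 0 \Leftrightarrow \bar\vew\vex\geq 0$ applied at both $\vex$ and $-\vex$ yields $\vew\vex=0 \Leftrightarrow \bar\vew\vex=0$, so no separate application to the negated form is needed. Your observation about shifting to $\vey=\vex-\vel$ to meet the centred-box hypothesis (and to keep $\|\bar\veb^i\|_\infty$ bounded) is a genuine bookkeeping point that the paper glosses over, and it is handled correctly.
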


\subsection{Polynomial Lengths, Large Multiplicities}
How to deal with instances whose jobs have polynomially bounded sizes, but come in large multiplicities?
Actually, the fact that $R|HM|C_{\max}$ belongs to \XP parameterized by $k$ if $p_{\max}$ is polynomially bounded follows by solving the $N$-fold IP~\eqref{eq:1}--\eqref{eq:2} using Proposition~\ref{prop:nfold}:
\begin{reptheorem}{thm:proximity}
$\{R,Q\}|HM|\{C_{\max}, \ell_2, \sum w_j C_j\}$ can be solved in time $p_{\max}^{\Oh(k^2)} m \log m \log^2 n$.
\end{reptheorem} 
To obtain a result like this one can first solve the LP relaxation of~\eqref{eq:1}--\eqref{eq:2}, and then use a ``proximity theorem'' to show that some integral optimum is at distance at most $p_{\max}^{\Oh(k)}\cdot m$~\cite[Theorem 59]{EisenbrandEtAl2019} from any optimum of the LP relaxation.
This yields an $\{R,Q\}|HM|\{C_{\max}, \ell_2, \sum w_j C_j\}$ instance where roughly $p_{\max}^{k}\cdot m$ jobs are left to be scheduled and which can be solved using Theorem~\ref{thm:dp}.
To adapt the model~\eqref{eq:1}--\eqref{eq:2} for uniformly related machines, one has a single vector $\vep \in \N^\tau$ of ``unscaled'' processing times, and the right hand side of constraint~\eqref{eq:2} becomes $\floor{T\cdot s_i}$ for a machine of speed $s_i$.
For $\ell_2$, the objective $f$ of the $N$-fold formulation becomes $f(\vex) = \sum_{i=1}^m (\vep^i \vex^i)^2$ which is almost separable convex (one needs to add an auxiliary variable $z^i$ and a constraint $z^i = \vep^i \vex^i$ to express it as separable).
For $\sum w_j C_j$, the modification is analogous but slightly more complicated; the approach is identical to the one described by Knop and Koutecký~\cite{KnopK:2017}.

It is an open problem whether the $p_{\max}^{\Oh(k^2)}$ parameter dependence can be improved: even in the setting with short jobs where $p_{\max} \leq k$, the best algorithm for $Q|HM|C_{\max}$ has a dependence of $k^{k^2}$~\cite{KnopKLMO:2019,KnopK:2017}.


\section{Hardness}


\subsection{Reducing \binpacking to \balancedbinpacking}
\begin{lemma}
	\label{lem:reductionBPtoBBP}
	\binpacking reduces to \balancedbinpacking such that \\
	\begin{enumerate*}
		\item $a'_{\max} = a_{\max} + 1$,
		\item $B' = B+n$,
		\item $k'= k$,
		\item $n' = nk$, and
		\item tightness is preserved,
	\end{enumerate*}
	where $n', k', B', a'_{\max}$ are the parameters of the new \balancedbinpacking~instance.
\end{lemma}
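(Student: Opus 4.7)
The plan is to construct the \balancedbinpacking instance from the given \binpacking instance as follows. I would inflate every original item of size $a_i$ to size $a_i + 1$ and then introduce $n(k-1)$ additional \emph{filler} items, each of size~$1$; the new bin capacity is set to $B' = B + n$, and the number of bins stays at~$k$. The total number of items becomes $n' = n + n(k-1) = nk$, so the required balancedness condition $n'/k' = n$ items per bin is in effect, and $a'_{\max} = a_{\max}+1$ holds since the filler items have size~$1 \leq a_{\max}+1$.

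Next I would verify the parameter bookkeeping and tightness. The four parameter equalities from the statement follow directly from the construction. For tightness, assume $\sum_{i=1}^n a_i = kB$; then the total size in the new instance is
\[
\sum_{i=1}^{n}(a_i+1) + n(k-1)\cdot 1 \;=\; kB + n + n(k-1) \;=\; k(B+n) \;=\; kB',
\]
so every tight \binpacking instance maps to a tight \balancedbinpacking instance.

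For the forward direction of correctness, given a packing of the original instance where bin $j$ receives $m_j$ original items with $\sum_{i \in B_j} a_i \leq B$, I would place the corresponding size-$(a_i+1)$ items in the new bin $j$ and pad with exactly $n - m_j$ filler items (this is feasible since $\sum_j (n-m_j) = nk - n = n(k-1)$ equals the supply of filler items). The resulting load is $\sum_{i \in B_j}(a_i+1) + (n-m_j) = \sum_{i \in B_j} a_i + n \leq B + n = B'$, and every bin contains exactly $n$ items, as required. For the backward direction, given a balanced packing with exactly $n$ items per bin, let $m_j$ denote the number of inflated original items landing in bin~$j$ (so $n - m_j$ filler items are there). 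The bin load satisfies $\sum_{i\in B_j}(a_i+1) + (n - m_j) \leq B'$, which simplifies to $\sum_{i\in B_j} a_i \leq B' - n = B$; discarding the filler items and undoing the $+1$ inflation therefore yields a valid packing of the original \binpacking instance.

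The construction is clean and the potential pitfall is essentially bookkeeping: one must ensure that the filler supply $n(k-1)$ exactly matches the number of slots left after distributing the $n$ original items across $k$ bins with an $n$-per-bin quota, and that the $+n$ slack in the capacity exactly absorbs the $n$ units of inflation that any balanced assignment introduces. Both match up by design, so no additional machinery is needed.
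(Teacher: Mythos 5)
Your construction is identical to the paper's: inflate each item by $1$, add $n(k-1)$ filler items of size $1$, set $B' = B+n$, and the same arithmetic verifies the parameter bookkeeping, tightness, and both directions of correctness. The proposal is correct and takes essentially the same approach as the paper.
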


\begin{proof}
	Given an instance of \binpacking,	
	we obtain an instance of \balancedbinpacking by increasing the size of each item by~$1$, setting the new bin capacity to be $B' = B+n$, and adding $n (k-1)$ new items of size~$1$.
	Observe that all items of size~$1$ are ``new'' items.
	It is also clear that $a'_{\max} = a_{\max}+1$.
	
	To show that we preserve feasibility of instances, take any solution of the \binpacking instance and add new items of size zero such that each bin contains precisely $n$ items.
	Now if we increase the size of each item by $1$ (including the new items of size zero) and the size of each bin by~$n$,
	we have obtained a feasible instance of the newly constructed \balancedbinpacking instance.
	
	For the other direction,
	assume for the sake of contradiction that the \balancedbinpacking instance has a solution, but the original \binpacking instance does not.
	Consider a solution of \balancedbinpacking, subtract $1$ from the size of each item and $n$ from the capacity of each bin---note that there are $n$ items per bin---and remove items of size zero.
	This is a solution to the instance of \binpacking---a contradiction.
	
	Regarding tightness, note that the sum of item sizes has increased by exactly $nk$ because we have increased the size by $1$ for $n$ ``old'' items, and added $n(k-1)$ ``new'' items of size 1.
	Hence, if the total size of items of the original instance was $kB$, it became $kB + nk = k(B+n)$, and since $B'=B+n$ is the new bin capacity, the \balancedbinpacking instance is tight iff the \binpacking instance was.
\end{proof}

\begin{corollary}
	\balancedbinpacking is \NPh, even for tight instances.
\end{corollary}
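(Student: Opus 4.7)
The plan is to combine Lemma~\ref{lem:reductionBPtoBBP} with a classical NP-hardness result for tight \binpacking instances. Specifically, the \textsc{3-Partition} problem---given $3k$ positive integers summing to $kB$, each strictly between $B/4$ and $B/2$, decide whether they can be partitioned into $k$ triples each summing to $B$---is (strongly) \NPh. Such an instance is, by definition, an instance of \binpacking with $n = 3k$ items and $k$ bins of capacity $B$ in which the total item size equals $kB$. Therefore this gives \NPh already on the class of tight \binpacking instances.

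Applying Lemma~\ref{lem:reductionBPtoBBP} to a tight \binpacking instance produces a \balancedbinpacking instance that is again tight by item~5) of the lemma, and that is a yes-instance if and only if the original instance is (by the correctness of the reduction). Since the construction in Lemma~\ref{lem:reductionBPtoBBP} only inflates the instance by a factor of $k$ (adding $n(k-1)$ items of size $1$ and shifting all sizes and the capacity), the reduction is clearly polynomial-time. Hence \balancedbinpacking is \NPh even when restricted to tight instances.

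I do not foresee any real obstacle here: the corollary is a one-line consequence of Lemma~\ref{lem:reductionBPtoBBP} combined with the folklore observation that \textsc{3-Partition} yields tight \binpacking. The only point worth a quick sanity check is that all five properties preserved by the lemma---in particular tightness---are actually used or inherited correctly, which is immediate from the proof of the lemma.
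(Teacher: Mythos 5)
Your proof is correct and matches the paper's (implicit) argument: the corollary is stated as an immediate consequence of Lemma~\ref{lem:reductionBPtoBBP}, relying on the standard fact that \binpacking is already \NPh on tight instances (via \textsc{3-Partition} or \textsc{Partition}), which your write-up simply makes explicit. No issues.
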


\begin{corollary}
	\label{cor:unarybalancedbinpacking-is-W1-hard}
	\unarybalancedbinpacking is \Wh{1} parameterized by the number of bins, even for tight instances.
\end{corollary}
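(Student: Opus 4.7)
The plan is to compose two reductions. The starting point is the result of Jansen et al.~\cite{JansenKMS:2013} that \unarybinpacking is \Wh{1} when parameterized by the number of bins $k$. The target is tight \unarybalancedbinpacking with the same parameter; the bridge is Lemma~\ref{lem:reductionBPtoBBP}, which already preserves $k$, only mildly inflates $a_{\max}$ and $n$, and crucially preserves tightness. What is needed on top is to first ensure that the source instance itself is tight.

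First, I would show that one may assume without loss of generality that the hard \unarybinpacking instance is tight. Given any instance $I$ with $k$ bins of capacity $B$ and items of total size $S=\sum_{i}a_i \le kB$, construct $I'$ by adding $kB-S$ new items of size $1$. Because $I$ is unary, both $B$ and $n$ are polynomial in $|I|$, hence so is $kB-S$, and the padded instance is still unary with the same parameter $k$. Feasibility is preserved in both directions: any packing of $I$ leaves total slack exactly $kB-S$, which trivially absorbs that many unit items, and conversely removing the new unit items from any packing of $I'$ yields a packing of $I$. Thus \unarybinpacking is \Wh{1} parameterized by $k$ even on tight instances.

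Second, I would apply Lemma~\ref{lem:reductionBPtoBBP} to such a tight instance $I'$ to obtain a \balancedbinpacking instance $I''$. The lemma guarantees $k''=k$, $a''_{\max}=a'_{\max}+1$, $n''=n'k$, and preservation of tightness. Since $a'_{\max}$ is polynomial in $|I'|$ and $n''$ is at most $n'k''$, the instance $I''$ is again encoded in unary, and its parameter equals the parameter of $I$. Composing the two steps yields a parameterized reduction from \unarybinpacking (parameterized by $k$) to tight \unarybalancedbinpacking (parameterized by $k''=k$), which transfers \Whness{1}.

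The only delicate point, and the main thing to get right, is the tightening step: one must verify that unariness survives the padding even though $B$ could a priori be large. This is immediate here because unary encoding of item sizes combined with a polynomial number of items bounds $B$ polynomially in $|I|$. Everything else is a direct invocation of Lemma~\ref{lem:reductionBPtoBBP}, whose quantitative guarantees were precisely designed for this composition.
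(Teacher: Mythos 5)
Your proposal is correct and follows essentially the same route as the paper, which obtains the corollary by composing the \Whness{1} of \unarybinpacking from Jansen et al.\ with Lemma~\ref{lem:reductionBPtoBBP}; your explicit padding step with unit items, forcing tightness of the source instance, is a valid way to make that composition self-contained rather than relying on tightness of the Jansen et al.\ construction. One small imprecision: unary item sizes do not by themselves bound $B$ polynomially --- rather, one observes that if $B \ge \sum_i a_i$ the instance is trivially feasible, so one may assume $B < \sum_i a_i \le n\, a_{\max}$, which is polynomial in $|I|$.
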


\subsection{Hardness of $Q||C_{\max}$ and $R||C_{\max}$}
\label{sec:hardnessX||Cmax}

Let us describe our hard instance $I$.
Given a tight instance of \balancedbinpacking with $k$ bins of capacity~$B$ and $m$~items,
all items sum up to $\sum_{i \in [m]} a_i = k \cdot B =: A$.
We construct a $Q|HM|C_{\max}$ instance with $m$ machines and $3 k$ job types.

The high level idea is as follows.
We use machine~$M_i$ to encode the assignment of item $a_i$ to a bin, so we have $m$ machines.
We have job types $\alpha^1_j, \alpha^0_j$ (we will refer to both of them as $\alpha^{\times}_j$), and~$\beta_j$ for $j \in [k]$; we refer to a job of type $\alpha^\times_j$ for any $j$ as a job of type $\alpha$ or an $\alpha$-type job, and similarly for $\beta$.
For the sake of simplicity, we sometimes do not distinguish between a job and a job type, e.g., by executing $\alpha^\times_j$ we mean executing a job of type $\alpha^\times_j$.

Our goal is to ensure that a specific schedule, which we call henceforth \emph{perfect}, is optimal.
In a perfect schedule, $M_i$ gets precisely $a_i$ times a job of type~$\alpha^1_j$, $A - a_i$ times a job of type~$\alpha^0_j$ and once a job of type~$\beta_j$ for some $j \in [k]$.
There is no other job on~$M_i$.
This corresponds to putting $a_i$ to the $j$-th bin.
Hence, for each $j \in [k]$, there are $m/k$ machines\footnote{Which is an integer by the fact that any \balancedbinpacking instance must have a number of items divisible by $k$ in order to be feasible.} where only jobs of types~$\alpha^1_j$, $\alpha^0_j$ and $\beta_j$ appear together
and they represent a packing of the corresponding items to the~$j$-th bin.

Let us specify the parameters of $I$.
The target makespan is $T = 3 k A^3$; note that we will show that the feasible schedules are precisely the perfect schedules and they have the property that each machine finishes \emph{exactly} at time~$T$.
Jobs of type~$\beta$ are by far the largest on all machines.
We set, for $j \in [k]$,
\begin{equation*}
p_{\alpha^1_j} = k A^2 + A (k - j) + 1 \, , \qquad
p_{\alpha^0_j} = k A^2 + A (k - j) \, , \qquad
p_{\beta_j} = 2 k A^3 - A^2 (k-j) \, ;
\end{equation*}
note that as $j$ increases, so does $p_{\beta_j}$.
Complementary to $p_{\beta_j}$, as $j$ increases, $p_{\alpha^{\times}_j}$ decreases.
To show hardness of~$Q||C_{\max}$, we give each machine~$M_i$ a specific speed depending on~$a_i$.
The \emph{unscaled load of a machine $M_i$}, denoted $\bar{L}_i$, is the sum of sizes of jobs assigned to $M_i$ before speed scaling.
In a perfect schedule, it is
\begin{align}
\bar{L}_i^* &= a_i (k A^2 + A (k - j) + 1) + (A - a_i) (k A^2 + A (k - j)) + 2 k A^3 - A^2 (k-j) \nonumber \\
&= A (k A^2 + A (k - j)) + a_i + 2 k A^3 - A^2 (k-j)
= 3 k A^3 + a_i = T + a_i \, .
\label{eq:perfectSchedule}
\end{align}
The machine speed~$s_i$ of machine~$M_i$ is
\begin{equation*}
s_i = \frac{T + a_i}{T} = \frac{3 k A^3 + a_i}{3 k A^3} \, .
\end{equation*}
Observe that in a perfect schedule each machine~$M_i$ finishes exactly by time
\begin{equation}
\label{eq:machineLoadPerfectSchedule}
\frac{\bar{L}_i^*}{s_i} = \frac{T + a_i}{\frac{T + a_i}{T}} = T = 3 k A^3 \, .
\end{equation}
The sizes of jobs of type $\alpha^1_j$ and $\alpha^0_j$ are almost identical, except jobs of type~$\alpha^1_j$ are slightly longer.
For each $j \in [k]$, we have job multiplicities
\begin{equation*}
n_{\alpha^1_j} = \frac{A}{k} = B, \qquad n_{\alpha^0_j} = \frac{A m}{k} - B =\frac{(m - 1) A}{k}, \qquad n_{\beta_j} = \frac{m}{k} \enspace .
\end{equation*}

\begin{lemma}
	\label{lem:reductionBBPtoQCmax}
	\balancedbinpacking with tight instances reduces to $Q|HM|C_{\max}$ such that
	\begin{enumerate*}
		\item the number of machines equals the number of items,
		\item the number of job types equals $3k$, where $k$ is the number of bins,
		\item the job sizes and job multiplicities are bounded by $\Oh(A^4)$, where $A$ is the sum of all items of the input instance,\label{lem:reductionBBPtoQCmax-item:numbersPolynomial}
		\item the machine speeds are rational numbers with numerator and denominator in~$\Oh(A^4)$, and
		\item the feasible schedules are precisely perfect schedules, in which all machines finish exactly at time~$T = 3 k A^3$.
		\label{lem:reductionBBPtoQCmax-item:allMachinesReachMakespan}
	\end{enumerate*}
\end{lemma}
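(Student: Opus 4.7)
The plan is to verify items~1--4 by direct inspection of the construction preceding the lemma, and to devote the bulk of the argument to item~5. There are $m$ machines and $3k$ job types; every $p_{\alpha^\times_j}$, $p_{\beta_j}$, $n_{\alpha^\times_j}$, $n_{\beta_j}$ fits in $\Oh(kA^3)=\Oh(A^4)$ (using $k\le A$), and each speed $s_i=(3kA^3+a_i)/(3kA^3)$ is a ratio of two integers of the same magnitude. The feasibility of any perfect schedule is already shown by~\eqref{eq:machineLoadPerfectSchedule}.

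The heart of the proof is the converse of item~5. I would first observe that the instance is tight: expanding $\sum_{j=1}^{k}(n_{\alpha^1_j}p_{\alpha^1_j}+n_{\alpha^0_j}p_{\alpha^0_j}+n_{\beta_j}p_{\beta_j})$ yields $3kmA^3+A$, which matches the total scaled capacity $\sum_i s_i T=mT+A$. Hence any feasible schedule saturates every machine, i.e.\ $\bar L_i=T+a_i$ for all $i$. Since $p_{\beta_j}>kA^3$ for every $j$ while $T+a_i\le 3kA^3+A$, two $\beta$-jobs on one machine exceed the unscaled capacity; as $\sum_j n_{\beta_j}=m$ equals the number of machines, each $M_i$ carries exactly one $\beta$-job, whose index I denote $j(i)$. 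Subtracting $p_{\beta_{j(i)}}$ leaves a target $kA^3+A^2(k-j(i))+a_i$ that must be met by $\alpha$-jobs of sizes in $[kA^2,kA^2+A(k-1)+1]$, and short numerical bounds rule out both $N^i\le A-1$ and $N^i\ge A+1$ for the total number $N^i$ of $\alpha$-jobs on $M_i$, so $N^i=A$.

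The main obstacle is pinning down \emph{which} $\alpha$-jobs land on each machine. Writing $N^i_{j'}$ for the number of $\alpha$-type jobs of index $j'$ on $M_i$ and $X^i$ for the count of those that are of type $\alpha^1_\cdot$, the equality $\bar L_i=T+a_i$ collapses after cancelling the $kA^3$ and $kA^2 N^i$ contributions to
\begin{equation*}
A\sum_{j'} N^i_{j'}(k-j') + X^i = A^2(k-j(i)) + a_i.
\end{equation*}
Using $0\le X^i\le A$ and $1\le a_i\le A-1$ (which holds in our hard instances produced via Lemma~\ref{lem:reductionBPtoBBP}), comparing the two sides modulo~$A$ forces $X^i=a_i$ and $\sum_{j'} j' N^i_{j'}=A\cdot j(i)$. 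Setting $y_{j,j'}\df\sum_{i:\,j(i)=j}N^i_{j'}$, I would combine the three identities $\sum_{j'} y_{j,j'}=mA/k$ (from $N^i=A$ and $m_j=m/k$), $\sum_{j'} j' y_{j,j'}=jmA/k$, and $\sum_j y_{j,j'}=n_{\alpha^1_{j'}}+n_{\alpha^0_{j'}}=mA/k$ in an induction on $j=1,\ldots,k$: at step $j$ the column-sum identities for $j'<j$ already give $y_{j,j'}=0$, so $y_{j,\cdot}$ is supported on $\{j,\ldots,k\}$, and a nonnegative distribution with mean~$j$ and support in $[j,k]$ must be concentrated at $j$. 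Consequently $N^i_{j'}=A\cdot\delta_{j(i),j'}$, which together with $X^i=a_i$ yields the perfect profile $(x^i_{j(i),1},x^i_{j(i),0})=(a_i,A-a_i)$. Finally, summing $x^i_{j,1}=a_i$ over the $m/k$ machines with $j(i)=j$ gives $n_{\alpha^1_j}=B$, exhibiting a balanced packing of bin~$j$ and completing the bijection.
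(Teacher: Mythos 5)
Your argument is correct, and for the crucial converse of item~5 it takes a genuinely different route from the paper's. The paper argues combinatorially without invoking tightness until the very end: it first shows each machine carries exactly one $\beta$-job and exactly $A$ $\alpha$-jobs, then runs a \emph{downward} induction $j=k,k-1,\dots,1$ in which the machines of $\mathcal{M}_j$ are forced to absorb all jobs $\alpha^{\times}_j$ because the larger-index (smaller) $\alpha$-jobs have already been exhausted; it only obtains the inequality $X^i\le a_i$ per machine and upgrades it to equality by summing $A/k\le\sum_{M_i\in\mathcal{M}_j}a_i$ over all $j$ against the tightness identity $\sum_i a_i=A$. You instead spend tightness up front: the total job volume $3kmA^3+A$ equals the total capacity $mT+A$, so every machine is \emph{exactly} saturated, which turns the per-machine capacity bound into an equation; the mod-$A$ reading of that equation pins down $X^i=a_i$ immediately, and the aggregated mean/concentration induction on the matrix $y_{j,j'}$ (run \emph{upward} in $j$) replaces the paper's exhaustion argument. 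Your version makes explicit exactly where tightness and the normalization $1\le a_i\le A-1$ enter (the paper uses $a_{\max}<A$ implicitly as well, so this is a fair assumption, though you should note it can be ensured w.l.o.g.\ for arbitrary tight instances, not only those produced by Lemma~\ref{lem:reductionBPtoBBP}). It also sidesteps a subtlety in the paper's phrasing: the claim that a machine of $\mathcal{M}_j$ ``cannot use a job $\alpha_{j'}$ with $j'<j$'' is, in isolation, only a statement about the average index of the $\alpha$-jobs on that machine (a large $\alpha_{j'}$ could be compensated by a small one) and is valid only inside the inductive context where the compensating jobs are already used up; your concentration argument handles this cleanly. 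The price is a slightly heavier bookkeeping with the quantities $N^i_{j'}$, $X^i$, $y_{j,j'}$, but no step is missing.
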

\begin{proof}
	Clearly, all involved numbers are in $\Oh(A^4)$ (w.l.o.g.\ we assume $k, m \in \Oh(A)$).
	The other parameters are clear from the description of the hard instance~$I$ above.
	It remains to prove the correctness of our reduction.
	On the one hand, if there is a solution~$\mathcal{S}$ of the corresponding instance of \balancedbinpacking,
	we construct a (feasible) perfect schedule for~$I$ as follows.
	If, in~$\mathcal{S}$, $a_i$ is assigned to the $j$-th bin,
	to machine~$M_i$ we assign $a_i$ jobs of type~$\alpha^1_j$, $A - a_i$ jobs of type~$\alpha^0_j$, and one job of type~$\beta_j$.
	According to equations~\eqref{eq:perfectSchedule} and~\eqref{eq:machineLoadPerfectSchedule}, this assignment has makespan~$T$ and, clearly, all jobs are assigned to some machine.
	
	On the other hand, assume that~$I$ is feasible, meaning there is an assignment of jobs to machines not exceeding the target makespan~$T$.
	Let us analyze the structure of such a schedule~$\sigma$.
	First we observe that instead of considering for a machine~$M_i$ the makespan~$T$, which is the sum of jobs lengths divided by its speed~$s_i$, we can equivalently consider $T \cdot s_i = T + a_i$ as its capacity---this is the sum of (unscaled) jobs lengths it can process.
	Per machine, there is exactly one job of type~$\beta_j$ for some $j \in [k]$, since we can execute at most one $\beta$-type job on each machine and we have to place $m$ such jobs onto $m$ machines.
	So each machine is in one set~$\mathcal{M}_j$, where $\mathcal{M}_j$ is a set of $m/k$ machines that process a job of type~$\beta_j$.
	Having scheduled a job of type~$\beta_j$ to a machine, we can execute on this machine at most $A$ jobs of type $\alpha^{\times}_{j'}$ for any $j'$.
	In particular, observe that even on a machine that executes $\beta_1$, which is the smallest of the $\beta$-type jobs, we cannot add $A+1$ jobs of type $\alpha^0_{k}$, which is the smallest of the $\alpha^{\times}_j$ job types, without exceeding $T + \max_i a_i$.
	
	For each $j \in [k]$, there are $Am/k$ jobs of type~$\alpha^{\times}_{j}$.
	Thus, there are exactly $A$ $\alpha$-type jobs on each machine from~$\mathcal{M}_j$.
	Observe that on a machine from~$\mathcal{M}_j$, we cannot use a job $\alpha_{j'}$, where $j' < j$, as this would exceed~$T + a_i$.
	Therefore, we have to execute $A$ jobs of type~$\alpha^{\times}_{k}$ on each machine from~$\mathcal{M}_{k}$.
	Thus, all jobs of type~$\alpha^{\times}_{k}$ have to be executed by machines in $\mathcal{M}_{k}$.
	Consequently, we have to execute $A$ jobs of type~$\alpha^{\times}_{k-1}$ on each machine of $\mathcal{M}_{k-1}$ since there are no more jobs of type~$\alpha^{\times}_{k}$ available.
	This argument inductively propagates for all $j=k, k-1, k-2, \dots, 1$.
	Hence, on each machine the remaining space is at most\footnote{This maximum can only be reached if there are $A$ jobs of type~$\alpha^0_{j}$ and no jobs of type~$\alpha^1_{j}$ on a machine.} $a_{\max} < A < p_t$ for any job type $t$, so no other job can be scheduled.
	Consider the sizes of the jobs that have to be executed on a machine.
	There can be at most $a_i$ jobs of type $\alpha^1_j$ on each machine~$M_i$.
	Hence we have, for each $j \in [k]$,
	\begin{equation}
	\label{eq:a-over-k-le-sum}
	A/k \le \sum_{M_i \in \mathcal{M}_j} a_i
	\end{equation}
	because all $A/k$ jobs of type~$\alpha^1_j$ are assigned to machines of~$\mathcal{M}_j$.
	Moreover, we have 
	\begin{equation*}
	\sum_{j \in [k]} \sum_{M_i \in \mathcal{M}_j} a_i = A \enspace .
	\end{equation*}
	So if there was a $j \in [k]$ with $A/k < \sum_{M_i \in \mathcal{M}_j} a_i$, then there would be a $j' \in [k]$  with $A/k > \sum_{M_i \in \mathcal{M}_{j'}} a_i$.
	Since this would contradict 
	Equation~(\ref{eq:a-over-k-le-sum}),
	we have
	\begin{equation*}
	\sum_{M_i \in \mathcal{M}_j} a_i = \frac{A}{k} = B
	\end{equation*}
	and $a_i$ jobs of type~$\alpha^1_j$ on each $M_i \in \mathcal{M}_j$ for each $j \in [k]$.
	Hence, $\sigma$ is perfect and the sets $\{a_i \mid M_i \in \mathcal{M}_j\}$ for each $j \in [k]$ are a solution for the corresponding instance of \balancedbinpacking.
\end{proof}

We can easily adjust our hardness instance~$I$ of $Q|HM|C_{\max}$ to an instance~$I_R$ of $R|HM|C_{\max}$.
Instead of machine speeds depending, for machine~$M_i$, on~$a_i$, we will use a larger makespan~$T_R$ to host a new ``blocker'' job type~$\gamma$, whose length is machine-dependent, and leaves space~$T + a_i$ on each machine---previously the capacity on a machine with speed~$s_i$.

\ifproc
\begin{lemma}[\appmark]
\else
\begin{lemma}
\fi
	\label{lem:reductionBBPtoRCmax}
	\balancedbinpacking with tight instances reduces to $R|HM|C_{\max}$ such that
	\begin{enumerate*}
		\item the number of machines equals the number of items,
		\item the number of job types equals $3k + 1$, where $k$ is the number of bins,
		\item the job sizes and job multiplicities are bounded by $\Oh(A^4)$, where $A$ is the sum of all items of the \balancedbinpacking instance,\label{lem:reductionBBPtoRCmax-item:numbersPolynomial}
		\item in any feasible schedule, all machines finish precisely by time~$T_R = 7 k A^3$, and
		\label{lem:reductionBBPtoRCmax-item:allMachinesReachMakespan}
		\item the job sizes matrix $\vep$ has rank $2$.
	\end{enumerate*}
\end{lemma}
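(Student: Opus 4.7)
The plan is to transfer the $Q|HM|C_{\max}$ construction of Lemma~\ref{lem:reductionBBPtoQCmax} to the unrelated-machines setting by replacing the machine-dependent \emph{speeds} with a single machine-dependent \emph{blocker} job type. Fix the same $3k$ job types $\alpha^1_j, \alpha^0_j, \beta_j$ (with identical sizes and multiplicities as before, now independent of the machine index $i$), and introduce one additional job type $\gamma$ with multiplicity $n_\gamma = m$ and sizes
\begin{equation*}
p^i_\gamma \df 4 k A^3 - a_i \qquad \forall i \in [m].
\end{equation*}
Set the target makespan to $T_R \df 7 k A^3 = T + 4kA^3$. The key design feature is that after scheduling exactly one blocker on machine~$M_i$, the residual capacity is $T_R - p^i_\gamma = 3kA^3 + a_i = T + a_i$, which matches the effective capacity a speed-$s_i$ machine had in the $Q|HM|C_{\max}$ instance.

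First I would show that in any feasible schedule, each machine carries exactly one $\gamma$-job. Two blockers on one machine would already require $2(4kA^3 - a_{\max}) > 7kA^3 = T_R$ capacity (since $a_{\max} < A \ll kA^3$), so no machine gets more than one blocker; as there are $m$ blockers and $m$ machines, by pigeonhole each gets exactly one. After removing the blocker, the remaining slack on $M_i$ is precisely $T + a_i$ and the remaining job types have sizes independent of $i$, so the feasibility analysis of Lemma~\ref{lem:reductionBBPtoQCmax} (in terms of the unscaled load interpretation, treating $T+a_i$ as capacity) applies \emph{verbatim}: the only feasible schedules are perfect, and in a perfect schedule every machine finishes exactly at time $T_R$. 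Conversely, a solution to the \balancedbinpacking instance yields a perfect schedule by assigning, to each $M_i$ with $a_i$ placed in the $j$-th bin, one blocker, $a_i$ jobs of type $\alpha^1_j$, $A-a_i$ jobs of type $\alpha^0_j$, and one job of type $\beta_j$, which loads $M_i$ exactly to $T_R$.

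The remaining items of the statement are immediate bookkeeping: the number of machines equals $m$, the number of job types is $3k+1$, all sizes and multiplicities are in $\Oh(A^4)$ since the only new numbers are $p^i_\gamma \leq 4kA^3$ and $n_\gamma = m$. For the rank bound, every column $\vep^i$ of the $(3k+1) \times m$ matrix $\vep$ agrees in its first $3k$ coordinates (call this common vector $\veu' \in \N^{3k}$) and has $4kA^3 - a_i$ in the last coordinate. Hence
\begin{equation*}
\vep = \begin{pmatrix} \veu' \\ 4kA^3 \end{pmatrix} \veone^{\intercal} + \begin{pmatrix} \vezero \\ -1 \end{pmatrix} \vea^{\intercal},
\end{equation*}
where $\vea = (a_1, \dots, a_m)^{\intercal}$ and $\veone \in \R^m$, exhibiting $\vep$ as a product of a $(3k+1) \times 2$ and a $2 \times m$ matrix; rank is exactly $2$ whenever the $a_i$ are not all equal.

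The only mildly delicate point is verifying that forcing exactly one blocker per machine is unavoidable and that, once the blocker is placed, the inductive argument of Lemma~\ref{lem:reductionBBPtoQCmax} (propagating from $j=k$ down to $j=1$ and deducing $\sum_{M_i \in \mathcal{M}_j} a_i = B$) goes through unchanged. This reduces to checking the numerical inequalities $2(4kA^3 - a_{\max}) > 7kA^3$ and $4kA^3 - a_{\max} > T + a_{\max} - A$ (ensuring no spurious schedule can omit a blocker and compensate with other job types), both of which follow from $a_{\max} < A \ll kA^3$.
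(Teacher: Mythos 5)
Your proposal is correct and follows essentially the same route as the paper's proof: the same blocker type $\gamma$ with $p^i_\gamma = 4kA^3 - a_i$, the same pigeonhole argument forcing exactly one $\gamma$-job per machine, the same reduction of the residual capacity to $T + a_i$ so that the analysis of Lemma~\ref{lem:reductionBBPtoQCmax} applies verbatim, and an equivalent rank-$2$ factorization (your sum of two rank-one matrices is exactly the paper's $C \cdot D$ decomposition). The extra inequality you flag for ruling out blocker-free machines is subsumed by the pigeonhole count and is not needed.
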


\ifproc\else
\begin{proof}
	In the new hardness instance~$I_R$ for~$R|HM|C_{\max}$, we use the same job types with the same lengths and multiplicities as in~$I$, which is our hardness instance for~$Q|HM|C_{\max}$.
	We introduce a new job type~$\gamma$ with
	\begin{equation*}
	p^i_{\gamma} = 4 k A^3 - a_i, \qquad n_{\gamma} = m \enspace .
	\end{equation*}
	Observe that~$\gamma$ is the only job type that is machine-dependent.
	However its variation between machines is only~$- a_i$, which is relatively small compared to its total length.
	A perfect schedule for~$I_R$ is as a perfect schedule for~$I$, but with an additional job of type~$\gamma$ assigned once to each machine.
	Again, the parameters are clear from the definition of~$I_R$ and we prove the correctness next.
	
	On the one hand, if there is a solution~$\mathcal{S}$ of the corresponding instance of \balancedbinpacking,
	we construct a perfect scheduling for~$I$ as follows.
	If, in~$\mathcal{S}$, $a_i$ is assigned to the $j$-th bin,
	we assign to machine~$M_i$ $a_i$ jobs of type~$\alpha^1_j$, $A - a_i$ jobs of type~$\alpha^0_j$, one job of type~$\beta_j$, and one job of type~$\gamma$.
	This assignment has makespan~$T_R$ and, clearly, all jobs are assigned to some machine.
	
	On the one hand, assume that $I_R$ instance is feasible, meaning there is a schedule~$\sigma$ not exceeding the target makespan~$T_R$.
	Again, let us analyze the structure of such a solution.
	Per machine, there is exactly one job of type~$\gamma$ since we can execute at most one such job on each machine.
	The space remaining on machine~$M_i$ after executing a job of type~$\gamma$ is
	\begin{equation*}
	T_R - p^i_{\gamma} = 7 k A^3 - (4 k A^3 - a_i) = 3 k A^3 + a_i \enspace .
	\end{equation*}
	This is precisely the capacity of machine~$M_i$ in~$I$ as described in the proof of Lemma~\ref{lem:reductionBBPtoQCmax}.
	After scheduling all jobs of type~$\gamma$ there are also the same job types with the same lengths and multiplicities remaining.
	Thus, the rest of the analysis is the same.
	
	It remains to show that the rank of the job sizes matrix $\vep$ is 2.
	Define a matrix $C$ whose rows are indexed by the job types as follows.
	The row for job type $t \in \{ \alpha_j^0, \alpha_j^1, \beta_j \}$ (for every $j \in [k]$) is $(p_{t},0)$, and the row for $\gamma$ is $(4kA^3, -1)$.
	Next, define a matrix $D$ whose columns are indexed by the machines as follows: column $i \in [m]$ is $(1, a_i)$.
	It is easy to verify that $C \cdot D = \vep$.
\end{proof}
\fi

Applying the reductions of Lemmas~\ref{lem:reductionBBPtoQCmax} and~\ref{lem:reductionBBPtoRCmax} to \balancedbinpacking with $2$ bins,
we have that $Q|HM|C_{\max}$ and $R|HM|C_{\max}$ are \NPh with 6 and 7 job types, respectively.
$R|HM|C_{\max}$ can be reduced to 4 job types, and similar ideas can be used to improve the previously described reduction to only require $3k - 2$ job types.
\begin{reptheorem}{thm:QCmaxNPhard}
	$Q|HM|C_{\max}$ is \NPh already with 6 job types.
\end{reptheorem}

\ifproc
\begin{theorem}[\appmark]
\else
\begin{theorem}
\fi
	\label{thm:RCmaxNPhard}
	$R|HM|C_{\max}$ is \NPh already with 4 job types and with $\vep$ of rank $2$.
\end{theorem}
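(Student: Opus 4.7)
The plan is to improve the reduction of Lemma~\ref{lem:reductionBBPtoRCmax} from $3k+1 = 7$ job types (for $k=2$) down to $4$ by merging and absorbing job types, while exploiting that $\rank(\vep) = 2$ permits more than one machine-dependent type as long as the $a_i$-dependence lies in a two-dimensional subspace. The reduction starts from \balancedbinpacking with $k=2$ bins, which is \NPh by Lemma~\ref{lem:reductionBPtoBBP}.

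Two transformations reduce the type count from $7$ to $4$: first, merge $\alpha^0_1$ and $\alpha^0_2$ into a single common padding type $\alpha^0$, saving one type since these play no bin-specific role; second, replace the two machine-independent bin indicators $\beta_1, \beta_2$ and the single blocker $\gamma$ by two machine-dependent blocker-cum-indicators $\gamma_1, \gamma_2$ with sizes $p^i_{\gamma_j} = c_j - a_i P_j$ that linearly encode bin $j$'s per-item capacity requirement, saving two more types. Each column of $\vep$ then lies in the $2$-dimensional span $\{\vev + a_i \vew : i \in [m]\}$ for fixed $\vev, \vew \in \R^4$, so $\rank(\vep) = 2$, verified by an explicit decomposition $\vep = C D$ with $D$ having rows $(1, \ldots, 1)$ and $(a_1, \ldots, a_m)$.

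The sizes of the machine-independent types are designed using two distinct primes $p, q$ with $p < q$ and $q > A p$, so that per-machine divisibility arguments (as in Lemma~\ref{lem:reductionBBPtoRCmax}) force, on any machine hosting $\gamma_{j(i)}$, the $\alpha$-type counts to a ``pure'' bin-$j(i)$ configuration: $a_i$ jobs of $\alpha_{j(i)}$ and $A - a_i$ jobs of $\alpha^0$. The target makespan $T_R$ and the multiplicities are then tuned so that (i)~each machine must host exactly one blocker by a capacity pigeonhole argument, (ii)~per-machine divisibility modulo $p$ and $q$ forces pure-bin configurations, and (iii)~a global load-counting argument over all $m$ machines converts the per-machine structure into the balanced-partition condition $\sum_{i \in S_j} a_i = B$ for $j \in \{1,2\}$, where $S_j$ denotes the set of machines hosting $\gamma_j$.

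The main obstacle is step~(iii): the asymmetry introduced by $p \neq q$ means that a per-machine analysis alone yields only a one-sided bound such as $\sum_{i \in S_1} a_i \leq B$, which is insufficient to conclude equality and thus leaves room for partitions that are balanced only in count, not in sum. This is resolved by carefully choosing the multiplicity of the common padding type $\alpha^0$ so that the total sum of job sizes equals $m T_R$ exactly, forcing every feasible schedule to be globally tight; the tightness condition, combined with the prime-based divisibility of step~(ii) and the blocker multiplicities $n_{\gamma_j} = m/2$ that enforce $|S_j| = m/2$, then pins down the full balanced-partition condition and completes the correctness argument. The final statement follows by applying the reduction to the partition case of \balancedbinpacking together with the rank-$2$ verification.
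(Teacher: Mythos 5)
There is a concrete gap in your type count: the construction you describe uses \emph{five} job types, not four. After merging $\alpha^0_1$ and $\alpha^0_2$ into a single padding type $\alpha^0$ and replacing $\beta_1,\beta_2,\gamma$ by two blockers $\gamma_1,\gamma_2$, your instance contains $\alpha_1$, $\alpha_2$, $\alpha^0$, $\gamma_1$, $\gamma_2$ --- replacing three types by two saves only one type, not ``two more,'' so you arrive at $7-1-1=5$. Your own description of a feasible machine configuration (``one $\gamma_{j(i)}$, $a_i$ jobs of $\alpha_{j(i)}$, and $A-a_i$ jobs of $\alpha^0$'') confirms five distinct types are in play, and this contradicts your claim that the columns of $\vep$ live in $\R^4$. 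A symmetric construction in which each of the two bins gets its own indicator type \emph{and} its own blocker, plus a shared padding type, cannot get below five.

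The paper's proof reaches four types precisely by treating the two bins \emph{asymmetrically}. Bin~$1$ is realized by a ``filling'' gadget with no blocker at all: two machine-dependent types $p^i_{\alpha^1_1}=A^3+A-a_i$ and $p^i_{\alpha^0_1}=A^3-a_i$, so that $a_i$ copies of the first and $A-a_i$ copies of the second exactly fill the makespan $T=A^4$. Bin~$2$ is realized by a ``blocker'' gadget with no padding type: a machine-dependent blocker $p^i_{\beta_2}=A^4-a_iA^2$ whose leftover space $a_iA^2$ must be filled by exactly $a_i$ copies of a single machine-independent type $p^i_{\alpha^1_2}=A^2$. The counting argument then shows all $Am/2$ jobs of type $\alpha^{\times}_1$ must land on the $m/2$ machines hosting no $\beta_2$, forcing $\sum_{\mathcal{M}_j}a_i=A/2$ for both $j$; the rank-$2$ decomposition uses rows $(A^3+A,-1)$, $(A^3,-1)$, $(A^2,0)$, $(A^4,-A^2)$ against columns $(1,a_i)$. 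No prime-based divisibility is needed. If you want to salvage your route, you would have to find a further merge --- and breaking the bin symmetry as above is the natural way to do it; your global-tightness idea for step~(iii) is sound in spirit but does not address the missing type.
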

\ifproc\else
	\begin{proof}[Proof of Theorem~\ref{thm:RCmaxNPhard}]
		We will modify the reduction described in Lemma~\ref{lem:reductionBBPtoRCmax} to use only 4 types of jobs if the number of bins $k = 2$.
		First, we remove the job type~$\gamma$ to get to 6 different types of jobs.
		Recall that~$p^i_{\gamma} = 4 k A^3 - a_i$.
		For the $4 k A^3$, we will account for when adjusting the makespan and we add the $- a_i$ to the $\beta$-type jobs (now $p^i_{\beta_j} = 2 k A^3 - A^2 (k-j) - a_i$).
		Second, we blow up the makespan by a factor of~$A/(7k)$.
		So we have~$T = A^4$.
		
		To reduce to 5 different types of jobs, we remove all jobs of type~$\beta_1$.
		Still, we want $A$ times a job of type~$\alpha^{\times}_{1}$ on every machine of~$\mathcal{M}_1$.
		So its size will be around~$A^3$.
		To distinguish between~$\alpha^{1}_{1}$ and~$\alpha^{0}_{1}$ and get a dependency of machine~$M_i$ on item~$a_i$, we add~$A - a_i$ and subtract~$a_i$, respectively.
		So, for $i \in [m]$, we have
		\begin{align}
		&p^i_{\alpha^1_1} = A^3 + A - a_i &\textrm{ and }&
		&p^i_{\alpha^0_1} = A^3 - a_i \enspace .
		\end{align}
		As in the previous reduction, we can fit $a_i$ times $p^i_{\alpha^1_1}$ and $A - a_i$ times $p^i_{\alpha^0_1}$ to a machine, which then needs precisely the makespan~$T$.
		
		To reduce to 4 different types of jobs, we remove all jobs of type~$\alpha^0_2$ and we change the length of~$\alpha^1_2$ to
		\begin{align}
		p^i_{\alpha^1_2} = A^2
		\end{align}
		for all $i \in [m]$.
		We lengthen the job of type~$\beta_2$ to
		\begin{align}
		p^i_{\beta_2} = A^4 - a_i A^2 \, ,
		\end{align}
		which is the makespan~$T$ minus $a_i$ times $p^i_{\alpha^1_2}$.
		Note that the rank of $\vep$ is still just $2$: the rows of $C$ are $(A^3 + A,-1)$ for $\alpha_1^1$, $(A^3,-1)$ for $\alpha_1^0$, $(A^2,0)$ for $\alpha_2^1$, and $(A^4, -A^2)$ for $\beta_2$, and $D$ is defined as before.
		
		It remains to show the correctness of this reduction.
		Clearly, if there is a solution to the instance of \balancedbinpacking with 2 bins (i.e. a partition), we can assign the jobs to the machines as in the perfect schedule from Lemma~\ref{lem:reductionBBPtoQCmax} ignoring~$\beta_1$ and~$\alpha^0_2$.
		
		Assume there is a solution of the obtained instance of~$R|HM|C_{\max}$.
		On half of the machines, there is a job of type~$\beta_2$.
		On these machines, namely~$\mathcal{M}_2$, there is no space for a job of type~$\alpha^{\times}_{1}$.
		So, all $Am/2$ jobs of type~$\alpha^{\times}_{1}$ are scheduled to the~$m/2$ machines of~$\mathcal{M}_1$.
		As there cannot be more than~$A$ jobs of type~$\alpha^{\times}_{1}$ on a machine, there are precisely $A$ jobs of type~$\alpha^{\times}_{1}$ on each machine of~$\mathcal{M}_1$---at most~$a_i$ of which can be~$\alpha^{1}_{1}$.
		Thus, the free space on such a machine is at most~$a_{\max} < A$, so there is no job of type~$\alpha^1_2$ on these machines.
		To schedule all $A/2$ jobs of type~$\alpha^1_j$ for $j \in [2]$,
		we have to choose $\mathcal{M}_j$ such that the corresponding item sizes in the \balancedbinpacking instance sum up to at least~$A/2$.
		As the total sum of items is~$A$, both partitions correspond to items summing up to precisely~$A/2$.
		This yields a equal partition of the items.
	\end{proof}
\fi
The complexity of $Q|HM|C_{\max}$ ($R|HM|C_{\max}$) with less than $6$ ($4$) job types remains open.

From Lemmas~\ref{lem:reductionBBPtoQCmax} and~\ref{lem:reductionBBPtoRCmax} and the hardness of Corollary~\ref{cor:unarybalancedbinpacking-is-W1-hard}, we also get our main result:
\begin{reptheorem}{thm:xcmaxwh1}
	$X||C_{\max}$ is \Wh{1} parameterized by the number of job types with
	\begin{enumerate*}
		\item $X=Q$ and $\ven$, $\vep$, and $\ves$ given in unary.
		\item $X=R$ and $\ven$ and $\vep$ given in unary and $\rank(\vep) = 2$.
	\end{enumerate*}
\end{reptheorem}

\subsection{\NP-hardness of \textsc{Cutting Stock}}

\prob{\textsc{Cutting Stock}}
{$k$ item types of sizes $\vep = (p_1, \dots, p_k) \in \N^{k}$ and multiplicities $\ven = (n_1, \dots, n_k) \in \N^k$, $m$ bin types with sizes $\ves = (s_1, \dots, s_m) \in \N^m$ and costs $\vecc = (c_1, \dots, c_m) \in \N^m$.}
{A vector $\vex = (x_1, \dots, x_m) \in \N^m$ of how many bins to buy of each size, and a packing of items to those bins, such that the total cost $\vecc \vex$ is minimized.}

The difficulty in transferring hardness from $Q|HM|C_{\max}$ to \textsc{Cutting Stock} is in enforcing that each bin type is used exactly once.


\begin{lemma}
\label{lem:cuttingstock}
$Q|HM|C_{\max}$ with $k$ job types and $m$ machines reduces to \textsc{Cutting Stock} with $k+2$ item types and $m$ bin types.
\end{lemma}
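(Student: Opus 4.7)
I will give a polynomial-time reduction that maps each machine $M_i$ of the $Q|HM|C_{\max}$ instance to a bin type of the \textsc{Cutting Stock} instance with the same unscaled capacity $C_i$, using two auxiliary ``marker'' item types to force any optimal Cutting Stock solution to buy exactly one copy of each bin. I may assume without loss of generality that the input is \emph{tight}, i.e., $\sum_j n_j p_j = \sum_i C_i$; the hard instance of Lemma~\ref{lem:reductionBBPtoQCmax} satisfies this, so tightness is free for the downstream \NP-hardness application.

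Construction: pick large parameters $M_1 \gg M_2 \gg \max(p_{\max}, C_{\max})$ with $M_1 \equiv \beta \pmod{M_2}$ for some $\beta$ coprime to $M_2$, and a scaling factor $L$ that is a large multiple of $M_2$. Create $m$ bin types with bin $i$ having capacity and cost
\[
\widehat C_i \;:=\; L C_i + i\cdot M_1 + i^2\cdot M_2.
\]
The items are the $k$ original job types scaled to sizes $L p_j$ (with multiplicities $n_j$), together with two new items: $\mu$ of size $M_1$ with multiplicity $\sum_{i=1}^m i$, and $\nu$ of size $M_2$ with multiplicity $\sum_{i=1}^m i^2$.

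Soundness is immediate: a feasible $Q|HM|C_{\max}$ schedule yields a Cutting Stock packing of cost $\sum_i \widehat C_i$ by buying one copy of each bin and placing in bin $i$ the scheduled jobs plus $i$ copies of $\mu$ and $i^2$ copies of $\nu$. For completeness, suppose a Cutting Stock solution has cost at most $\sum_i \widehat C_i$. Because cost equals bought capacity and (by tightness) total item volume equals $\sum_i \widehat C_i$ exactly, every purchased bin is completely full. Three observations then pin the packing down: (i) the threshold $M_1 > L C_i + i^2 M_2$ caps the $\mu$-count in any bin of type $i$ at $i$, and reading the zero-slack equation modulo $M_2$---where $L C_i$, $L p_j$, and $i^2 M_2$ all vanish and only $M_1 \equiv \beta$ survives, with $\gcd(\beta, M_2) = 1$---forces this count to be \emph{exactly} $i$; (ii) a global counting argument on the $\nu$-supply (using $L/M_2 \gg m^3$) then forces the $\nu$-count to be exactly $i^2$ and the original jobs to sum to exactly $L C_i$ in every bin of type $i$, which is a valid load for machine $M_i$; (iii) summing the $\mu$- and $\nu$-counts over all purchased bins yields the moment equations $\sum_i x_i\cdot i = \sum_i i$ and $\sum_i x_i\cdot i^2 = \sum_i i^2$.

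The main technical step is a \emph{rigidity lemma}: these two moment equations together with $\vex \in \Z_{\geq 0}^m$ force $\vex = (1,\ldots,1)$. Setting $y_i := x_i - 1 \geq -1$ and subtracting the equations gives $\sum_i y_i\cdot i(i-1) = 0$; a case analysis on $S := \{i : y_i = -1\}$ shows that any nonempty $S$ would force the positive part of $\vey$ to simultaneously reproduce $\sum_{i \in S} i$ and $\sum_{i \in S} i(i-1)$ using only coefficients from $[m]\setminus S$, which a Cauchy--Schwarz-type second-moment inequality on the parabolic points $\{(j, j^2) : j \in [m]\setminus S\}$ rules out. This closes the completeness direction. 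The main obstacle is this rigidity lemma plus tuning the parameters so that $\nu$-items cannot substitute for missing $\mu$-items (handled by $\gcd(\beta, M_2) = 1$) and original jobs cannot substitute for markers (handled by $M_2 \mid L$); everything else is careful bookkeeping.
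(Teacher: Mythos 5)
Your construction follows the same general template as the paper's (two auxiliary marker item types whose per-bin counts are forced, so that the marker supply pins down the bin-multiplicity vector), but the specific choice of marker multiplicities breaks the argument. Your \emph{rigidity lemma} is false for $m \geq 6$: the two moment equations $\sum_i x_i\, i = \sum_i i$ and $\sum_i x_i\, i^2 = \sum_i i^2$ admit nonnegative integer solutions other than $\vex = (1,\dots,1)$. Concretely, take $x_1 = x_2 = x_6 = 0$, $x_4 = x_5 = 2$, and $x_i = 1$ otherwise; this works because $1+2+6 = 4+5$ and $1^2+2^2+6^2 = 4^2+5^2 = 41$. So no Cauchy--Schwarz-type argument on the parabola can close this step, and the gap is not merely in the proof: for instances where the machine capacities happen to satisfy $C_1+C_2+C_6 = C_4+C_5$ (e.g., $C_1=C_2=1$, $C_4=C_5=4$, $C_6=6$), your \textsc{Cutting Stock} instance can be a yes-instance via a packing that uses machines $4$ and $5$ twice and skips $1$, $2$, $6$, even though the scheduling instance is a no-instance. (Your steps (i) and (ii) --- forcing exactly $i$ copies of $\mu$ and $i^2$ copies of $\nu$ per type-$i$ bin via the modular/threshold arguments --- are fine; only the final rigidity step fails.)

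The paper avoids this by choosing marker counts with a genuinely unique decomposition: bin type $i$ has capacity $2^{i-1}$ in the second marker's coordinate, the second marker has total supply $2^m-1$, and a first marker of supply $m$ together with the cost budget forces exactly $m$ bins to be bought; since the only way to write $2^m-1$ as a sum of $m$ numbers from $\{2^0,\dots,2^{m-1}\}$ is to use each power once, every bin type is bought exactly once. If you replace your multiplicities $i$ and $i^2$ by $1$ and $2^{i-1}$ (keeping your scale-separation machinery, which plays the role of the paper's ``$3$-dimensional sizes''), your reduction goes through; the rest of your bookkeeping is correct.
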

	\begin{proof}
	We will set the sizes of bin types as $3$-dimensional vectors, whose interpretation as numbers is straightforward by choosing the base of each coordinate sufficiently large to prevent carry when summing.
	For machine~$M_i$ with capacity $T+a_i$, we add a bin type of size and cost $(1,2^{i - 1},T+a_i)$.
	For each original job type $t$ of size $p_t$, there is an item type of size $(0, 0, p_t)$ with the same multiplicity $n_t$.
	We will add two new item types: there are $m$ items of type $\eta$ which have size $(1, 0, 0)$, and $2^{m}-1$ items of type $\nu$ which have size $(0, 1, 0)$.
	The target cost is $C = (m, 2^m - 1, mT + A)$.
	
	Clearly, a feasible schedule translates easily to a packing: buying each bin type exactly once costs exactly $C$, the original item types are packed according to the feasible schedule, and we pack one $\eta$-type job and $2^{i - 1}$ $\nu$-type jobs on machine $M_i$.
	
	In the other direction, first notice that we have to use at least $m$ bins to pack the $\eta$-type jobs, and at most $m$ bins are affordable due to the budget $C$.
	We want to show that we have to use each bin type exactly once.
	Focus on the second coordinates of the $3$-dimensional vectors.
	Since the total size of items with respect to these coordinates is $2^m - 1$, which is precisely the affordable capacity, a solution to \textsc{Cutting Stock} must buy $m$ bins with capacity $2^m - 1$.
	This is equivalent to decomposing the number $2^{m}-1$ into a sum of some $m$ numbers which are powers of $2$, namely $2^0, 2^1, \dots, 2^{m-1}$.
	Clearly, the unique decomposition is $2^m-1 = 2^0 + 2^1 + \cdots + 2^{m-1}$.
	Hence, the unique way to obtain capacity $C$ by buying $m$ bins is to buy one bin of each type, concluding the proof.
	\end{proof}
Note that the \Whness{1} of $Q||C_{\max}$ does not immediately imply \Whness{1} of \textsc{Cutting Stock} when $\vep, \ven, \vecc$ are given in unary, because the construction of Lemma~\ref{lem:cuttingstock} blows up each of $\vep, \ven, \vecc$: it introduces large costs, items $\eta$ with large size, and items $\nu$ with large multiplicity.

Using our hardness of $Q|HM|C_{\max}$ with $6$ job types together with Lemma~\ref{lem:cuttingstock} yields:
\begin{reptheorem}{thm:cuttingstock}
\textsc{Cutting Stock} is \NPh already with $8$ item types.
\end{reptheorem}
%

\subsection{Hardness of $Q||\ell_2$ and $R||\ell_2$}

We will now transfer our hardness reduction to the $\ell_2$ norm.
Remember that the speed~$s_i$ of machine~$M_i$ depended linearly on~$T + a_i$ (normalized by~$1/T$ for all machines).
For the $\ell_2$ norm, we observe that
the machine speed affects the objective value by its square.
So for a machine where we double its speed, it contributes only a fourth to the objective value.
Then, one can construct an instance where it is more beneficial to schedule more than the loads of a perfect schedule to the faster machines leaving the slower machines rather empty.

To still apply our argument that the perfect schedules,
which precisely correspond to bin packings,
are the only ones admitting an optimal schedule,
we adjust the machine speeds.
It should be a value in the order of~$\sqrt{ T + a_i }$.
We use the ceiling function to have rational machine speeds.
However, for our reduction it is crucial that machines $M_i$ and $M_j$ have a different speed if $a_i \ne a_j$.
To make each $\left\lceil \sqrt{T + a_i} \, \right\rceil$ different from $\left\lceil \sqrt{T + a_i - 1} \, \right\rceil$, we scale up $\sqrt{T+a_i}$ by a sufficiently large factor.
We will see that we can set this factor to be $(T + a_{\max})$,
which results, for machine~$M_i$, in a new machine speed of
\begin{equation}
\label{eq:Q||ell2-machine-speed}
s_i = \left\lceil (T + a_{\max}) \sqrt{ T + a_i } \enspace \right\rceil \enspace .
\end{equation}

In the following we will use~$\ell_2^2$, which is the square of the $\ell_2$ norm, and is isotonic to it.
Recall that the unscaled load of $M_i$ is $\bar{L}_i = L_i \cdot s_i = \sum_{t=1}^{\tau} p_t^i x_t^i$, where $\vex^i = (x_1^i, \dots, x_\tau^i)$ is the vector of job multiplicities scheduled to machine~$M_i$, and $\tau$ is the number of job types.

\begin{lemma}
	\label{lem:reductionBBPtoQl2}
	The hardness instance $I$ with modified $s_i$ is also hard for $Q|HM|\ell^2_2$ with target value $\sum_{i = 1}^m \left( (T + a_i) / s_i \right)^2$.
\end{lemma}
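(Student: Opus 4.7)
The plan is to combine a convexity argument with the structural analysis from the proof of Lemma~\ref{lem:reductionBBPtoQCmax}. The forward direction is immediate: a solution to \balancedbinpacking yields a perfect schedule, whose machine $M_i$ has unscaled load exactly $T+a_i$ by~\eqref{eq:perfectSchedule}, giving $\ell_2^2$ value $\sum_i ((T+a_i)/s_i)^2 = V^*$.

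For the converse, I would prove that every non-perfect schedule has $\ell_2^2$ value strictly greater than $V^*$. Let $S := \sum_i (T+a_i)$ denote the total unscaled workload, which is fixed since every job must be scheduled by~\eqref{eq:1}. The map $f(\bar{L}_1,\dots,\bar{L}_m) = \sum_i \bar{L}_i^2/s_i^2$ is strictly convex on the affine slice $\{\sum_i \bar{L}_i = S\}$ with unique continuous minimizer $\bar{L}_i^\circ := S\,s_i^2/\sum_j s_j^2$. A Pythagorean identity (the cross term vanishes because $\bar{L}_i^\circ/s_i^2$ is independent of $i$) gives, for any feasible load vector $\bar{L}$,
\begin{equation*}
\sum_i \bar{L}_i^2/s_i^2 \;-\; V^* \;=\; \sum_i (\bar{L}_i - \bar{L}_i^\circ)^2/s_i^2 \;-\; \sum_i (T+a_i - \bar{L}_i^\circ)^2/s_i^2 .
\end{equation*}

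It remains to show this difference is strictly positive whenever $\bar{L} \neq (T+a_1,\dots,T+a_m)$. Writing $C := T + a_{\max}$ and unfolding $s_i = \lceil C\sqrt{T+a_i}\rceil$, one has $s_i^2 = C^2(T+a_i) + O(C\sqrt{T+a_i})$, from which a short direct calculation gives $|T+a_i - \bar{L}_i^\circ| = O(1)$, and therefore $\sum_i (T+a_i - \bar{L}_i^\circ)^2/s_i^2 = O(m/(C^2 T))$. On the other hand, the smallest job type in $I$ has size at least $kA^2$, so any non-perfect schedule differs from a perfect one by reassigning at least one such job, which forces $|\bar{L}_i - (T+a_i)| \geq kA^2$ on (at least) two machines. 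Since $kA^2 \gg 1$, the triangle inequality yields $|\bar{L}_i - \bar{L}_i^\circ| = \Omega(kA^2)$ on those machines, and thus $\sum_i (\bar{L}_i - \bar{L}_i^\circ)^2/s_i^2 = \Omega(k^2 A^4/s_{\max}^2) = \Omega(k^2 A^4/(C^2 T))$, which dominates the $O(m/(C^2 T))$ bound.

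Once every schedule with $\ell_2^2$ value $\leq V^*$ is known to have unscaled loads exactly $(T+a_1,\dots,T+a_m)$, the structural analysis from the proof of Lemma~\ref{lem:reductionBBPtoQCmax} transfers verbatim: one $\beta_j$-type job per machine, inductive propagation of the $\alpha^\times_j$-type assignment for $j = k, k-1, \dots, 1$, and the counting of $\alpha_j^1$-jobs then forces a perfect schedule, which encodes a valid \balancedbinpacking solution. The main technical obstacle is the rounding analysis of $s_i$: the scaling factor $(T+a_{\max})$ must be large enough that the ``negative'' term above becomes negligible next to the gain from even a single integral job reassignment, while still being small enough that $s_i$ grows injectively in $a_i$; both conditions are ensured precisely by the choice $s_i = \lceil (T+a_{\max})\sqrt{T+a_i}\rceil$.
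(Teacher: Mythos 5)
Your overall framework (forward direction via the perfect schedule, converse via a Pythagorean decomposition around the continuous minimizer $\bar{L}^\circ$ of $\sum_i \bar{L}_i^2/s_i^2$ on the slice $\sum_i \bar{L}_i = S$) is sound and is genuinely different from the paper's argument, which instead reaches an arbitrary load vector from the perfect one by iteratively moving integral units of load and shows each move strictly increases the objective via the inequality $s_i^2/s_j^2 < \bigl(2(T+a_i+z_i)+r\bigr)/\bigl(2(T+a_j-z_j)-r\bigr)$. However, your quantitative step contains a genuine error that kills the proof as written. You claim that any non-perfect schedule forces $|\bar{L}_i-(T+a_i)|\ge kA^2$ on two machines because the smallest job has size $kA^2$. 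This is false: the sizes of $\alpha^1_j$ and $\alpha^0_j$ differ by exactly $1$ by construction (the ``$+1$'' is what encodes the item), so swapping one $\alpha^1_j$ on $M_1$ with one $\alpha^0_j$ on $M_2$ (both in $\mathcal{M}_j$) yields a valid, non-perfect schedule whose loads are $(T+a_1-1,\,T+a_2+1,\,\dots)$, i.e., deviation exactly $1$. These unit perturbations are precisely the hard case the reduction must exclude, and your argument dismisses them with a wrong premise. Moreover, your stated bound $|T+a_i-\bar{L}_i^\circ|=O(1)$ is too weak to handle deviation $1$: if the continuous optimum could sit a constant distance from the perfect loads, a unit deviation gives no lower bound on $\sum_i(\bar{L}_i-\bar{L}_i^\circ)^2/s_i^2$ at all.

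The good news is that your route is repairable, and arguably cleaner than the paper's once repaired. Writing $C=T+a_{\max}$ and $s_i^2=C^2(T+a_i)+E_i$ with $0\le E_i\le 2C\sqrt{T+a_i}+1$, one gets
\begin{equation*}
\bigl|\bar{L}_i^\circ-(T+a_i)\bigr|=\frac{\bigl|S E_i-(T+a_i)\sum_j E_j\bigr|}{\sum_j s_j^2}\le\frac{3\sqrt{2T}}{C}=O\!\left(\frac{1}{\sqrt{T}}\right),
\end{equation*}
which is $o(1)$, not merely $O(1)$. Since loads are integers and sum to the fixed value $S$, any non-perfect load vector deviates by at least $1$ on at least two machines, so the positive term is $\Omega(1/(C^2T))$ while the negative term is $O\!\left(m/(C^2T^2)\right)$; as $T=3kA^3\gg m$, the positive term dominates and Case~1 (loads exactly $T+a_i$, then invoke Lemma~\ref{lem:reductionBBPtoQCmax}) is the only one compatible with the target value. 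You should replace the $kA^2$ claim and the $O(1)$ bound with this sharper accounting; without it the proof does not go through.
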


\begin{proof}
	As before, if the instance of \balancedbinpacking has a solution
	where item~$a_i$ is assigned to the $j$-th bin,
	we construct a perfect schedule,
	where we assign $a_i$ jobs of type $\alpha^1_j$, $A - a_i$ jobs of type $\alpha^0_j$ and one job of type $\beta_j$ to machine~$M_i$ for each $i \in [m]$.
	As this gives us load~$(T + a_i) / s_i$ on machine~$M_i$, we reach precisely the target objective value~$\sum_{i = 1}^m \left( (T + a_i) / s_i \right)^2$ for the $\ell_2^2$ objective.
	
	For the other direction, assume there is a schedule~$\sigma$ of jobs to machines such that the objective value is at most $\sum_{i = 1}^m \left( (T + a_i) / s_i \right)^2$.
	We distinguish two cases.
	
	\noindent\textbf{Case 1:} \emph{The unscaled load of machine~$M_i$ is ~$T + a_i$, for each $i \in [m]$.}
	Observe that the objective value of~$\sigma$ equals the prescribed threshold objective value $\sum_{i = 1}^m \left( (T + a_i) / s_i \right)^2$.
	By Lemma~\ref{lem:reductionBBPtoQCmax}~\ref{lem:reductionBBPtoQCmax-item:allMachinesReachMakespan}, we know that such a schedule is perfect and exists if and only if there is a solution to the corresponding \balancedbinpacking instance.
	
	\noindent\textbf{Case 2:} \emph{There is an $i \in [m]$ such that $M_i$ has unscaled load different from $T+a_i$.}
	Consider the unscaled loads~$\LL = (\bar{L}_1, \dots, \bar{L}_m)$ scheduled to each of the machines in~$\sigma$.
	Since the total unscaled load is independent of the schedule, we can reach $\LL$ from the  ``perfect'' unscaled load distribution~$(T + a_1, \dots, T + a_m)$ of a perfect schedule (as it appears in Case~1) by iteratively moving a portion of the load from one machine to another.
	Note that we do not speak of moving jobs here.
	For this argument, we only consider the unscaled load of each machine as an integral number and ignore the jobs.
	In this process
	\begin{itemize}
		\item $m$ iterations of re-distribution are sufficient; in each step we take the machine with the smallest deviation (minimizing $\Delta_i = |\bar{L}_i - (T+a_i)|$) and move $\Delta_i$ integral units of load from it or to it (depending on the direction of the deviation).
		Note that there exists some other machine~$M_j$ to/from which to move because we chose $i$ to minimize $\Delta_i$.
		\item the load of each machine monotonously increases, decreases, or remains unchanged, i.e., we do not first add and then remove a portion of load or the other way around.
	\end{itemize}
	We show that in every step the objective value only increases, hence this case cannot occur as we already matched the threshold objective value in the ``perfect'' distribution of Case~1.
	
	Consider one such step.
	We move load $r \ge 1$ to machine~$M_i$ and take it from machine~$M_j$.
	Before, we have already moved in total~$z_i \ge 0$ to~$M_i$
	and we have already removed in total~$z_j \ge 0$ from~$M_j$.
	If~$M_i$ is slower than~$M_j$, then the objective value definitely increases.
	Hence, we assume $s_i \ge s_j$ (this implies $a_i \ge a_j$).
	So it remains to show
	\begin{align}	
	&& \left( \frac{T+a_i+z_i}{s_i} \right)^2 + \left( \frac{T+a_j-z_j}{s_j} \right)^2 &< \left( \frac{T+a_i+z_i+r}{s_i} \right)^2 + \left( \frac{T+a_j-z_j-r}{s_j} \right)^2 \nonumber \\ 
	&\Leftrightarrow& s_i^2 \left(2r (T + a_j - z_j) - r^2 \right) &< s_j^2 \left(2r (T + a_i + z_i) + r^2 \right) \nonumber \\
	&\Leftrightarrow& \frac{s_i^2}{s_j^2} &< \frac{2 (T + a_i + z_i) + r}{2 (T + a_j - z_j) - r} \enspace . \label{eq:Qell2-Case2}
	\end{align}
	
	Next, we analyze the machine speed $s_i$ as defined in equation~\eqref{eq:Q||ell2-machine-speed}.
	Recall that we scale up $\sqrt{T+a_i}$ by a sufficiently large factor~$b$
	to make each $\left\lceil \sqrt{T + a_i} \, \right\rceil$ different from $\left\lceil \sqrt{T + a_i - 1} \, \right\rceil$
	If the difference between $\sqrt{T+a_i}$ and $\sqrt{T+a_i-1}$ is at least $d$, then it must hold that
	\begin{equation*}
	b > \frac{1}{d} \ge \frac{1}{\sqrt{T + a_{\max}} - \sqrt{T + a_{\max} - 1}} \enspace .
	\end{equation*}
	We have chosen $b = T + a_{\max}$, since $x > 1 / (\sqrt{x} - \sqrt{x - 1})$ for $x \ge 4$.
	Hence, we conclude
	\begin{equation}
	\label{eq:Q||ell2-Case2-speedUpperBound}
	\left\lceil (T + a_{\max}) \sqrt{ T + a_i } \, \right\rceil < (T + a_{\max}) \sqrt{ T + a_i + 1} \enspace .
	\end{equation}
	With this inequality in hand, we finally show the correctness of inequality~\eqref{eq:Qell2-Case2}:
	\begin{align*}
	\frac{s_i^2}{s_j^2}
	&= \frac{\left\lceil (T + a_{\max}) \sqrt{ T + a_i } \, \right\rceil^2}{\left\lceil (T + a_{\max}) \sqrt{ T + a_j } \, \right\rceil^2}
	\enspace \overset{\eqref{eq:Q||ell2-Case2-speedUpperBound}}{<} \enspace \frac{(T + a_{\max})^2 (T + a_i + 1)}{(T + a_{\max})^2 (T + a_j)}
	= \frac{2 (T + a_i) + 2}{2 (T + a_j)} \\
	& \overset{(r \ge 1)}{\le} \enspace \frac{2 (T + a_i) + 2 r}{2 (T + a_j)}
	\enspace \overset{(a_i \ge a_j)}{<} \enspace \frac{2 (T + a_i) + r}{2 (T + a_j) - r}
	\le \frac{2 (T + a_i + z_i) + r}{2 (T + a_j - z_j) - r} \qedhere
	\end{align*}
\end{proof}

Similarly, we can transfer our hardness instance to $R|HM|\ell^2_2$.

\begin{lemma}
	\label{lem:reductionBBPtoRl2}
	The hardness instance $I_R$ is hard for $R|HM|\ell^2_2$ with target value $m\cdot T_R^2$.
\end{lemma}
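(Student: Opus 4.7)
The plan is to argue that any schedule $\sigma$ with $\sum_i L_i^2 \leq m T_R^2$ must be a perfect schedule in the sense of Lemma~\ref{lem:reductionBBPtoRCmax}, so that feasible schedules are in bijection with solutions of the underlying \balancedbinpacking instance. The forward direction is immediate: a perfect schedule has $L_i = T_R$ for every $i$, so $\sum_i L_i^2 = m T_R^2$ matches the target exactly.

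For the converse, I would exploit that in $I_R$ the only machine-dependent job type is $\gamma$, whose unscaled size on $M_i$ is $4kA^3 - a_i$, whereas all $\alpha$- and $\beta$-jobs have machine-independent sizes. Writing $x_i$ for the number of $\gamma$-jobs assigned to $M_i$ (so $\sum_i x_i = n_\gamma = m$), a direct computation using the fact that the total non-$\gamma$ load is a fixed constant $Z = 3kA^3 m + A$ gives $\sum_i L_i = m T_R + A - \sum_i x_i a_i$. In the favourable case $x_i = 1$ for all $i$, this evaluates to $m T_R$ exactly, so Cauchy--Schwarz implies $\sum_i L_i^2 \geq (\sum_i L_i)^2/m = m T_R^2$ with equality iff $L_i = T_R$ for every $i$. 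Hitting the target then forces each machine's load to equal $T_R$, and the structural argument of Lemma~\ref{lem:reductionBBPtoRCmax} (item~\ref{lem:reductionBBPtoRCmax-item:allMachinesReachMakespan}) identifies $\sigma$ as a perfect schedule, which encodes a \balancedbinpacking solution.

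The main obstacle is the remaining case in which some $x_j \geq 2$: now $\sum_i L_i$ can drop below $m T_R$, so Cauchy--Schwarz on all machines is not enough by itself. To handle it, I would use that $L_j \geq 2(4kA^3 - a_j) \geq 8kA^3 - 2A$, which strictly exceeds the possible average load $S/m \leq T_R + A/m$ by a $\Theta(kA^3)$ margin. Applying Cauchy--Schwarz only to the remaining $m-1$ machines gives
\[
\textstyle\sum_i L_i^2 \;\geq\; L_j^2 + \frac{(S - L_j)^2}{m-1},
\]
a convex function of $L_j$ whose minimum lies below our lower bound for $L_j$, so the estimate is minimized at $L_j = 8kA^3 - 2A$ and at the smallest admissible $S \geq m T_R - (m-1)A$. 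Plugging in these worst-case values reduces the required strict inequality $\sum_i L_i^2 > m T_R^2$ to a clean algebraic check of the form $64 + (7m-8)^2/(m-1) > 49m$, which after polynomial division simplifies to $49m + 1 + 1/(m-1) > 49m$ and hence holds for every $m \geq 2$; the lower-order terms in $A$ are dominated since we may take $A$ sufficiently large compared to $k$. This contradicts $\sum_i L_i^2 \leq m T_R^2$, rules out all imperfect $\gamma$-distributions, and completes the reduction.
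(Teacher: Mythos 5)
Your proof is correct and follows essentially the same route as the paper: the one-$\gamma$-per-machine case is handled by noting the total load is then fixed at $mT_R$ and invoking convexity of $\ell_2^2$ to force all loads to equal $T_R$ (so Lemma~\ref{lem:reductionBBPtoRCmax} applies), while the case of a doubled $\gamma$-job is excluded by the same quantitative gap computation (machine load at least $8kA^3 - 2A$ versus average near $7kA^3$, with $A$ taken large relative to $m$ to absorb lower-order terms). Your organization is marginally cleaner---you merge the paper's Cases 1 and 2a into one Cauchy--Schwarz step and replace its numerical constants by the exact identity $64 + (7m-8)^2/(m-1) = 49m + 1 + 1/(m-1)$---but the underlying argument is the same.
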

\begin{proof}
	Again, if the instance of \balancedbinpacking has a solution
	where item~$a_i$ is assigned to the $j$-th bin,
	we construct a perfect schedule,
	where we schedule $a_i$ jobs of type~$\alpha^1_j$, $A - a_i$ jobs of type~$\alpha^0_j$, one job of type~$\beta_j$, and one job of type~$\gamma$ to machine~$M_i$ for each $i \in [m]$.
	As this gives us processing time~$T_R$ per machine, we precisely reach the target objective value of~$m T_R^2$ for the $\ell_2^2$ objective.
	
	For the other direction, assume there is a schedule of jobs to machines such that the objective value is at most $mT_R^2 = 49 m k^2 A^6$.
	We distinguish three cases.
	
	\noindent\textbf{Case 1:} \emph{The load of each machine is at most~$T_R = 7 k A^3$.}
	Such a schedule would thus have makespan $T_R$ and is feasible for $R|HM|C_{\max}$ with target makespan $T_R$.
	By Lemma~\ref{lem:reductionBBPtoRCmax}, we know that such a schedule exists if and only if there is a solution to the corresponding \balancedbinpacking instance.
	By property~\ref{lem:reductionBBPtoRCmax-item:allMachinesReachMakespan} of Lemma~\ref{lem:reductionBBPtoRCmax}, it admits an objective value of precisely~$m T_R^2$ for the $\ell_2^2$ objective.
	
	\noindent\textbf{Case 2a:} \emph{There is a machine with load~$T_R' > T_R = 7 k A^3$, and on each machine there is precisely one job of type~$\gamma$.}
	Since the processing time for all $\alpha$- and $\beta$-type jobs is the same on all machines and we have exactly one job of type~$\gamma$ per machine,
	the total load is independent of the schedule and is $m \cdot T_R$.
	Fixing the total load, the $\ell_2^2$ objective reaches its minimum uniquely by distributing the load evenly; see e.g.~\cite[Proof of Theorem 3]{KnopK:2017}.
	Thus, the objective $m T_R^2$ can only be reached if the load of every machine is $T_R$, so this case cannot occur.
	
	\noindent\textbf{Case 2b:} \emph{There is a machine which schedules at least two jobs of type~$\gamma$.}
	In this case, we exploit Claim~\ref{claim:gapCase2b}, which we prove \ifproc in the full version. \else next. \fi
	Again, it contradicts our assumption of $\sigma$ having objective value at most $mT_R^2$. So this case can also not occur.
	\ifproc
	\begin{claim}[\appmark]
	\else
	\begin{claim}
	\fi
		\label{claim:gapCase2b}
		Any schedule in Case~2b has objective value strictly greater than $r \cdot mT_R^2$ with $r = (m - 0.98) / (m - 1)$.
		Hence, the objective value of such a schedule exceeds~$mT_R^2$ by at least
		\begin{align*}
		(r-1) m T_R^2 = \frac{0.02}{m - 1} \cdot 49 m k^2 A^6
		> 0.98 k^2 A^6 \enspace .
		\end{align*}
	\end{claim}

\ifproc
\end{proof}
\else
	\medbreak
	
	\noindent \textit{Proof:}

	The dependence of~$p^i_{\gamma}$ on the choice of a machine~$M_i$ is only subtracting~$a_i$.
	So we get a lower bound on the total sum of job sizes of all jobs in any schedule if we subtract $m$ times $a_{\max}$ (as we have $m$ jobs of type $\gamma$).
	This yields a total sum
	\begin{align*}
	\mathcal{T} =& \sum_{j \in [k]} \left( n_{\alpha^1_j} p_{\alpha^1_j} + n_{\alpha^0_j} p_{\alpha^0_j} + n_{\beta_j} p_{\beta_j} \right) + m (4 k A^3 - a_{\max}) \nonumber \\
	=& 7 m k A^2 + A - m a_{\max} \enspace .
	\end{align*}
	The machine where we have scheduled two jobs of type $\gamma$ has load at least
	\begin{equation*}
	T_R' \ge 2 \cdot (4 k A^3 - a_{\max}) > 8kA^3 - 2A \enspace .
	\end{equation*}
	This is already greater than~$T_R$, which is in turn at least $\mathcal{T} / m$.
	Hence, we assume for the rest of the proof that $T_R'$ is exactly $8kA^3 - 2A$ and the remaining processing time $\mathcal{T} - T_R'$ is distributed equally across the other $m - 1$ machines as otherwise the resulting objective value would only increase.
	The average load $L_{\text{avg}}$ of the remaining machines is 
	\begin{align*}
	L_{\text{avg}} = \frac{\mathcal{T} - T_R'}{m - 1} = \frac{(7 m k A^3 + A - m a_{\max}) - (8kA^3 - 2A)}{m - 1}
	> \frac{7mkA^3 - 8kA^3 - mA}{m-1} \enspace .
	\end{align*}
	Hence, the objective value of such a schedule is at least
	\begin{gather*}
	\left(8 k A^3 - 2A\right)^2 + (m-1) \left(\frac{7mkA^3 - 8kA^3 - mA}{m-1}\right)^2 \\
	> \frac{64 mk^2A^6 - 64 k^2A^6 - 32 mkA^4 + 49 m^2k^2A^6 - 112 mk^2A^6 - 14 m^2kA^4}{m-1} \\
	> 49 mk^2A^6 \frac{m - \frac{48}{49} - \frac{64}{m} - \frac{46m}{A}}{m-1} 
	\ge r mT_R^2 \enspace , \\
	\textrm{\hspace{-16pt} where \hspace{10pt}}
	\frac{m - \frac{48}{49} - \frac{64}{m} - \frac{46m}{A}}{m-1} \ge r = \frac{m - \frac{49}{50}}{m - 1}
	\end{gather*}
	because, without loss of generality, we can assume that $m \ge 64 \cdot 4900$ as otherwise we could add more dummy items to our \balancedbinpacking as described in the proof of Lemma~\ref{lem:reductionBPtoBBP},
	and we can assume that $A \ge 46 \cdot 4900 m$ as otherwise we could scale up the items of the \balancedbinpacking instance by a factor of~$46 \cdot 4900$.
\end{proof}
\fi

The following corollaries follow immediately from Lemmas~\ref{lem:reductionBBPtoQl2} and~\ref{lem:reductionBBPtoRl2};
as before, it is likely that one might improve this to 4 job types.

\begin{corollary}
	\label{cor:Xl2NPhard}
	$X|HM|\ell_2$ is \NPh already for $t$ job types with
	\begin{enumerate*}
		\item $X=Q$, $t = 6$.
		\item $X=R$, $t = 7$, and $\rank(\vep) = 2$.
	\end{enumerate*}
\end{corollary}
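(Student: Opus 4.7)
The plan is to chain together the NP-hardness of tight \balancedbinpacking (established as the corollary right after Lemma~\ref{lem:reductionBPtoBBP}) with the $\ell_2^2$-versions of our hardness reductions that have just been developed. The work is almost entirely bookkeeping at this stage.

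First, I would specialize Lemmas~\ref{lem:reductionBBPtoQCmax} and~\ref{lem:reductionBBPtoRCmax} to \balancedbinpacking instances with $k=2$ bins, i.e., the tight partition problem, which is NP-hard. This produces the instance $I$ for $Q|HM|C_{\max}$ with $3k=6$ job types and the instance $I_R$ for $R|HM|C_{\max}$ with $3k+1=7$ job types, where in the latter case the job sizes matrix $\vep$ has rank~$2$.

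Next, I would apply Lemma~\ref{lem:reductionBBPtoQl2} to $I$ (with speeds $s_i$ re-defined as in equation~\eqref{eq:Q||ell2-machine-speed}) and Lemma~\ref{lem:reductionBBPtoRl2} to $I_R$. These lemmas guarantee that the underlying \balancedbinpacking instance is feasible if and only if a schedule meeting the respective $\ell_2^2$ target value ($\sum_{i=1}^m ((T+a_i)/s_i)^2$ for $Q$ and $m\cdot T_R^2$ for $R$) exists. Since $x\mapsto x^2$ is monotone on $\R_{\ge 0}$, minimizing $\ell_2^2$ is isotonic with minimizing $\ell_2$, so hardness transfers immediately to the $\ell_2$ objective by taking the square root of the threshold. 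The $\rank(\vep)=2$ property of the $R$ instance is untouched, since the modification in Lemma~\ref{lem:reductionBBPtoRl2} only changes speeds and the objective, not the job sizes matrix.

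There is essentially no obstacle at this point: the only non-trivial content — verifying that any non-perfect schedule strictly exceeds the $\ell_2^2$ threshold (the case analyses in the proofs of Lemmas~\ref{lem:reductionBBPtoQl2} and~\ref{lem:reductionBBPtoRl2}, including Claim~\ref{claim:gapCase2b}) — has already been carried out. The corollary is therefore obtained by direct specialization to $k=2$ bins and the $\ell_2 \Leftrightarrow \ell_2^2$ isotonicity.
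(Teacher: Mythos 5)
Your proposal is correct and matches the paper's intent exactly: the paper derives this corollary by noting it ``follows immediately'' from Lemmas~\ref{lem:reductionBBPtoQl2} and~\ref{lem:reductionBBPtoRl2}, which is precisely the chain (tight \balancedbinpacking with $k=2$ bins, giving $3k=6$ resp.\ $3k+1=7$ job types, plus the $\ell_2 \Leftrightarrow \ell_2^2$ isotonicity) that you spell out. The only cosmetic slip is your remark that Lemma~\ref{lem:reductionBBPtoRl2} ``changes speeds'' of $I_R$ --- the unrelated-machines instance has no speeds and is used verbatim --- but this does not affect the argument or the $\rank(\vep)=2$ claim.
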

\begin{corollary}
	\label{cor:Xl2W1hard}
	$X||\ell_2$ is \Wh{1} parameterized by the number of job types with
	\begin{enumerate*}
		\item $X=Q$ and $\ven$, $\vep$, and $\ves$ given in unary.
		\item $X=R$ and $\ven$ and $\vep$ given in unary and $\rank(\vep) = 2$.
	\end{enumerate*}
\end{corollary}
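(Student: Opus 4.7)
The plan is to deduce the corollary by composing the reductions of Lemmas~\ref{lem:reductionBBPtoQl2} and~\ref{lem:reductionBBPtoRl2} with the \Whness{1} of \unarybalancedbinpacking parameterized by the number of bins $k$ (Corollary~\ref{cor:unarybalancedbinpacking-is-W1-hard}). Given a tight \unarybalancedbinpacking instance, I would apply Lemma~\ref{lem:reductionBBPtoQl2} for part~(a) to produce a hard instance of $Q|HM|\ell_2^2$, or Lemma~\ref{lem:reductionBBPtoRl2} for part~(b) to produce a hard instance of $R|HM|\ell_2^2$ with $\rank(\vep)=2$. A final ``unfolding'' step then replaces each multiplicity $n_t$ by $n_t$ explicit copies of jobs of type $t$, yielding an instance of $X||\ell_2$; since $\ell_2^2$ is isotonic to $\ell_2$, this suffices.

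Next I would verify the three properties needed for a parameterized reduction into the claimed regime. The parameter becomes $3k$ (for $Q$) or $3k+1$ (for $R$), a computable function of $k$, so the composition is a valid \FPT reduction. All magnitudes stay polynomial in the unary input: property~3 of Lemmas~\ref{lem:reductionBBPtoQCmax} and~\ref{lem:reductionBBPtoRCmax}, inherited by their $\ell_2$ analogues, gives $\|\vep\|_\infty, \|\ven\|_\infty \in \Oh(A^4)$, and the integer speeds defined by~\eqref{eq:Q||ell2-machine-speed} satisfy $s_i \leq (T+a_{\max})^{3/2}+1 \in \Oh(A^6)$. Since $A$ is polynomial in the size of a unary \balancedbinpacking instance, the unary encodings of $\vep$, $\ven$, and (for $X=Q$) $\ves$ all remain polynomial. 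For the $R$ case, $\rank(\vep)=2$ is preserved because Lemma~\ref{lem:reductionBBPtoRl2} uses the same job sizes matrix as Lemma~\ref{lem:reductionBBPtoRCmax}, for which a rank-$2$ decomposition $\vep = C\cdot D$ was exhibited explicitly.

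Finally, I would observe that unfolding only polynomially enlarges the instance: the total number of jobs is $n=\Oh(mA)$, polynomial in the unary input, and the $\ell_2^2$ objective is invariant under this expansion because it depends only on the multisets of sizes scheduled to each machine. The main obstacle is not fresh mathematics but rather verifying that the chain remains within all three regimes simultaneously: \FPT parameter, unary magnitudes for $\ven$, $\vep$, and $\ves$, and, for $R$, the rank bound. The only slightly delicate point is the polynomial bound on the speeds in~\eqref{eq:Q||ell2-machine-speed}: the scaling factor $T+a_{\max}$ and the radicand $T+a_i$ were introduced in Lemma~\ref{lem:reductionBBPtoQl2} precisely so that distinct item sizes translate to distinct integer speeds, but both are $\Oh(A^4)$, hence their product is $\Oh(A^6)$ and unary encoding of $\ves$ remains polynomial.
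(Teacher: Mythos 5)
Your proposal is correct and follows the paper's own route: the paper derives this corollary directly from Lemmas~\ref{lem:reductionBBPtoQl2} and~\ref{lem:reductionBBPtoRl2} combined with Corollary~\ref{cor:unarybalancedbinpacking-is-W1-hard}, exactly as you do. Your explicit verification of the parameter bound, the $\Oh(A^{6})$ bound on the modified speeds from~\eqref{eq:Q||ell2-machine-speed}, the rank-$2$ preservation, and the polynomial unfolding from $HM$ to explicit jobs merely spells out what the paper leaves as ``follows immediately.''
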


\subsection{Hardness of $R||\sum w_j C_j$}
We will define weights in the hardness instance~$I_R$ from Lemma~\ref{lem:reductionBBPtoRCmax}.
Denote $\rho^i_j = w_j / p^i_j$ the \emph{Smith ratio} of a job $j$ on machine~$M_i$, where $w_j$ is its weight.
It is known that given an assignment of jobs to machines, an optimal schedule is obtained by executing jobs ordered by their Smith ratios (on each machine) non-increasingly~\cite{Smith1956}.
It suffices to restrict ourselves to such schedules, and an assignment of jobs to machines describes such a schedule.

We would like to use the same approach as for $\ell_2$ (Lemma~\ref{lem:reductionBBPtoRl2}) because it is known that $\sum w_j C_j$ and $\ell_2$ are often (not always) closely related.
However, because the size of a job of type $\gamma$ depends both on~$j$ and the machine~$M_i$,
yet its weight only depends on~$j$, it is impossible to express an exact objective value of the perfect schedule from the previous sections.
This would make the argument of an analogue of Case~2a of Lemma~\ref{lem:reductionBBPtoRl2} invalid and a no-instance of \balancedbinpacking might reduce to a yes-instance of $R||\sum w_j C_j$.
The contribution of all $\alpha$- and $\beta$-type jobs to the sum of weighted completion times is always the same as they and their weights are machine-independent.
However, the contribution of jobs of type~$\gamma$ depends on the machine, while its weight is machine-independent.
If we schedule to each machine exactly one job of type~$\gamma$,
then we will have each machine-dependent processing time once and across all machines their contribution is independent of the schedule and we can specify an exact target objective value.
Consequently, we can apply the same argumentation for Case~1 and Case~2a as in Lemma~\ref{lem:reductionBBPtoRl2}.
For Case~2b, we will exploit the claim in the proof of Lemma~\ref{lem:reductionBBPtoRl2} once again and combine it with a gap argument (Lemma~\ref{lem:gapsumwjCj}).

To obtain the \emph{weighted hardness instance} $I^w_R$, we define the following weights for our hardness instance~$I_R$ from Section~\ref{sec:hardnessX||Cmax}.
For the $\alpha$- and $\beta$-type jobs the weight equals its processing time and for the job type~$\gamma$ it is slightly greater:
\begin{equation*}
w_{\alpha^{\times}_j} = p_{\alpha^{\times}_j} \qquad w_{\beta_j} = p_{\beta_j} \qquad w_{\gamma} = 4 k A^3 \, (= p^i_{\gamma} + a_i \textrm{ for each } i \in [m])
\end{equation*}

\ifproc
\begin{lemma}[\appmark]
\else
\begin{lemma}
\fi  
\label{lem:gapsumwjCj}
Let $\sigma$ be any schedule of the weighted hardness instance,
let $(L_1, L_2, \dots, L_m)$ be its load vector, and $\LL := \frac{1}{2}\left(\sum_{i=1}^m {L_i}^2\right)$.
Let $\Gamma = \frac{1}{2k} \sum_{j=1}^k \left(A w_{\alpha^0_j}^2 + (m-1)A w_{\alpha^1_j}^2 + m w_{\beta_j}^2 \right)$, \\
$\Delta^{1:1}_{\operatorname{linear}} = \frac{1}{2} \sum_{i=1}^m p^i_{\gamma} w_{\gamma}$,
$\Delta^{1:1}_{\operatorname{quadr}} = \frac{1}{2} \sum_{i=1}^m p^i_{\gamma} \cdot a_i$,
$\Delta^{1:1} = \Delta^{1:1}_{\operatorname{linear}} + \Delta^{1:1}_{\operatorname{quadr}}$, \\
$\Delta^{\min}_{\operatorname{linear}} = m (w_{\gamma} - a_{\max}) w_{\gamma}$, and
$\Delta^{\min}_{\operatorname{quadr}} = m (w_{\gamma} - a_{\max}) a_{\max}$ \, . \\
\begin{enumerate*}
	\item The value of $\sigma$ under $\sum w_j C_j$ is at least $\LL + \Gamma + \Delta^{\min}_{\operatorname{linear}} + \Delta^{\min}_{\operatorname{quadr}}$. \label{lem:gapsumwjCj-minimumValue}\hspace{80pt}
	\item If $\sigma$ schedules one $\gamma$ job per machine,
	then the value $\sigma$ under $\sum w_j C_j$ is $\LL + \Gamma + \Delta^{1:1}$.
\end{enumerate*}
\end{lemma}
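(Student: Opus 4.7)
The plan is to compute $\sum w_j C_j$ exactly in terms of the loads $L_i$ and the numbers $\gamma_i$ of type-$\gamma$ jobs on each machine~$M_i$, and then derive both items either by substitution ($\gamma_i = 1$) or by lower-bounding the $\gamma$-dependent quantities. A first observation fixes the schedule structure: on each machine, the Smith ratios of $\alpha$- and $\beta$-type jobs are $w_{\alpha^{\times}_j}/p_{\alpha^{\times}_j} = w_{\beta_j}/p_{\beta_j} = 1$, whereas $w_\gamma/p^i_\gamma = w_\gamma/(w_\gamma - a_i) > 1$. Restricting (without loss of generality) to schedules consistent with Smith's rule, on every $M_i$ the $\gamma_i$ jobs of type $\gamma$ are executed first, followed by the $\alpha$- and $\beta$-type jobs in any order.

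I would then exploit the classical identity that on a single machine containing a set $S$ of jobs with $w_j = p_j$, \emph{regardless of the order}, $\sum_{j \in S} w_j C_j = \tfrac12(L^2 + \sum_{j \in S} w_j^2)$, where $L = \sum_{j \in S} p_j$. Writing $L_i^{\alpha\beta} = L_i - \gamma_i p^i_\gamma$ for the $\alpha/\beta$ load on $M_i$, the $\alpha/\beta$ block contributes $\tfrac12((L_i^{\alpha\beta})^2 + \sum_{j \in S_i^{\alpha\beta}} w_j^2)$, shifted by $\gamma_i p^i_\gamma \sum_{j \in S_i^{\alpha\beta}} w_j$ because every such job starts after the entire $\gamma$ block. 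The $\gamma$ block itself contributes $w_\gamma p^i_\gamma \gamma_i(\gamma_i+1)/2$, since its completion times form an arithmetic progression.

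Combining these, using $L_i = L_i^{\alpha\beta} + \gamma_i p^i_\gamma$ and $w_\gamma - p^i_\gamma = a_i$, and summing over $i \in [m]$, one obtains the closed-form identity
\begin{equation*}
\sum_j w_j C_j = \LL + \Gamma + \tfrac{1}{2} \sum_{i=1}^m w_\gamma p^i_\gamma \gamma_i + \tfrac{1}{2} \sum_{i=1}^m p^i_\gamma a_i \gamma_i^2,
\end{equation*}
where $\tfrac{1}{2} \sum_i \sum_{j \in S_i^{\alpha\beta}} w_j^2 = \Gamma$ is schedule-independent because each $\alpha/\beta$-type job appears on exactly one machine. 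Item~2 is then immediate: substituting $\gamma_i = 1$ for every $i \in [m]$ turns the two trailing sums into $\Delta^{1:1}_{\operatorname{linear}} + \Delta^{1:1}_{\operatorname{quadr}} = \Delta^{1:1}$. Item~1 follows from the bounds $p^i_\gamma = w_\gamma - a_i \geq w_\gamma - a_{\max}$, $\sum_i \gamma_i = m$, $a_i \leq a_{\max}$, and the inequality $\gamma_i^2 \geq \gamma_i$ for $\gamma_i \in \mathbb{N}$, which together control the two $\gamma$-dependent sums from below by $\Delta^{\min}_{\operatorname{linear}}$ and $\Delta^{\min}_{\operatorname{quadr}}$ respectively.

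The main obstacle, I expect, is the algebraic bookkeeping in the per-machine simplification so that the cross terms $\gamma_i p^i_\gamma L_i^{\alpha\beta}$ and $(L_i^{\alpha\beta})^2$ collapse into the clean dependence on $L_i^2$, $\gamma_i$, and $\gamma_i^2$ claimed above; once this closed form is established, both items reduce to short direct estimates, and the separation into a ``linear-in-$\gamma_i$'' part (matching $\Delta^{\cdot}_{\operatorname{linear}}$) and a ``quadratic-in-$\gamma_i$'' part (matching $\Delta^{\cdot}_{\operatorname{quadr}}$) is what makes the $\Delta^{1:1}$ versus $\Delta^{\min}$ comparison in the subsequent hardness argument for $\sum w_j C_j$ tractable.
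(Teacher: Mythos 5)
Your closed-form identity is correct, and your route to it is a slightly more elementary version of what the paper does: where the paper plugs the instance into the Knop--Kouteck\'y expression $\frac{1}{2}\sum_t [(z_t^i)^2(\rho_t^i-\rho_{t+1}^i)+p_t^iw_tx_t^i]$ and observes that only the last $\alpha/\beta$ term and the $\gamma$ term survive in the quadratic part, you rebuild the same decomposition from the order-independent identity for $w_j=p_j$ blocks plus the arithmetic progression of the $\gamma$ block. Both give, per machine, $\frac12 L_i^2+\frac12\sum_{j\in S_i^{\alpha\beta}}w_j^2+\frac12 w_\gamma p_\gamma^i\gamma_i+\frac12 p_\gamma^i a_i\gamma_i^2$, and Item~2 then follows by setting $\gamma_i=1$ exactly as you say. (Minor point: the statement's $\Delta^{\min}$ definitions are missing the factor $\frac12$ that both your derivation and the paper's proof carry; treat that as a typo in the statement.)

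The genuine problem is the quadratic half of Item~1. You claim $\frac12\sum_i p_\gamma^i a_i\gamma_i^2\ge\frac12 m(w_\gamma-a_{\max})a_{\max}$ from $\gamma_i^2\ge\gamma_i$, $\sum_i\gamma_i=m$, and $a_i\le a_{\max}$ --- but $a_i\le a_{\max}$ is an \emph{upper} bound on $a_i$ and cannot lower-bound a sum whose terms are increasing in $a_i$. Indeed $p_\gamma^i a_i=(w_\gamma-a_i)a_i$ is increasing in $a_i$ on $[0,w_\gamma/2]$, so over the machines it is \emph{maximized}, not minimized, at $a_i=a_{\max}$; with one $\gamma$ job per machine and all but one $a_i$ small, $\frac12\sum_i p_\gamma^i a_i\gamma_i^2$ falls well below $\frac12 m(w_\gamma-a_{\max})a_{\max}$. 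You should be aware that the paper's own proof makes the same slip (it asserts the quadratic contribution is minimized by placing $\gamma$ jobs on the $a_{\max}$ machine). The honest bound your ingredients give is $\frac12\sum_i p_\gamma^i a_i\gamma_i^2\ge\frac12 m(w_\gamma-a_{\max})a_{\min}\ge 0$, i.e., one should replace $\Delta^{\min}_{\operatorname{quadr}}$ by $0$ (or by the $a_{\min}$ expression). This weakening is harmless downstream: in the Case~2b argument of Lemma~\ref{lem:reductionBBPtoRsumwjCj} the only thing needed is that $\Delta^{1:1}_{\operatorname{quadr}}-\Delta^{\min}_{\operatorname{quadr}}\le\frac12 m w_\gamma a_{\max}=\Oh(mkA^4)$, which still holds with $\Delta^{\min}_{\operatorname{quadr}}=0$ and is dominated by the $\Omega(k^2A^6)$ gap in $\LL$. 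So: fix the quadratic lower bound rather than trying to reach $a_{\max}$, and the rest of your argument stands.
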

\ifproc\else
\begin{proof}[Proof of Lemma~\ref{lem:gapsumwjCj}]
	First notice that the Smith ratio of all $\alpha$- and $\beta$-type jobs is $1$, and the Smith ratio of the jobs of type~$\gamma$ is strictly greater than $1$, so the jobs of type~$\gamma$ will always be executed first.
	We use the following description of the objective function due to Knop and Koutecký~\cite{KnopK:2017}.
	Assume that $\tau$ job types are ordered according to their Smith ratios with respect to some machine~$M_i$ (with $i \in [m]$) as $t=1, \dots, \tau$, $x_t^i$ is the number of jobs of type $t$ scheduled on machine~$M_i$, and $z_t^i = \sum_{\ell=1}^t p_\ell^i x_\ell^i$ is the time spent processing the first $t$ job types.
	Define $\rho_{\tau+1}^i = 0$.
	Then the contribution of machine $M_i$ to the total $\sum w_j C_j$ objective is
	\begin{equation*}
	\frac{1}{2}\sum_{t=1}^\tau \left[ \left(z_t^i)^2(\rho_t^i - \rho_{t+1}^i\right) + p_t^i w_t x_t^i\right] \enspace .
	\end{equation*}
	In our case, the coefficients of $(z_t^i)^2$ for any $\alpha$- and $\beta$-type except the last one will be $0$ because their slopes are identical, hence $\rho_t^i - \rho_{t+1}^i = 0$.
	The term of the last $\alpha$- or $\beta$-type will have $z_t^i = L_i$ be the load of machine $M_i$ and its coefficient is $\rho^i_\tau - \rho^i_{\tau+1} = 1 - 0$, so this term is $\frac{1}{2} L_i^2$.
	Hence, subtracting those terms over all machines gives $\LL$, and we are left to account for%
	\begin{enumerate*} \item the quadratic terms corresponding to the jobs of type~$\gamma$, and \item the linear terms $p_t^i w_t x_t^i$.\end{enumerate*}
	
	First, we consider the linear terms for the $\alpha$- and $\beta$-type jobs.
	Since the sizes of these jobs are independent of the machines, we just sum them up without knowing to which machine they are scheduled.
	For each $j \in [k]$, we have $A/k$ jobs of type~$\alpha^1_j$, $(m-1)A/k$ jobs of type~$\alpha^0_j$ jobs and $m/k$ job of type~$\beta$.
	Hence, across all $j \in [k]$ this is
	\begin{align*}
	& \frac{1}{2} \sum_{j=1}^k \left( \frac{A}{k} \cdot p_{\alpha^0_j} w_{\alpha^0_j} + \frac{(m-1) A}{k} \cdot p_{\alpha^1_j} w_{\alpha^1_j} + \frac{m}{k}\cdot p_{\beta_j} w_{\beta_j} \right) \\
	=& \frac{1}{2k} \sum_{j=1}^k \left(A w_{\alpha^0_j}^2 + (m-1)A w_{\alpha^1_j}^2 + m w_{\beta_j}^2 \right) \\
	=& \Gamma \, .
	\end{align*}
	
	Now, we consider the jobs of type~$\gamma$.
	Let us first assume that we have scheduled exactly one job of type~$\gamma$ per machine.
	This means that across all machines, every possible quadratic and linear term appears precisely once.
	So for the linear terms, we get
	\begin{equation*}
	\frac{1}{2} \sum_{i=1}^m p^i_{\gamma} w_{\gamma} = \Delta^{1:1}_{\operatorname{linear}} \enspace .
	\end{equation*}
	For the quadratic terms, we get
	\begin{equation*}
	\frac{1}{2} \sum_{i=1}^m (p^i_{\gamma})^2 \left(\frac{w_{\gamma}}{p^i_{\gamma}} - 1 \right)
	= \frac{1}{2} \sum_{i=1}^m p^i_{\gamma} \left(w_{\gamma} - p^i_{\gamma}\right)
	= \frac{1}{2} \sum_{i=1}^m p^i_{\gamma} \cdot a_i
	= \Delta^{1:1}_{\operatorname{quadr}} \enspace .
	\end{equation*}
	
	Let us now drop the assumption that we have scheduled exactly one job of type~$\gamma$ per machine and determine a lower bound for the objective value of an arbitrary schedule.
	Still, $\LL$ and $\Gamma$ have the structure described above. 
	Thus, we specify a lower bound by minimizing the linear and the quadratic terms for the jobs of type~$\gamma$.
	Clearly, they are minimum if we schedule each of the $m$ jobs to the machine where it has the smallest size---this is machine~$M_{i}$ corresponding to item~$a_{\max}$.
	Consequently, we have (since $p^i_{\gamma} = w_{\gamma} - a_{\max}$)
	\begin{equation*}
	\frac{1}{2} m (w_{\gamma} - a_{\max}) w_{\gamma}
	= \Delta^{\min}_{\operatorname{linear}}
	\end{equation*}
	and
	\begin{equation*}
	\frac{1}{2} m (w_{\gamma} - a_{\max}) a_{\max}
	= \Delta^{\min}_{\operatorname{quadr}} \enspace .
	\end{equation*}
\end{proof}
\fi

With this lemma at hand, it is not difficult to show that the weighted hardness instance indeed reduces \balancedbinpacking to $R|HM|\sum w_j C_j$ as before:
\ifproc
\begin{lemma}[\appmark]
\else
\begin{lemma}
\fi
	\label{lem:reductionBBPtoRsumwjCj}
	The weighted hardness instance~$I^w_R$ is hard for $R|HM|\sum w_j C_j$.
\end{lemma}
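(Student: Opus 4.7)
\textbf{Proof plan for Lemma~\ref{lem:reductionBBPtoRsumwjCj}.}
The plan is to mirror the three-case structure used in the proof of Lemma~\ref{lem:reductionBBPtoRl2}, replacing the $\ell_2^2$ target by the $\sum w_j C_j$ target
$T^* := \tfrac{1}{2}\, m T_R^2 + \Gamma + \Delta^{1:1}$
supplied by Lemma~\ref{lem:gapsumwjCj}. For the forward direction, given a \balancedbinpacking solution, build the perfect schedule from Lemma~\ref{lem:reductionBBPtoRCmax}. Every machine has exactly one $\gamma$ job and unscaled load $L_i = T_R$, so $\LL = \tfrac{1}{2} m T_R^2$; part~(2) of Lemma~\ref{lem:gapsumwjCj} then yields $\sum w_j C_j = T^*$ exactly.

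For the converse, suppose $\sigma$ achieves $\sum w_j C_j \le T^*$. First observe that $p^i_\gamma = 4 k A^3 - a_i > T_R / 2$, so no machine can carry two $\gamma$-type jobs and still have load at most $T_R$. Now split into three cases mirroring the $\ell_2^2$ analysis. In \emph{Case~1} (every machine has load $\le T_R$), each machine carries exactly one $\gamma$, and since the total processing time equals $m T_R$, all loads must in fact equal $T_R$. This is precisely a feasible $R|HM|C_{\max}$ schedule of makespan $T_R$, which by Lemma~\ref{lem:reductionBBPtoRCmax} yields a \balancedbinpacking solution. In \emph{Case~2a} (one $\gamma$ per machine but some load exceeds $T_R$), Lemma~\ref{lem:gapsumwjCj}(2) applies, so the objective equals $\LL + \Gamma + \Delta^{1:1}$; since the total load $\sum_i L_i$ is fixed under the one-$\gamma$-per-machine restriction, any imbalance yields $\LL > \tfrac{1}{2} m T_R^2$ by convexity, contradicting $\sum w_j C_j \le T^*$.

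The remaining \emph{Case~2b}, in which some machine carries $\ge 2$ $\gamma$ jobs, is the main obstacle and requires combining the $\ell_2^2$ gap of Claim~\ref{claim:gapCase2b} with a gap argument for the $\Delta$-terms. The claim gives $\sum_i L_i^2 > m T_R^2 + 0.98\, k^2 A^6$, i.e.\ $\LL > \tfrac{1}{2} m T_R^2 + 0.49\, k^2 A^6$. By Lemma~\ref{lem:gapsumwjCj}(1) any schedule satisfies $\sum w_j C_j \ge \LL + \Gamma + \Delta^{\min}_{\operatorname{linear}} + \Delta^{\min}_{\operatorname{quadr}}$, so it suffices to show
\[
0.49\, k^2 A^6 \;+\; \bigl(\Delta^{\min}_{\operatorname{linear}} + \Delta^{\min}_{\operatorname{quadr}}\bigr) \;-\; \Delta^{1:1} \;>\; 0.
\]
A direct calculation from the definitions gives $\bigl|\Delta^{1:1} - (\Delta^{\min}_{\operatorname{linear}} + \Delta^{\min}_{\operatorname{quadr}})\bigr| = \Oh(m\, w_\gamma\, a_{\max}) = \Oh(m k A^4)$, whereas the $\ell_2^2$ slack is $\Omega(k^2 A^6)$. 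Hence, by scaling the \balancedbinpacking items beforehand (exactly as in the end of the proof of Claim~\ref{claim:gapCase2b}, which can be done without loss of generality since it preserves the reduction), the quadratic slack dominates the linear deficit and Case~2b is ruled out. The hardest part of carrying out this plan is therefore not the case analysis itself but bookkeeping the $\Delta$-terms carefully enough to confirm that their possible decrease from placing multiple $\gamma$ jobs together is of strictly lower order in $A$ than the forced increase in $\LL$.
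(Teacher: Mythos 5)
Your proposal is correct and follows essentially the same route as the paper's proof: the same target value $\tfrac{1}{2}mT_R^2+\Gamma+\Delta^{1:1}$, the same three-case split (load $\le T_R$ everywhere; one $\gamma$ per machine but unbalanced; some machine with two $\gamma$'s), the same appeal to Lemma~\ref{lem:gapsumwjCj} and Claim~\ref{claim:gapCase2b}, and the same final comparison showing the $\Delta$-term deficit is $\Oh(mkA^4)$ while the $\ell_2^2$ slack is $\Omega(k^2A^6)$, settled by scaling the items so that $A \gg m$. The only differences are cosmetic (your explicit remark that $p^i_\gamma > T_R/2$ forces one $\gamma$ per machine in Case~1, and your writing of the quadratic term as $\tfrac12 mT_R^2$ rather than the paper's $\tfrac12 kmT_R^2$, which appears to be a typo in the paper).
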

\ifproc\else
\begin{proof}[Proof of Lemma~\ref{lem:reductionBBPtoRsumwjCj}]
Set the target objective value to be $\frac{1}{2}kmT_R^2 + \Gamma + \Delta^{1:1}$.
Note that a perfect schedule satisfies the condition that every machine executes exactly one job of type $\gamma$ and the load of every machine is at most $T_R$, hence by Lemma~\ref{lem:gapsumwjCj}, the value of a perfect schedule is precisely the target value.
Assume a schedule $\sigma$ is given whose $\sum w_j C_j$ objective is at most the target objective.
We again distinguish three cases:

\noindent\textbf{Case 1:} \emph{The load of each machine is at most $T_R$.}
This is again a schedule of makespan at most $T_R$ and the analysis of Lemma~\ref{lem:reductionBBPtoRCmax} applies.
Hence, $\sigma$ is a perfect schedule.

\noindent\textbf{Case 2a:} \emph{Each machine contains exactly one $\gamma$-type job and there is a machine with load more than $T_R$.}
By Lemma~\ref{lem:gapsumwjCj}, we know that such a schedule has an objective value of~$\LL + \Gamma + \Delta^{1:1}$.
In this sum, $\Gamma$ and $\Delta^{1:1}$ are constant and independent of the loads of the machines.
We use the same argument as in Lemma~\ref{lem:reductionBBPtoRl2}.
As the objective value of~$\frac{1}{2}kmT_R^2 + \Gamma + \Delta^{1:1}$ is matched precisely if the total load is distributed evenly (i.e. Case~1),
re-distributing the same total load unevenly increases the quadratic term $\LL$.
Hence, this case cannot occur.

\noindent\textbf{Case 2b:} \emph{There is a machine which schedules at least $2$ jobs of type $\gamma$.}
By Lemma~\ref{lem:gapsumwjCj}\ref{lem:gapsumwjCj-minimumValue}, the objective value of~$\sigma$ is at least $\LL + \Gamma + \Delta^{\min}_{\operatorname{linear}} + \Delta^{\min}_{\operatorname{quadr}}$.
Let's compare this to the target value $\frac{1}{2}kmT_R^2 + \Gamma + \Delta^{1:1}_{\operatorname{linear}} + \Delta^{1:1}_{\operatorname{quadr}}$ summand by summand.
As shown in the proof of Lemma~\ref{lem:reductionBBPtoRl2}, in Case~2b we have $(\sum_{i=1}^{m} L_i^2) - mT_R^2 \geq (r-1) mT_R^2$.
However, we now have $\LL = \frac{1}{2} \sum_{i=1}^m {L_i}^2$.
\begin{equation*}
\textrm{\hspace{-16pt} Plugging in, we get} \enspace
\LL - \frac{1}{2} mT_R^2 \geq \frac{1}{2} (r-1) mT_R^2
> 0.49 k^2 A^6 \, .
\end{equation*}
Of course, $\Gamma$ is the same in both sums.
Consider each of the $km$ summands of $\Delta^{1:1}_{\operatorname{linear}}$.
Compared to its counterpart in $\Delta^{\min}_{\operatorname{linear}}$, it is greater by at most $w_{\gamma} a_{\max} < 4 k A^4$.
Similarly, consider each of the $km$ summands of $\Delta^{1:1}_{\operatorname{quadr}}$.
Compared to its counterpart in $\Delta^{\min}_{\operatorname{quadr}}$, it is greater by at most $a_{\max}^2 < A^2$.
Combining all $m$ summands of both of these sums, we have at most $5 m k A^4$.
This in turn is at most $0.25 k A^5$ because without loss of generality, we can assume that $A > 20 m$ as otherwise we could scale up the items of the \balancedbinpacking instance by a factor of~$20$.

So in total, the value of $\sigma$ is greater by at least $0.49 k^2 A^6$ minus at most $0.25 k A^5$, and thus cannot attain the target objective value, so this case also does not occur.
\end{proof}
\fi

\begin{corollary}
	\label{cor:RweightedSumNPhard}
	$R|HM|\sum w_j C_j$ is \NPh already with 7 job types and $\rank(\vep)=2$.
\end{corollary}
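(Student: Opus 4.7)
The plan is to obtain Corollary~\ref{cor:RweightedSumNPhard} by composing the weighted-instance reduction of Lemma~\ref{lem:reductionBBPtoRsumwjCj} with a known \NPh{} base problem, being careful to track both the number of job types and the rank of the processing time matrix.

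First I would invoke Lemma~\ref{lem:reductionBBPtoRsumwjCj}, which shows that the weighted instance $I^w_R$ is a valid hardness reduction from tight \balancedbinpacking to $R|HM|\sum w_j C_j$. Since \balancedbinpacking is \NPh{} already with $k=2$ bins (by Lemma~\ref{lem:reductionBPtoBBP} applied to the \NPh{} problem \textsc{Partition}, i.e., \binpacking with two bins), it suffices to inspect the parameters of the instance $I^w_R$ produced on inputs with $k=2$.

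Next I would count job types and verify the rank. The weighted instance $I^w_R$ is obtained from the unweighted instance $I_R$ of Lemma~\ref{lem:reductionBBPtoRCmax} only by assigning weights $w_{\alpha^{\times}_j}$, $w_{\beta_j}$, $w_\gamma$; the set of job types and their processing times $\vep$ are unchanged. By Lemma~\ref{lem:reductionBBPtoRCmax}, the number of job types of $I_R$ is $3k+1$, which equals $7$ when $k=2$, and $\vep$ has rank~$2$. Both properties carry over verbatim to $I^w_R$.

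Finally, I would note that the target objective value $\tfrac{1}{2}kmT_R^2 + \Gamma + \Delta^{1:1}$ and all weights are polynomially bounded in $A$ (by inspection of the definitions of $w_{\alpha^{\times}_j}, w_{\beta_j}, w_\gamma$ and the quantities $\Gamma$, $\Delta^{1:1}$ in Lemma~\ref{lem:gapsumwjCj}), so the whole reduction is polynomial. Combining the equivalence between feasibility of the \balancedbinpacking instance with $k=2$ and the existence of a schedule of value at most $\tfrac{1}{2}kmT_R^2 + \Gamma + \Delta^{1:1}$ in $I^w_R$, established in Lemma~\ref{lem:reductionBBPtoRsumwjCj}, yields \NPh{} of $R|HM|\sum w_j C_j$ with $7$ job types and $\rank(\vep) = 2$. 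There is no real obstacle here; the work has been done in the preceding lemmas, and the only thing to check is that no step in going from $I_R$ to $I^w_R$ alters the job-type count or the rank of $\vep$.
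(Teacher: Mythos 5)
Your proposal is correct and matches the paper's (implicit) argument: the corollary follows directly from Lemma~\ref{lem:reductionBBPtoRsumwjCj} applied to tight \balancedbinpacking with $k=2$ bins, noting that the weighted instance $I^w_R$ inherits the $3k+1=7$ job types and $\rank(\vep)=2$ from $I_R$ of Lemma~\ref{lem:reductionBBPtoRCmax} since only weights are added. Your additional check that the target value and weights are polynomially bounded is a sensible bit of diligence the paper leaves implicit.
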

\begin{corollary}
	\label{cor:RweightedSumW1hard}
	$R||\sum w_j C_j$ is \Wh{1} parameterized by the number of job types, even if $\ven$ and $\vep$ are given in unary and $\rank(\vep)=2$.
\end{corollary}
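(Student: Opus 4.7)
The plan is to combine the \Whness{1} of \unarybalancedbinpacking (Corollary~\ref{cor:unarybalancedbinpacking-is-W1-hard}) with the weighted reduction of Lemma~\ref{lem:reductionBBPtoRsumwjCj}, observing that the latter produces only polynomially bounded numbers. This is the exact same pattern used to derive Theorem~\ref{thm:xcmaxwh1} from Lemma~\ref{lem:reductionBBPtoRCmax}, adapted to the weighted setting.

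First, I would start from a tight instance $J$ of \unarybalancedbinpacking with $k$ bins of capacity $B$ and items $a_1,\dots,a_m$, which is \Wh{1} parameterized by $k$. Since $J$ is given in unary, the total size $A = kB = \sum_{i} a_i$ is polynomially bounded in $|J|$.

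Second, I would apply the reduction of Lemma~\ref{lem:reductionBBPtoRsumwjCj} (which builds on Lemma~\ref{lem:reductionBBPtoRCmax}) to produce a \textsc{Yes}-equivalent instance $I^w_R$ of $R|HM|\sum w_j C_j$ with $3k+1$ job types, $\rank(\vep)=2$, and all entries of $\vep$, $\ven$, and of the weight vector bounded by $\Oh(A^4)$ by property~\ref{lem:reductionBBPtoRCmax-item:numbersPolynomial} of Lemma~\ref{lem:reductionBBPtoRCmax} (note that $w_{\alpha^\times_j}=p_{\alpha^\times_j}$, $w_{\beta_j}=p_{\beta_j}$, and $w_\gamma=4kA^3$ are all polynomial in $A$).

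Third, since $\ven$ is polynomially bounded, I would expand the high-multiplicity encoding into an explicit list of jobs by writing each type $t$ out $n_t$ times. This blows up the instance size by only a polynomial factor in $|J|$, and the result is an instance of $R||\sum w_j C_j$ (rather than $R|HM|\sum w_j C_j$) with $\vep$ and $\ven$ written in unary and $\rank(\vep)=2$. The number of job types $3k+1$ is a function only of the source parameter $k$, so this is a parameterized reduction, and \Whness{1} transfers from \unarybalancedbinpacking.

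The construction is entirely routine once Lemmas~\ref{lem:reductionBBPtoRCmax} and~\ref{lem:reductionBBPtoRsumwjCj} are in hand; the only thing to check is the polynomial-size bookkeeping, which follows immediately from the $\Oh(A^4)$ bound and the unary encoding of the source instance. Thus there is no real obstacle beyond invoking the already-established machinery.
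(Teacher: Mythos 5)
Your proposal is correct and follows exactly the route the paper intends: compose the \Whness{1} of \unarybalancedbinpacking (Corollary~\ref{cor:unarybalancedbinpacking-is-W1-hard}) with the weighted reduction of Lemma~\ref{lem:reductionBBPtoRsumwjCj}, note that all sizes, multiplicities, and weights are $\Oh(A^4)$ and hence polynomial under unary encoding, and observe that $3k+1$ job types is a function of the source parameter. The explicit expansion of the high-multiplicity encoding and the rank-$2$ bookkeeping are handled just as the paper does for Theorem~\ref{thm:xcmaxwh1}.
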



\section{Open Problems}
We conclude with a few interesting questions raised by our results:
\begin{itemize}
\item We have shown that $Q|HM|C_{\max}$ and $R|HM|C_{\max}$ are \NPh with 6 and 4 job types, respectively. What is the complexity for smaller numbers of job types?
We are not aware of any positive result about either problem, including \textsc{Cutting Stock}, even for $2$ job/item types. 
\item Recall the question whether $P|HM|C_{\max}$ parameterized by the number of job types $k$ is in \FPT or not.
Our results provide some guidance for how one could use the interplay of high multiplicity of jobs and large job sizes to show hardness.
\item Is \textsc{Cutting Stock} \Wh{1} when the input data is given in unary?
\item We haven't yet investigated jobs with release times and due dates and minimization of makespan, weighted flow time, or weighted tardiness, already on one machine.
The work of Knop et al.~\cite{KnopKLMO:2019} shows that for example $1|r_j,d_j|\{C_{\max},\sum w_j F_j, \sum w_j T_j\}$ parameterized by the number of job types $k$ is in \XP when $p_{\max}$ is polynomially bounded.
Is it \FPT or \Wh{1}?
\end{itemize}

\bibliography{scheduling}

\end{document}